\newtheorem{notation}[theorem]{\bfseries Notation}
\newcommand\peq{\stackrel{\mathclap{\normalfont\mbox{\tiny p}}}{=}}
\newcommand\npeq{\not\peq}
\newcommand\lexsucc{\stackrel{\mathclap{\normalfont\mbox{\tiny $\mathtt{L}$}}}{\succ}}
\newcommand\restr[2]{\ensuremath{\left.#1\right|_{#2}}}
\DeclareMathOperator{\spn}{span}
\DeclareMathOperator{\diag}{diag}
\DeclareMathOperator{\sign}{sgn}
\DeclareMathOperator{\diam}{diam}
\DeclareMathOperator{\dist}{dist}
\newcommand{\mc}[2]{\multicolumn{#1}{c}{#2}}
\definecolor{Gray}{gray}{0.85}
\definecolor{LightCyan}{rgb}{0.88,1,1}
\newcolumntype{a}{>{\columncolor{Gray}}c}
\newcolumntype{b}{>{\columncolor{white}}c}
\newlength{\Oldarrayrulewidth}
\def\R{\mathbb R}
\def\Z{\mathbb Z}
\def\N{\mathbb N}
\def\1{\mathbb 1}
\begin{document}
\title{A graph isomorphism invariant based on neighborhood aggregation}
%\subtitle{Do you have a subtitle?\\ If so, write it here}
%\titlerunning{Short form of title}        % if too long for running head
%\authorrunning{Short form of author list} % if too long for running head
\author{Alp\'ar J\"uttner and P\'eter Madarasi}

\institute{
  A. J\"uttner \at Department of Operations Research, ELTE E\"otv\"os Lor\'and University, P\'azm\'any P\'eter s\'et\'any 1/C, 1117 Budapest, Hungary. \email{alpar.juttner@ttk.elte.hu}
\and
P. Madarasi \at Department of Operations Research, ELTE E\"otv\"os Lor\'and University, and the ELKH-ELTE Egerv\'ary Research Group on Combinatorial Optimization, E\"otv\"os Lor\'and Research Network (ELKH), P\'azm\'any P\'eter s\'et\'any 1/C, 1117 Budapest, Hungary. \email{madarasip@staff.elte.hu}
}

% \date{Received: date / Accepted: date}
\date{}

\maketitle

\begin{abstract}
  This paper presents a new graph isomorphism invariant, called \emph{$\mathfrak{w}$-labeling}, that can be used to design a polynomial-time algorithm for solving the graph isomorphism problem for various graph classes.
  For example, all non-cospec\-tral graph pairs are distinguished by the proposed combinatorial method, furthermore, even non-isomorphic cospectral graphs can be distinguished assuming certain properties of their eigenspaces.

  We also investigate a refinement of the aforementioned labeling, called \emph{$\mathfrak{s}^k$-labeling}, which has both theoretical and practical applications.
  Among others, it can be used to generate graph fingerprints, which uniquely identify all graphs in the considered databases, including all strongly regular graphs on at most 64 nodes and all graphs on at most 12 nodes.
  It provably identifies all trees and 3-connected planar graphs up to isomorphism, which --- as a byproduct --- gives a new isomorphism algorithm for both graph classes.
  The practical importance of this fingerprint lies in significantly speeding up searching in graph databases, which is a commonly required task in biological and chemical applications.
\end{abstract}

\keywords{Graph isomorphism, Graph fingerprint, Graph hash, Graph databases, Strongly regular graphs, Isomorphism invariant, Planar graph}

%%%%%%%%%%%%%%%%%%%%%%%%%%%%%%%%%%%%%%%%%%%%%%%%%%%%%%%%
% Text
%%%%%%%%%%%%%%%%%%%%%%%%%%%%%%%%%%%%%%%%%%%%%%%%%%%%%%%%
\section{Introduction}

The \emph{graph isomorphism problem} is one of the few natural problems in NP that are neither known to be in P nor NP-complete.
At the same time, polynomial-time graph isomorphism algorithms have been developed for various graph classes, like trees and planar graphs~\cite{PlanarGraphIso}, bounded valence graphs~\cite{BondedDegGraphIso}, interval graphs~\cite{IntervalGraphIso} or permutation graphs~\cite{PermGraphIso}.
Furthermore, an FPT algorithm has recently been presented for the \emph{colored hypergraph isomorphism problem}~\cite{ColoredHiperGraphIso}.
The graph isomorphism problem has been extensively studied from a practical point of view, and it can be solved efficiently in most situations.
The best practical graph isomorphism algorithms include Nauty~\cite{Nauty}, VF2~\cite{VF2} and its variants~\cite{VF2pp}.

In many applications, not only two graphs are to be compared, but an isomorphic copy of a given graph $G$ is to be found in a large graph database.
Instead of solving the graph isomorphism problem between $G$ and each graph in the database, one might generate so-called graph fingerprints which satisfy the following property: if the fingerprints of two graphs are different, then the graphs are not isomorphic.
After computing the fingerprints for all graphs in the database as a preprocessing step and also for the query graph $G$, one can filter the graphs in the database which has the same fingerprint as $G$.
Then, we can check whether any of these filtered graphs are isomorphic to $G$.

\paragraph{\normalfont\textbf{Previous work}}
Graph fingerprints are widely used, and multiple schemes have been proposed to generate them.
For example, graph fingerprints were generated by considering the (node labels of) short paths in~\cite{Shasha}.
The spectrum of (the adjacency matrix of) a graph is another isomorphism invariant, which has been studied from a theoretical point of view~\cite{Vandam,Wilson}, and they were also combined with so-called heat-kernels~\cite{Raviv}, which made them also practically applicable.
The number of graphs with unique spectrum was numerically examined up to 12 nodes in~\cite{BrouwerSpence}, and around 80\% of the graphs were found to be determined by their spectrum.

Recently, various algorithms have been developed based on discrete-time quantum walks (DTQW) or continuous-time quantum walks (CTQW), aiming at distinguishing non-isomorphic graph pairs.
It is well known that neither standard single-particle DTQW nor CTQW can distinguish strongly regular graphs (SRG) with the same parameters, furthermore, a constant-particle CTQW without interaction can distinguish no SRG pairs of the same parameters, see~\cite{BrendanDouglas} and~\cite{RudingerKenneth}.
However, the distinguishing power of a variant of single-particle DTQW presented in~\cite{BrendanDouglas} turned out to be larger than that of a standard DTQW.
Namely, it generates different signatures for certain non-isomorphic SRG pairs with the same parameters, but there are still SRG pairs that it fails to distinguish.
In~\cite{PhaseModifiedCTQW}, CTQW were shown to be less powerful than DTQW as far as the graph isomorphism is concerned.
On the other hand, a state-of-the-art quantum walk method using interacting bosons turned out to distinguish all SRG's on at most 64 nodes~\cite{GambleFriesen}.
This compares to the easy-to-compute fingerprint introduced in Section~\ref{sec:gi:StructureOfWalks}, which distinguishes all the mentioned SRG's and, in addition, it also provides a compact fingerprint of the graphs.

Color refinement is a classical method used to show that two graphs are not isomorphic.
Although color refinement does not succeed on all graphs pairs, the method distinguishes a random graph from any non-isomorphic graphs with high probability~\cite{babaiErdos1980random}.
One possible extension of color refinement is the celebrated Weisfeiler-Leman method~\cite{weisfeiler1968reduction}, also called multidimensional color refinement.
In this paper, we consider another procedure, which can be seen as a different way to generalize the classical color refinement algorithm.

\paragraph{\normalfont\textbf{Our results}}
The present paper introduces the concept of $\mathfrak{w}$-labeling, which can be used to solve the graph isomorphism problem in polynomial time under certain conditions --- which hold for a wide range of the graph pairs.
All non-cospectral graph pairs are proved to be distinguished by the proposed combinatorial method --- without computing the graph spectra.
Furthermore, even if the graphs are cospectral, various conditions are shown which ensure that the graphs are distinguished.

A refinement of the aforementioned labeling called $\mathfrak{s}^k$-labeling is also introduced.
Its applications include a fingerprint generation method, which uniquely identifies all the graphs in the considered graph databases --- including all known strongly regular graphs and all graphs on at most 12 nodes.
Therefore, it is competitive with the state-of-the-art quantum walk algorithms.
In addition, it compresses all information about the graph to a short isomorphism-invariant fingerprint.
We also construct two non-isomorphic graphs which it cannot distinguish.

\paragraph{\normalfont\textbf{Notation}}
As usual, sets are described in curly brackets, and multisets are described in curly brackets followed by a superscript hash character.
For example, $\{1,2\}$ denotes the set consisting of the numbers $1,2$, and $\{1,1,2\}^\#$ denotes the multiset consisting of numbers $1,1$ and $2$.
Let $\N$ and $\Z_+$ denote the set of positive and non-negative integers, respectively.

Throughout this paper, $G=(V,E),\ G_1=(V_1,E_1)$ and $G_2=(V_2,E_2)$ denote three arbitrary loop-free, undirected graphs with at least two nodes, where $V,V_1,V_2$ are the node sets and $E,E_1,E_2$ are the edge sets, respectively.
For the sake of simplicity, all three node sets are assumed to be $\{1,\dots,n\}$, that is $V=V_1=V_2=\{1,\dots,n\}$.
The adjacency matrices of these graphs are $A,A_1,A_2\in\{0,1\}^{n\times n}$, respectively.
Let $N_G(i)$ denote the set of the neighbors of node $i$ in graph $G$.

Unless stated otherwise, the presented results apply to graphs having loops, as well.
Note that in the case of node-labeled graphs, the labels can be modeled by adding loops, and clearly, even if the graph has both loops and node labels, there is a compact way to encode them using loops only.

Let $\lambda_1\geq \lambda_2 \geq \dots \geq\lambda_n$ and $\mu_1\geq \mu_2 \geq \dots \geq\mu_n$ denote the eigenvalues of $A_1$ and $A_2$, respectively.
$G_1$ and $G_2$ are \emph{cospectral} if $\lambda_i = \mu_i$ for all $i$.
Let $U,V\in \R^{n\times n}$ be orthogonal matrices (that is, $U^TU=I$ and $V^TV=I$) such that $A_1U=U\diag(\lambda_1, \lambda_2 , \dots ,\lambda_n)$ and $A_2V=V\diag(\mu_1, \mu_2 , \dots ,\mu_n)$.
$U$ and $V$ are called the \emph{eigenmatrices} of $G_1$ and $G_2$, respectively.
Let $u_1,u_2,\dots,u_n$ and $v_1,v_2,\dots,v_n$ denote the column vectors of $U$ and $V$, respectively.
Note that $V$ denotes both the eigenmatrix of $G_2$ and the node set of $G$, but this will not cause ambiguity.
Let $u_{ij}$ denote the $j^{\text{th}}$ entry of the eigenvector $u_i$, in other words, it is the entry of $U$ in the $j^{\text{th}}$ row and $i^{\text{th}}$ column, where $i,j\in \{1,\dots,n\}$.
The first $k$ columns of a matrix $Q$ are denoted by $\restr{Q}{k}$.
Finally, let $\delta_{ij}=1$ if $i=j$, and $0$ otherwise.

\section{Labeling by Counting the Walks}\label{sec:gi:CountingWalks}
For a graph $G=(V,E)$, let $\mathfrak{w}_G : V\to \Z_+^{V\times \Z_+}$ be such that $\mathfrak{w}_G(i)_{jl}$ denotes the number of walks of length $l$ between node $i$ and node $j$ for $l\geq 0$.
The function $\mathfrak{w}_G$ will be referred to as \emph{(infinite) $\mathfrak{w}$-labeling}.
Two matrices $Q_1$ and $Q_2$ are said to be \emph{permutation-equal} if there exists a permutation matrix $P$ for which $PQ_1=Q_2$. This equivalence relation is denoted by $Q_1\peq Q_2$.

The following claim easily follows by the definition of $\mathfrak{w}_G$.
\begin{claim}
  If $\mathfrak{w}(u)\npeq\mathfrak{w}(v)$ for two nodes $u\in V_1$ and $v\in V_2$, then there is no isomorphism between $G_1$ and $G_2$ that maps node $u$ to node $v$.
\end{claim}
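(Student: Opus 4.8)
The plan is to prove the contrapositive: assuming that an isomorphism $\varphi\colon G_1\to G_2$ with $\varphi(u)=v$ exists, I would show $\mathfrak{w}(u)\peq\mathfrak{w}(v)$. First I would record the standard fact that the number of walks of length $l$ between nodes $i$ and $j$ in a graph $G$ equals $(A^l)_{ij}$, so that $\mathfrak{w}_G(i)_{jl}=(A^l)_{ij}$; this is the only place where the combinatorial definition of $\mathfrak{w}_G$ needs to be unpacked, and it follows by an easy induction on $l$, the $l=0$ case being the identity matrix and the step using that a walk of length $l{+}1$ decomposes as a walk of length $l$ followed by one edge.

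Next I would translate the isomorphism into linear algebra. Let $P=P_\varphi$ be the permutation matrix of $\varphi$, so that $A_2=PA_1P^{T}$. Taking $l$-th powers gives $A_2^{\,l}=PA_1^{\,l}P^{T}$ for every $l\geq 0$, hence $(A_2^{\,l})_{\varphi(i)\varphi(j)}=(A_1^{\,l})_{ij}$ for all $i,j\in\{1,\dots,n\}$ and all $l$. Specialising $i=u$ and using $\varphi(u)=v$ yields $\mathfrak{w}_{G_2}(v)_{\varphi(j),\,l}=\mathfrak{w}_{G_1}(u)_{j,\,l}$ for all $j$ and $l$; in words, applying the row permutation $\varphi$ to the matrix $\mathfrak{w}_{G_1}(u)$ (whose rows are indexed by $V_1$ and whose columns are indexed by $\Z_+$) produces exactly $\mathfrak{w}_{G_2}(v)$. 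Therefore $P\,\mathfrak{w}_{G_1}(u)=\mathfrak{w}_{G_2}(v)$, i.e. $\mathfrak{w}(u)\peq\mathfrak{w}(v)$, contradicting the hypothesis.

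I do not expect a genuine obstacle here; as the text remarks, the statement follows easily from the definitions, and the argument is essentially bookkeeping. The only point that deserves a line of care is the meaning of permutation-equality for the column-infinite matrices $\mathfrak{w}_G(i)\in\Z_+^{V\times\Z_+}$: the permutation acts only on the finite row index set $V$, while the column index $l\in\Z_+$ is left untouched, and the displayed identities respect exactly this. A secondary minor point worth stating is that the \emph{same} permutation matrix $P$ works simultaneously for all column indices $l$, which is what makes a single permutation-equality possible rather than a separate one for each $l$; this is immediate from $A_2^{\,l}=PA_1^{\,l}P^{T}$ holding with one fixed $P$.
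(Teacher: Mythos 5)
Your argument is correct and is exactly the routine verification the paper has in mind when it states that the claim ``easily follows by the definition of $\mathfrak{w}_G$'' and omits the proof: an isomorphism $\varphi$ with $\varphi(u)=v$ gives $A_2^{\,l}=PA_1^{\,l}P^{T}$ for its permutation matrix $P$, hence the single permutation $P$ witnesses $\mathfrak{w}_{G_1}(u)\peq\mathfrak{w}_{G_2}(v)$, and the claim follows by contraposition. Nothing further is needed.
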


\begin{definition}
  $G_1$ and $G_2$ are \emph{$\mathfrak{w}$-equivalent} if the nodes can be relabeled such that $\mathfrak{w}_{G_1}(i)\peq \mathfrak{w}_{G_2}(i)$ for each node $i$.
\end{definition}

\begin{claim}
  If two graphs are isomorphic, then they are $\mathfrak{w}$-equivalent.
\end{claim}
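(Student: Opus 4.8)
The plan is to turn an isomorphism $\varphi\colon V_1\to V_2$ directly into the relabeling required by the definition of $\mathfrak{w}$-equivalence. The only substantive fact needed is that an isomorphism preserves walk counts: for every length $l\ge 0$ and all nodes $i,j$, the number of walks of length $l$ from $i$ to $j$ in $G_1$ equals the number of walks of length $l$ from $\varphi(i)$ to $\varphi(j)$ in $G_2$.

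First I would establish this walk-preservation fact. The clean way is algebraic: if $P$ is the permutation matrix associated with $\varphi$, then $A_1=P^{T}A_2P$, hence $A_1^{l}=P^{T}A_2^{l}P$ for every $l\ge 0$, and therefore $(A_1^{l})_{ij}=(A_2^{l})_{\varphi(i)\varphi(j)}$. Since $\mathfrak{w}_G(i)_{jl}=(A^l)_{ij}$ by the definition of $\mathfrak{w}_G$, this reads $\mathfrak{w}_{G_1}(i)_{jl}=\mathfrak{w}_{G_2}(\varphi(i))_{\varphi(j)\,l}$ for all $i,j,l$. (Equivalently, one may argue combinatorially: $\varphi$ sends each walk $i=w_0,w_1,\dots,w_l=j$ in $G_1$ to the sequence $\varphi(w_0),\varphi(w_1),\dots,\varphi(w_l)$, which is a walk in $G_2$ because $\varphi$ preserves adjacency in both directions, and this map is a bijection between the walks of $G_1$ and those of $G_2$.)

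Next I would relabel the nodes of $G_2$ by $\varphi^{-1}$, obtaining a graph $\widetilde G_2$ in which the node originally called $\varphi(i)$ is now called $i$. Under this relabeling, the number of walks of length $l$ between nodes $i$ and $j$ of $\widetilde G_2$ equals the number of walks of length $l$ between $\varphi(i)$ and $\varphi(j)$ in $G_2$, i.e. $\mathfrak{w}_{\widetilde G_2}(i)_{jl}=\mathfrak{w}_{G_2}(\varphi(i))_{\varphi(j)\,l}=\mathfrak{w}_{G_1}(i)_{jl}$ for all $i,j,l$ by the previous step. Hence $\mathfrak{w}_{G_1}(i)=\mathfrak{w}_{\widetilde G_2}(i)$ for every node $i$, so in particular $\mathfrak{w}_{G_1}(i)\peq\mathfrak{w}_{\widetilde G_2}(i)$ (with the identity permutation matrix), which is exactly the condition defining $\mathfrak{w}$-equivalence of $G_1$ and $G_2$.

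I do not expect a real obstacle here; the statement is essentially a bookkeeping exercise. The one point to be careful about is the direction of the relabeling permutation: since both the argument of $\mathfrak{w}_G(\cdot)$ and its row index $j$ range over the node set, applying $\varphi^{-1}$ consistently to ``which node we query'' and to ``the row indices'' is what makes the two matrices literally equal rather than only permutation-equal. I would also double-check that the phrase ``the nodes can be relabeled'' in the definition is meant to allow relabeling one of the two graphs freely, which is precisely what the above construction uses.
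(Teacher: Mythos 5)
Your argument is correct and is exactly the routine verification the paper has in mind (the paper states this claim without proof as an immediate consequence of the definitions): an isomorphism $\varphi$ gives $(A_1^l)_{ij}=(A_2^l)_{\varphi(i)\varphi(j)}$, so relabeling $G_2$ by $\varphi^{-1}$ makes the $\mathfrak{w}$-labels literally equal, which is even stronger than the required permutation-equality $\peq$. No gaps; nothing further is needed.
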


Later on, it will be shown that the reverse direction holds for special graph pairs.

\subsection{Only Short Walks Matter}

The matrices that $\mathfrak{w}$ assigns to the nodes are infinite long, therefore it is not straightforward to check whether two such matrices are permutation-equal or not.
Now, we prove that it suffices to compare the first $(n+1)$ columns of the matrices.

\begin{definition}
  For given column vectors $q_0,q_1,\dots$ over a field, $\spn(q_0,q_1,\dots)$ denotes the linear subspace generated by column vectors $q_0,q_1,\dots$.
\end{definition}

The following lemma will be useful in the proof of Theorem~\ref{thm:gi:onlyshortwalks}.

\begin{lemma}\label{lem:gi:spanlemma}
  For an arbitrary real square matrix $M\in\R^{n\times n}$ and a column vector $q_0\in\R^n$, $\spn(q_0,q_1,q_2,\dots)=\spn(q_0,q_1,\dots,q_{n-1})$, where $q_i:=M^{i}q_0$ for all $i\geq 0$.
\end{lemma}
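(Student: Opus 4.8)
The plan is to show the two spans are equal by proving containment in both directions; since the right-hand span is obviously contained in the left, the real content is $\spn(q_0,q_1,q_2,\dots)\subseteq\spn(q_0,\dots,q_{n-1})$. The key observation is that the chain of subspaces
\[
  S_0\subseteq S_1\subseteq S_2\subseteq\cdots,\qquad S_k:=\spn(q_0,q_1,\dots,q_k),
\]
is nondecreasing and lives inside the $n$-dimensional space $\R^n$, so it must stabilize. More precisely, $\dim S_k$ is a nondecreasing integer sequence bounded above by $n$ with $\dim S_0\ge 1$ (assuming $q_0\neq 0$; the case $q_0=0$ is trivial), hence there is a smallest index $m$ with $S_m=S_{m+1}$, and necessarily $m\le n-1$.

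The second step is to argue that once the chain stalls at index $m$, it stays stalled: I claim $S_m=S_{m+1}$ implies $S_m=S_j$ for every $j\ge m$. It suffices to show $q_{j}\in S_m$ for all $j\ge m$, which I do by induction on $j$. The base case $j=m$ is immediate. For the inductive step, suppose $q_j\in S_m$, say $q_j=\sum_{i=0}^m c_i q_i$. Applying $M$ and using $q_{i+1}=Mq_i$ gives $q_{j+1}=Mq_j=\sum_{i=0}^m c_i q_{i+1}\in\spn(q_1,\dots,q_{m+1})\subseteq S_{m+1}=S_m$, which closes the induction. Consequently every $q_j$ with $j\ge 0$ lies in $S_m\subseteq S_{n-1}$, giving the desired inclusion, and therefore $\spn(q_0,q_1,q_2,\dots)=S_{n-1}=\spn(q_0,\dots,q_{n-1})$.

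There is no serious obstacle here; the only point requiring a little care is the "stabilization stays forever" step, i.e.\ that local stabilization ($S_m=S_{m+1}$) propagates to global stabilization — this is exactly where the relation $q_{i+1}=Mq_i$ (rather than just the raw definition of span) is used, via the $M$-invariance of the computation. One could alternatively phrase the whole argument through the Cayley–Hamilton theorem — the characteristic polynomial of $M$ gives a degree-$n$ monic relation among $I,M,\dots,M^n$, hence $M^n q_0\in\spn(q_0,\dots,q_{n-1})$, and then induction on the exponent extends this to all higher powers — but the dimension-counting argument above is more elementary and self-contained, so that is the route I would take.
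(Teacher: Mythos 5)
Your proposal is correct and follows essentially the same route as the paper: the paper's proof is exactly the observation that once $\spn(q_0,\dots,q_i)=\spn(q_0,\dots,q_{i+1})$ the chain stabilizes for good, combined with the dimension bound forcing stabilization by index $n-1$; you simply spell out the details (the nondecreasing dimension count and the $M$-invariance induction) that the paper leaves as a one-line sketch.
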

\begin{proof}
  By induction, one may show that $\spn(q_0,q_1,\dots,q_i)=\spn(q_0,q_1,\dots,q_{i+1})$ implies that $\spn(q_0,q_1,\dots,q_i)=\spn(q_0,q_1,q_2,\dots)$ for all $i$.
  Therefore, the columns $q_0,q_1,\dots,q_n$ generate $\spn(q_0,q_1,q_2,\dots)$.
\end{proof}

The following theorem shows that it is sufficient to consider the first $(n+1)$ columns of the $\mathfrak{w}$-labels, that is, only the number of short walks matters.
Let matrix $\restr{\mathfrak{w}_G}{k}(i)$ denote the first $k$ columns of matrix $\mathfrak{w}_G(i)$.

\begin{theorem}\label{thm:gi:onlyshortwalks}
  For every graph pair $G_1, G_2$ on $n$ nodes and for all $i_1\in V_1,i_2\in V_2$, $\mathfrak{w}_{G_1}(i_1)\peq \mathfrak{w}_{G_2}(i_2)$ if and only if $\restr{\mathfrak{w}_{G_1}}{n+1}(i_1)\peq \restr{\mathfrak{w}_{G_2}}{n+1}(i_2)$.
\end{theorem}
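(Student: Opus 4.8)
The plan is to identify the $l$-th column of $\mathfrak{w}_G(i)$ with the vector $A^l e_i$, where $A$ is the adjacency matrix of $G$ and $e_i$ is the $i$-th standard basis vector, since $(A^l)_{ji}$ is exactly the number of walks of length $l$ from $i$ to $j$. Under this identification the statement becomes: a permutation matrix $P$ satisfies $P A_1^l e_{i_1} = A_2^l e_{i_2}$ for all $l\ge 0$ if and only if it does so for $l=0,1,\dots,n$. The forward implication is immediate, as restricting the identity $P\,\mathfrak{w}_{G_1}(i_1)=\mathfrak{w}_{G_2}(i_2)$ to its first $n+1$ columns yields $P\restr{\mathfrak{w}_{G_1}}{n+1}(i_1)=\restr{\mathfrak{w}_{G_2}}{n+1}(i_2)$.

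For the converse, fix a permutation matrix $P$ with $P A_1^l e_{i_1}=A_2^l e_{i_2}$ for $l=0,\dots,n$. Applying Lemma~\ref{lem:gi:spanlemma} with $M:=A_1$ and $q_0:=e_{i_1}$ shows $A_1^{n}e_{i_1}\in\spn(e_{i_1},A_1 e_{i_1},\dots,A_1^{n-1}e_{i_1})$, so there exist reals $c_0,\dots,c_{n-1}$ with $A_1^{n}e_{i_1}=\sum_{j=0}^{n-1}c_j A_1^{j}e_{i_1}$. Since $P$ is linear and, by hypothesis, sends $A_1^{j}e_{i_1}$ to $A_2^{j}e_{i_2}$ for every $j\le n$, applying $P$ to this identity gives $A_2^{n}e_{i_2}=\sum_{j=0}^{n-1}c_j A_2^{j}e_{i_2}$ with the \emph{same} coefficients. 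This is exactly the point at which it matters that we may use the $(n{+}1)$-st column (index $n$) and not merely the first $n$ columns: the transported recurrence itself involves column $n$.

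Finally I would prove $P A_1^{l}e_{i_1}=A_2^{l}e_{i_2}$ for every $l\ge 0$ by strong induction on $l$. The cases $l\le n$ hold by assumption. For $l>n$, multiplying the $G_1$-recurrence by $A_1^{\,l-n}$ gives $A_1^{l}e_{i_1}=\sum_{j=0}^{n-1}c_j A_1^{\,j+l-n}e_{i_1}$, where every exponent $j+l-n$ lies in $\{l-n,\dots,l-1\}\subseteq\{0,\dots,l-1\}$; applying $P$, invoking the induction hypothesis on these exponents, and then reassembling via the $G_2$-recurrence yields
$P A_1^{l}e_{i_1}=\sum_{j=0}^{n-1}c_j A_2^{\,j+l-n}e_{i_2}=A_2^{\,l-n}\bigl(\sum_{j=0}^{n-1}c_j A_2^{j}e_{i_2}\bigr)=A_2^{\,l-n}A_2^{n}e_{i_2}=A_2^{l}e_{i_2}$.
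Hence $P\,\mathfrak{w}_{G_1}(i_1)=\mathfrak{w}_{G_2}(i_2)$, so $\mathfrak{w}_{G_1}(i_1)\peq\mathfrak{w}_{G_2}(i_2)$.

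The proof is short and essentially mechanical; the only care needed is in transporting the linear dependence across $P$ (using injectivity/linearity of $P$ together with its action on the first $n+1$ columns) and in checking that the exponent ranges in the induction stay within $\{0,\dots,l-1\}$. I do not anticipate a genuine obstacle beyond this bookkeeping — in particular, one should resist the temptation to try to prove the statement with only $n$ columns, since the Cayley–Hamilton-type recurrence for the Krylov sequence generically needs the $n$-th power, hence the $(n{+}1)$-st column.
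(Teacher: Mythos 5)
Your proof is correct and follows essentially the same route as the paper: both rely on Lemma~\ref{lem:gi:spanlemma} to obtain a linear dependence among the Krylov vectors $A^l e_{i}$, transport it through the permutation matrix $P$ using the agreement on the first $n+1$ columns, and then extend $PA_1^le_{i_1}=A_2^le_{i_2}$ to all $l$ by induction. The only cosmetic difference is that you fix a single recurrence for $A_1^{n}e_{i_1}$ and shift it by powers of the adjacency matrix, whereas the paper re-expresses $q_{i-1}$ in the span of the first $n$ columns at each induction step.
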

\begin{proof}
  In this proof, let $Q_1, Q_2, Q_1'$ and $Q_2'$ denote the matrices $\mathfrak{w}_{G_1}(v_1), \mathfrak{w}_{G_2}(v_2),$ $\restr{\mathfrak{w}_{G_1}}{n+1}(v_1)$ and $\restr{\mathfrak{w}_{G_2}}{n+1}(v_2)$, respectively.
  If $Q_1\peq Q_2$, then, by definition, there exists a permutation matrix $P$ for which $PQ_1=Q_2$.
  Clearly, $PQ_1=Q_2$ implies that $PQ_1'=Q_2'$.
  To show the other direction, suppose that $Q_1'\peq Q_2'$, and let $q_0,q_1,q_2,\dots$ and $q_0',q_1',q_2',\dots$ denote the columns of $Q_1$ and $Q_2$, respectively.
  Recall that $A_1$ and $ A_2$ denote the adjacency matrices of $G_1$ and $G_2$, respectively.
  Since $Q_1'\peq Q_2'$, there exists a permutation matrix $P$ for which $PQ_1'= Q_2'$, thus it is sufficient to prove that $Pq_i=q_i'$ hols for all $i\geq n+1$.\\
  By induction, suppose that $Pq_{k}=q_{k}'$ holds for all $0\leq k<i$.
  By Lemma~\ref{lem:gi:spanlemma}, there exist coefficients $\alpha_0,\dots\alpha_{n-1}\in\R$ for which $q_{i-1}=\sum\limits_{j=0}^{n-1}\alpha_jq_{j}$ and $q_{i-1}'=\sum\limits_{j=0}^{n-1}\alpha_jq_{j}'$.
  Therefore,
  \begin{multline}
    Pq_i = PA_1q_{i-1} = P\sum\limits_{j=0}^{n-1}\alpha_jA_1q_{j} = \sum\limits_{j=0}^{n-1}\alpha_jPq_{j+1}=\sum\limits_{j=0}^{n-1}\alpha_jq_{j+1}'=\sum\limits_{j=0}^{n-1}\alpha_jA_2q_{j}'\\
    = A_2q_{i-1}'=q_i'
  \end{multline}
  holds for all $i\geq n+1$, which had to be shown.
\end{proof}

The following example shows that the previous theorem is tight in the sense that it is not always sufficient to consider only the first $n$ columns of the $\mathfrak{w}$-labels.

\begin{example}\label{ex:gi:nPlus1ColumnsExample}
  Let $P_n$ denote the path of $n$ nodes, and let $P_n'$ denote the path of $n$ nodes with a loop on one of its endpoints.
  To distinguish two loop-free endpoints of the two graphs, we need to consider the first $(n+1)$ columns of the $\mathfrak{w}$-labels, since their labels do not turn out to be different earlier.
\end{example}

Note that Theorem~\ref{thm:gi:onlyshortwalks} holds even in the following stronger form, which gives the number of necessary columns in terms of the rank of the $\mathfrak{w}$-labels.

\begin{theorem}\label{thm:gi:onlyShortWalks2}
  For every graph pair $G_1, G_2$ and for all $i_1\in V_1,i_2\in V_2$, $\mathfrak{w}_{G_1}(i_1)\peq \mathfrak{w}_{G_2}(i_2)$ if and only if $\restr{\mathfrak{w}_{G_1}}{s(i_1)}(i_1)\peq \restr{\mathfrak{w}_{G_2}}{s(i_2)}(i_2)$, where $s(i_1)=r(\mathfrak{w}_{G_1}(i_1))+1$ and $s(i_2)=r(\mathfrak{w}_{G_2}(i_2))+1$.
\end{theorem}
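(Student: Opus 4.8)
The plan is to re-run the proof of Theorem~\ref{thm:gi:onlyshortwalks} after sharpening Lemma~\ref{lem:gi:spanlemma} so that the length threshold is governed by the rank of the label rather than by $n$. Fix a graph $G$ with adjacency matrix $A$ and a node $i$, and write $q_0:=e_i$ (the indicator vector of $i$) and $q_l:=A^lq_0$. Since $A$ is symmetric, the $l^{\text{th}}$ column of $\mathfrak{w}_G(i)$ is exactly $q_l$, so by the definition of matrix rank $r(\mathfrak{w}_G(i))=\dim\spn(q_0,q_1,q_2,\dots)=:d$. The statement I would first establish is the \emph{rank version} of Lemma~\ref{lem:gi:spanlemma}: $\spn(q_0,q_1,q_2,\dots)=\spn(q_0,\dots,q_{d-1})$. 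This follows from the same observation used to prove Lemma~\ref{lem:gi:spanlemma}: the nested subspaces $\spn(q_0,\dots,q_l)$ stabilise the moment two consecutive ones agree, while as $l$ grows the dimension starts at $\dim\spn(q_0)=1$ (note $q_0=e_i\neq 0$) and increases by at most one per step; hence the dimension sequence is $1,2,\dots,d,d,d,\dots$ and $\spn(q_0,\dots,q_{d-1})$ already has dimension $d$.

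Next I would dispose of the trivial matching of the column counts. For the direction ``$\Rightarrow$'', if $P\,\mathfrak{w}_{G_1}(i_1)=\mathfrak{w}_{G_2}(i_2)$ for some permutation matrix $P$, then invertibility of $P$ gives $r(\mathfrak{w}_{G_1}(i_1))=r(\mathfrak{w}_{G_2}(i_2))$, hence $s(i_1)=s(i_2)$, and discarding all but the first $s(i_1)$ columns on both sides preserves the equation. For the converse, note that $\restr{\mathfrak{w}_{G_1}}{s(i_1)}(i_1)\peq\restr{\mathfrak{w}_{G_2}}{s(i_2)}(i_2)$ already forces $s(i_1)=s(i_2)$ (permutation-equal matrices have the same shape), hence $d:=r(\mathfrak{w}_{G_1}(i_1))=r(\mathfrak{w}_{G_2}(i_2))$.

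For the converse I then repeat the induction from the proof of Theorem~\ref{thm:gi:onlyshortwalks} with $n$ replaced by $d$. Let $q_0,q_1,\dots$ and $q_0',q_1',\dots$ be the columns of $\mathfrak{w}_{G_1}(i_1)$ and $\mathfrak{w}_{G_2}(i_2)$, and let $P$ be a permutation matrix with $Pq_k=q_k'$ for $0\le k\le d$ --- precisely the $s(i_1)=d+1$ columns of the restricted matrices that we are given to agree. It suffices to prove $Pq_i=q_i'$ for every $i\ge d+1$. Assuming this for all smaller indices, use the rank version of the span lemma to write $q_{i-1}=\sum_{j=0}^{d-1}\alpha_jq_j$; applying $P$ and invoking the induction hypothesis (all indices involved are $\le i-1$) yields $q_{i-1}'=\sum_{j=0}^{d-1}\alpha_jq_j'$ with the \emph{same} coefficients. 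Then, exactly as in the displayed computation of that proof,
\[
  Pq_i \;=\; PA_1q_{i-1} \;=\; \sum_{j=0}^{d-1}\alpha_j\,Pq_{j+1} \;=\; \sum_{j=0}^{d-1}\alpha_j\,q_{j+1}' \;=\; A_2q_{i-1}' \;=\; q_i',
\]
the third equality being the induction hypothesis applied to the indices $j+1\le d\le i-1$.

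The only genuinely new ingredient is the rank version of Lemma~\ref{lem:gi:spanlemma}, which is routine. The point that deserves care --- and the reason the threshold is $r(\mathfrak{w}_G(i))+1$ rather than $r(\mathfrak{w}_G(i))$ --- is that the induction step appeals to $Pq_{j+1}=q_{j+1}'$ for $j+1$ as large as $d$, so the base case must provide $d+1$ matching columns; with only $d$ of them one would already need $Pq_d=q_d'$ in order to derive it. This mirrors the tightness already exhibited in Example~\ref{ex:gi:nPlus1ColumnsExample}.
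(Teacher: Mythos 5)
Your proof is correct and is essentially the argument the paper intends: the paper omits the proof of Theorem~\ref{thm:gi:onlyShortWalks2}, stating it is analogous to Theorem~\ref{thm:gi:onlyshortwalks}, and your write-up is exactly that adaptation --- the rank version of Lemma~\ref{lem:gi:spanlemma} (dimension sequence $1,2,\dots,d,d,\dots$), the observation that permutation-equality forces $s(i_1)=s(i_2)$, and the same induction with $n$ replaced by $d$. The additional care you take in deriving $q_{i-1}'=\sum_j\alpha_j q_j'$ by applying $P$ (rather than asserting common coefficients) is a small but welcome tightening of the original displayed argument.
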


The proof is similar to that of Theorem~\ref{thm:gi:onlyshortwalks}, therefore it is omitted.
Combining Lemma~\ref{lem:gi:spanlemma} and Theorem~\ref{thm:gi:onlyShortWalks2}, one gets that it is sufficient to generate the columns of the $\mathfrak{w}$-labels one by one and stop as soon as the current column is linearly dependent from the previous columns.
The following theorem gives an upper bound on the largest rank of the $\mathfrak{w}$-labels --- and hence implies an upper bound on the number of columns to be computed.
First, consider the following notations.
Let $\diam(G,i)$ denote the longest shortest path starting from node $i$, formally, $\diam(G,i):=\max\{\dist(i,j) : j\in V_G\}$, where $\dist(i,j)$ is the distance of nodes $i$ and $j$ in $G$.
Let $R, p$ and $\diam(G)$ denote the largest rank of the node labels, the number of distinct eigenvalues and the diameter of $G$, respectively.

\begin{theorem}
  If $G$ is connected, then $p\geq R\geq \diam(G)+1$.
\end{theorem}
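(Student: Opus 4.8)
The plan is to prove the two inequalities $p\ge R$ and $R\ge\diam(G)+1$ separately. The common starting point is the observation that the column of $\mathfrak{w}_G(i)$ with index $l$ is exactly $A^l e_i$, where $e_i$ denotes the $i$-th standard unit vector; indeed $\mathfrak{w}_G(i)_{jl}=(A^l)_{ji}=(A^le_i)_j$. Hence $r(\mathfrak{w}_G(i))=\dim\spn(e_i,Ae_i,A^2e_i,\dots)$, the dimension of the Krylov subspace of $A$ generated by $e_i$, and $R=\max_{i\in V}r(\mathfrak{w}_G(i))$.

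For $p\ge R$, I would use that $A$ is real symmetric and hence admits a spectral decomposition $A=\sum_{j=1}^p\theta_jE_j$, where $\theta_1,\dots,\theta_p$ are the distinct eigenvalues of $A$ and the $E_j$ are the orthogonal projections onto the corresponding eigenspaces. Then $A^le_i=\sum_{j=1}^p\theta_j^lE_je_i\in\spn(E_1e_i,\dots,E_pe_i)$ for every $l\ge0$, so the Krylov subspace above has dimension at most $p$. Taking the maximum over $i$ yields $R\le p$. (Alternatively, one can combine Lemma~\ref{lem:gi:spanlemma} with the fact that the minimal polynomial of a diagonalizable matrix has degree equal to its number of distinct eigenvalues.)

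For $R\ge\diam(G)+1$, I would pick a node $i$ with $\diam(G,i)=\diam(G)=:d$ --- such a node exists since $\diam(G)=\max_i\diam(G,i)$ --- and consider the distance shells $S_k:=\{j\in V:\dist(i,j)=k\}$ for $0\le k\le d$. The two crucial facts about the entry $(A^ke_i)_j$, which counts the walks of length $k$ from $i$ to $j$, are: it equals $0$ whenever $\dist(i,j)>k$ (no walk of length $k$ can reach a vertex farther than $k$), and it is strictly positive whenever $\dist(i,j)=k$ (it then equals the number of shortest paths between $i$ and $j$, which is at least one). Choosing a representative $j_k\in S_k$ for each $k$ and forming the $(d+1)\times(d+1)$ submatrix of $[\,e_i\mid Ae_i\mid\cdots\mid A^de_i\,]$ on the rows $j_0,\dots,j_d$, its $(k,m)$-entry $(A^me_i)_{j_k}$ vanishes for $m<k$ and is nonzero for $m=k$, so this submatrix is triangular with a nonvanishing diagonal, hence invertible. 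Therefore $e_i,Ae_i,\dots,A^de_i$ are linearly independent, so $r(\mathfrak{w}_G(i))\ge d+1$ and thus $R\ge\diam(G)+1$.

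I expect the only delicate point to be this second inequality: one must use connectedness to guarantee that the shells $S_0,\dots,S_d$ are all nonempty (so that the representatives $j_k$ exist at all), and one must check that the sign pattern of the triangular submatrix is not disturbed by loops --- a walk step along a loop can never help reach a vertex whose distance equals the walk length, so both crucial facts above survive verbatim. Once this setup is in place, the triangularity argument gives the linear independence immediately, while the first inequality is essentially just the spectral decomposition of $A$.
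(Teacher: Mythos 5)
Your proof is correct and follows essentially the same route as the paper: for $p\ge R$ the paper bounds the Krylov dimension via the degree of the minimal polynomial of the symmetric matrix $A$ (the alternative you mention in parentheses), which is only a cosmetic variant of your spectral-projection argument, and for $R\ge\diam(G)+1$ the paper simply asserts the observation $r(\mathfrak{w}_G(i))>\diam(G,i)$ at a diameter-realizing node. Your triangular-submatrix argument on the distance shells is precisely the justification of that observation, spelled out in full (including the remarks on connectedness and loops), so nothing is missing.
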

\begin{proof}
  Let $Q$ denote a node label having the largest rank, that is, $r(Q)=R$.
  By Lemma~\ref{lem:gi:spanlemma}, the first $R$ columns of $Q$ are linearly independent, which implies that matrices $I,A,A^2,\dots,A^{R-1}$ are linearly independent as well.

  It is well-known that the minimal polynomial of a real symmetric matrix $A$ is $m_A(x)=\prod\limits_{i=1}^p(x-\tilde\lambda_i)$, where $\tilde\lambda_1,\tilde\lambda_2,\dots,\tilde\lambda_p$ are the distinct eigenvalues of $A$.
  Therefore, $p=\deg(m_A)$, and hence $I,A,A^2,\dots,A^{p}$ are linearly dependent, which implies that $p\geq R$.

  To prove that $R\geq \diam(G)+1$, observe that $r(\mathfrak{w}_{G}(i))>\diam(G,i)$ at any node $i$, that is, the rank of $\mathfrak{w}_{G}(i)$ is larger than the length of the longest shortest path from node $i$.
  Applying this to a node $i$ that realizes the diameter of the graph (that is, $\diam(G)=\diam(G,i)$), one gets that $R\geq r(\mathfrak{w}_{G}(i))>\diam(G,i)=\diam(G)$.
\end{proof}

From now on, $\mathfrak{w}_{G}$ might refer to $\restr{\mathfrak{w}_{G}}{n+1}$ or the infinite $\mathfrak{w}$-labeling.
Note that the label $\restr{\mathfrak{w}_{G}}{n+1}(i)$ of a given node $i$ can be computed in $\mathcal{O}(nm)$ operations using a straightforward dynamic programming method.
Furthermore, one might prove that the occurring numbers consist of polynomially many bits in the size of the graph.
Therefore, it takes $\mathcal{O}(n^2m+n^3\log(n))$ steps to decide whether two graphs are $\mathfrak{w}$-equivalent by sorting the $\mathfrak{w}$-labels of both graphs.

\subsection{Spectral Results}\label{sec:gi:Spectral}

In this section, we investigate the relationship between $\mathfrak{w}$-equivalence and the spectra or eigenspaces of graphs.
The following technical observations will be useful when we prove that non-cospectral graph pairs are not $\mathfrak{w}$-equivalent in Theorem~\ref{thm:gi:diffSpectrumDiffWalks}.
\begin{claim}\label{cl:gi:numOfClosedWalksEqu}
  If $\mathfrak{w}_{G_1}(i)\peq \mathfrak{w}_{G_2}(i')$, then the number of closed walks of length $l$ starting from $i\in V_1$ and $i'\in V_2$ are the same for all $l\geq 0$.
\end{claim}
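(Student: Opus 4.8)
The plan is to reduce the statement to a single bookkeeping observation about the permutation matrix witnessing $\mathfrak{w}_{G_1}(i)\peq\mathfrak{w}_{G_2}(i')$. Note first that the number of closed walks of length $l$ starting from a node $j$ in a graph $G$ is exactly $\mathfrak{w}_G(j)_{jl}$, the entry of the matrix $\mathfrak{w}_G(j)$ in row $j$ and column $l$; equivalently, the $l$-th column of $\mathfrak{w}_G(j)$ is $A^{l}e_j$, where $e_j$ is the $j$-th standard basis vector and $A$ is the adjacency matrix of $G$. So it suffices to show $\mathfrak{w}_{G_1}(i)_{il}=\mathfrak{w}_{G_2}(i')_{i'l}$ for all $l\geq 0$.

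Let $P$ be a permutation matrix with $P\,\mathfrak{w}_{G_1}(i)=\mathfrak{w}_{G_2}(i')$, and let $\pi$ be the underlying permutation, so that $Pe_j=e_{\pi(j)}$. The key step is to pin down $\pi(i)$ using the column $l=0$: there is exactly one walk of length $0$ from $i$ to $j$ when $j=i$ and none otherwise (this holds verbatim for graphs with loops, a length-$0$ walk being a single node), so the zeroth column of $\mathfrak{w}_{G_1}(i)$ is $e_i$ and that of $\mathfrak{w}_{G_2}(i')$ is $e_{i'}$. Comparing zeroth columns in $P\,\mathfrak{w}_{G_1}(i)=\mathfrak{w}_{G_2}(i')$ gives $Pe_i=e_{i'}$, i.e.\ $\pi(i)=i'$.

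It then remains to read off the appropriate row. Multiplying $P\,\mathfrak{w}_{G_1}(i)=\mathfrak{w}_{G_2}(i')$ on the left by $e_{i'}^{T}$ and using $e_{i'}^{T}P=(P^{T}e_{i'})^{T}=e_{\pi^{-1}(i')}^{T}=e_i^{T}$, we obtain $e_i^{T}\,\mathfrak{w}_{G_1}(i)=e_{i'}^{T}\,\mathfrak{w}_{G_2}(i')$; the entry in column $l$ of this equality is precisely $\mathfrak{w}_{G_1}(i)_{il}=\mathfrak{w}_{G_2}(i')_{i'l}$, which is what we wanted. There is no genuine difficulty in this argument; the only thing to be careful about is keeping track of whether the permutation acts on the rows or on the columns of the $\mathfrak{w}$-labels, and noticing that the trivial zeroth column is exactly what forces the distinguished node $i$ to be matched with $i'$.
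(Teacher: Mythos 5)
Your proof is correct and follows the same route as the paper: it uses the zeroth column (the walks of length $0$) to force the witnessing permutation to send $i$ to $i'$, and then compares row $i$ of $\mathfrak{w}_{G_1}(i)$ with row $i'$ of $\mathfrak{w}_{G_2}(i')$. You merely spell out with standard basis vectors what the paper states in one sentence.
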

\begin{proof}
  By definition, there exists a permutation matrix $P$ for which $P\mathfrak{w}_{G_1}(i)=\mathfrak{w}_{G_2}(i')$.
  Notice that the first column of $\mathfrak{w}_{G_1}(i)$ and $\mathfrak{w}_{G_2}(i')$ enforces that $P$ maps the $i^{\text{th}}$ row of $\mathfrak{w}_{G_1}(i)$ to the $i'^{\text{th}}$ row of $\mathfrak{w}_{G_2}(i')$, which means that the number of closed walks from $i\in V_1$ and $i'\in V_2$ are the same for all $l\geq 0$.
\end{proof}

Recall that $U\in\R^{n\times n}$ is the eigenmatrix of $G$, and $u_{ij}$ denotes the $j^{\text{th}}$ entry of the $i^{\text{th}}$ column of $U$.

\begin{lemma}\label{lem:gi:eigenDecomp}
  For all $i,j\in \{1,\dots,n\}$ and for all $l\geq 0$, $(A^l)_{ij}=\sum\limits_{k=1}^{n}u_{ki}u_{kj}\lambda_k^l$ holds, where $\lambda_1,\lambda_2, \dots, \lambda_n$ are the eigenvalues of $G$. The right-hand side of this equation will be referred to as the \emph{eigen decomposition}.
\end{lemma}
\begin{proof}
  By definition, $U$ is an orthonormal and $AU=U\diag(\lambda_1,\lambda_2, \dots, \lambda_n)$. Clearly, $A=U\diag(\lambda_1,\lambda_2, \dots, \lambda_n)U^{-1}$ holds, hence $A^l=U\diag(\lambda_1^l,\lambda_2^l, \dots, \lambda_n^l)U^{-1}$. Therefore, $(A^l)_{ij}=\sum\limits_{k=1}^{n}u_{ki}u_{kj}\lambda_k^l$ for any node pair $i,j\in \{1,\dots,n\}$.
\end{proof}

The following observation is an immediate consequence of this lemma.
\begin{corollary}\label{lem:gi:eigenDecompDist}
  For all $i,j\in \{1,\dots,n\}$ and $l\geq 0$, there exist $\beta_1^{ij}, \beta_2^{ij},
  \dots, \beta_p^{ij}\in\R$ for which $(A^l)_{ij}=\sum\limits_{m=1}^{p}\beta_m^{ij}\tilde\lambda_m^l$, where $\tilde\lambda_1,\tilde\lambda_2, \dots, \tilde\lambda_p$ are the distinct non-zero eigenvalues of $G$. The right-hand side of this equation will be referred to as the \emph{aggregated eigen decomposition}.
\end{corollary}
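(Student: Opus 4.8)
The plan is to read this off directly from Lemma~\ref{lem:gi:eigenDecomp} by collecting the terms of the eigen decomposition that belong to equal eigenvalues. Concretely, I would start from $(A^l)_{ij} = \sum_{k=1}^{n} u_{ki}u_{kj}\lambda_k^l$, let $\tilde\lambda_1,\dots,\tilde\lambda_p$ be the distinct eigenvalues of $G$, and partition the index set $\{1,\dots,n\}$ into the classes $K_m := \{k : \lambda_k = \tilde\lambda_m\}$. Defining $\beta_m^{ij} := \sum_{k\in K_m} u_{ki}u_{kj}$, every summand of the original sum lying in a fixed class $K_m$ carries the common factor $\tilde\lambda_m^l$, so the double sum collapses to $(A^l)_{ij} = \sum_{m=1}^{p}\beta_m^{ij}\tilde\lambda_m^l$. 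This already establishes the claim in the variant that uses all distinct eigenvalues.

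To obtain the stated form involving only the distinct \emph{nonzero} eigenvalues, I would observe that for $l\geq 1$ any class $K_m$ with $\tilde\lambda_m = 0$ contributes $\beta_m^{ij}\cdot 0^l = 0$ and hence may simply be dropped from the sum; what remains is indexed exactly by the distinct nonzero eigenvalues, which one then renames $\tilde\lambda_1,\dots,\tilde\lambda_p$. The only point that needs a word of care is $l=0$, where $(A^0)_{ij}=\delta_{ij}$: there the zero-eigenvalue class must either be kept (with the convention $0^0=1$) or treated separately via $UU^T=I$. Since only the values at $l\geq 1$ are used in the sequel, this subtlety does not affect any application.

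I do not expect any real obstacle here — the statement is essentially bookkeeping on top of Lemma~\ref{lem:gi:eigenDecomp}. The only things worth flagging are the treatment of a possibly singular $A$ (the zero eigenvalue) together with the $l=0$ case, and keeping the index $p$ consistent with its earlier definition as the number of distinct eigenvalues of $G$.
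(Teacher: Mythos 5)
Your proof is correct and follows essentially the same route as the paper's: both invoke Lemma~\ref{lem:gi:eigenDecomp} and take $\beta_m^{ij}:=\sum_{k:\lambda_k=\tilde\lambda_m}u_{ki}u_{kj}$, collapsing the eigen decomposition over indices with equal eigenvalues. Your remark about the $l=0$ case when $0$ is an eigenvalue is a fair observation that the paper's one-line proof passes over silently; as you note, only $l\geq 1$ is used in the subsequent arguments, so the discrepancy is harmless.
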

\begin{proof}
  By Lemma~\ref{lem:gi:eigenDecomp}, $(A^l)_{ij}=\sum\limits_{k=1}^{n}u_{ki}u_{kj}\lambda_k^l$ for $l\geq 0$ and $i,j\in \{1,\dots,n\}$. Clearly, $\beta_m^{ij}:=\sum\limits_{k:\lambda_k=\tilde\lambda_m}u_{ki}u_{kj}$ is a proper choice, where $i,j\in \{1,\dots,n\}$ and $m \in \{1,\dots,p\}$.
\end{proof}

The following theorem shows that non-cospectral graphs are not $\mathfrak{w}$-equivalent.
\begin{theorem}\label{thm:gi:diffSpectrumDiffWalks}
  If $G_1$ and $G_2$ are $\mathfrak{w}$-equivalent, then the spectra of $G_1$ and $G_2$ are the same.
\end{theorem}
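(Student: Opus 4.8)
The plan is to reduce the statement to the elementary fact that the power sums of a real multiset determine the multiset, and to extract these power sums as total closed-walk counts, which are controlled by the $\mathfrak{w}$-labels via Claim~\ref{cl:gi:numOfClosedWalksEqu}.

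First I would unwind the definition of $\mathfrak{w}$-equivalence: after a suitable relabeling of the nodes we may assume $\mathfrak{w}_{G_1}(i)\peq\mathfrak{w}_{G_2}(i)$ for every node $i$. By Claim~\ref{cl:gi:numOfClosedWalksEqu}, for each $i\in\{1,\dots,n\}$ and each $l\geq 0$, the number of closed walks of length $l$ starting at $i$ is the same in $G_1$ and in $G_2$. Summing over all nodes $i$, the total number of closed walks of length $l$ coincides in the two graphs for every $l\geq 0$. Since the total number of closed walks of length $l$ in a graph with adjacency matrix $A$ is $\operatorname{tr}(A^l)$, this gives $\operatorname{tr}(A_1^l)=\operatorname{tr}(A_2^l)$, that is, $\sum_{k=1}^n\lambda_k^l=\sum_{k=1}^n\mu_k^l$ for all $l\geq 0$.

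It then remains to conclude that the multisets $\{\lambda_1,\dots,\lambda_n\}^\#$ and $\{\mu_1,\dots,\mu_n\}^\#$ agree. One route is Newton's identities: the power sums $p_l=\sum_k\lambda_k^l$ for $l=1,\dots,n$ determine the elementary symmetric functions $e_1,\dots,e_n$ of the $\lambda_k$, hence the characteristic polynomial $\prod_k(x-\lambda_k)$, and likewise for the $\mu_k$, so the two characteristic polynomials — and thus the two spectra — coincide. Alternatively, and more in the spirit of the surrounding machinery, one can use Corollary~\ref{lem:gi:eigenDecompDist}: summing $(A^l)_{ii}$ over $i$ and using that the columns of $U$ have unit norm, one gets $\operatorname{tr}(A^l)=\sum_{m}a_m\tilde\lambda_m^l$ where $a_m$ is the multiplicity of the distinct nonzero eigenvalue $\tilde\lambda_m$; a Vandermonde-type (partial-fraction) argument then shows that an identity $\sum_m a_m\tilde\lambda_m^l=\sum_{m'}b_{m'}\tilde\mu_{m'}^l$ valid for all $l\geq 1$ forces the distinct nonzero eigenvalues and their multiplicities to agree, and the multiplicity of the eigenvalue $0$ is pinned down by $p_0=n$ being common to both graphs.

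There is no serious obstacle here; the only points needing a little care are the passage from the per-node closed-walk identity to the trace identity (a one-line summation over nodes) and the standard — but worth stating — fact that equal power sums imply equal spectra. In short, the theorem is an easy consequence of Claim~\ref{cl:gi:numOfClosedWalksEqu}.
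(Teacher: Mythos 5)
Your proof is correct, but it takes a genuinely different and more economical route than the paper. You sum the per-node closed-walk identity from Claim~\ref{cl:gi:numOfClosedWalksEqu} over all nodes at the outset, obtaining $\operatorname{tr}(A_1^l)=\operatorname{tr}(A_2^l)$ for all $l$, and then invoke Newton's identities (equal power sums $p_1,\dots,p_n$ determine the characteristic polynomial) to conclude cospectrality in one stroke; the zero eigenvalue needs no separate treatment since the whole characteristic polynomial is pinned down. The paper instead works at the level of individual diagonal entries: its Step~1 (Lemma~\ref{lem:gi:useMutualEigenValsOnly}) applies Corollary~\ref{lem:gi:eigenDecompDist} to each $(A^l)_{ii}$, runs a Vandermonde-determinant argument on the union of the distinct non-zero eigenvalues, and then needs the extra observation that for any eigenvalue belonging to only one graph there exists a node $i$ whose coefficient $\beta^{ii}$ is strictly positive (using $u_m^Tu_m=1$); only in Step~2 does it sum over nodes, and only over the already-matched mutual eigenvalues, to equate multiplicities. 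Your global trace argument makes that existence-of-a-good-node step unnecessary, because after summing over $i$ the coefficient of each distinct non-zero eigenvalue is its multiplicity, which is positive by definition — this is exactly the observation behind your alternative Vandermonde/partial-fraction route, and it is a legitimate simplification. What the paper's finer per-node formulation buys is not needed for this theorem itself, but it is in the spirit of the per-node spectral information exploited elsewhere in that section (e.g.\ Lemma~\ref{lem:gi:aboluseValDiff}); as a proof of Theorem~\ref{thm:gi:diffSpectrumDiffWalks} alone, your version is complete and shorter.
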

\begin{proof}
  The proof consists of two steps.

  \textbf{Step 1:} We prove that the set of non-zero eigenvalues of $G_1$ and $G_2$ are the same.

  \begin{lemma}\label{lem:gi:useMutualEigenValsOnly}
    Coefficient $\beta_k^{ii}$ in the aggregated eigen decomposition is zero if it corresponds to a non-zero eigenvalue of exactly one of $G_1$ and $G_2$ for all $i,k\in \{1,\dots,n\}$.
  \end{lemma}
  \begin{proof}
    Let $\tilde\lambda_1,\tilde\lambda_2,\dots,\tilde\lambda_r,\tilde\theta_{r+1},\dots,\tilde\theta_{p}$ and $\tilde\lambda_1,\tilde\lambda_2,\dots,\tilde\lambda_r,\tilde\mu_{r+1},\dots,\tilde\mu_{q}$ denote all the distinct non-zero eigenvalues of $G_1$ and $G_2$, respectively, where $\tilde\lambda_1,\tilde\lambda_2,\dots,\tilde\lambda_r$ are the mutual non-zero eigenvalues of the two graphs and $\tilde\theta_{r+1},\dots,\tilde\theta_{p},\tilde\mu_{r+1},\dots,\tilde\mu_{q}$ are pairwise distinct.

    For the sake of simplicity, suppose that the nodes are re-indexed in such a way that the identity mapping gives a $\mathfrak{w}$-equivalence, that is, $\mathfrak{w}_{G_1}(i)\peq \mathfrak{w}_{G_2}(i)$ for each node $i$.

    For any $i$ and $j$, there exist
    coefficients
    $\alpha_1,\alpha_2,\dots,\alpha_p,\allowbreak\beta_1,\beta_2,\dots,\beta_q$ such that
    \begin{equation}
      (A_1^l)_{ij}=\sum\limits_{k=1}^{r}\alpha_k\tilde\lambda_k^l
      +\sum\limits_{k=r+1}^{p}\alpha_k\tilde\theta_k^l
    \end{equation}
    and
    \begin{equation}
      (A_2^l)_{ij}=\sum\limits_{k=1}^{r}\beta_k\tilde\lambda_k^l
      +\sum\limits_{k=r+1}^{q}\beta_k\tilde\mu_k^l
    \end{equation}
    hold for all $l\geq 1$, by Corollary~\ref{lem:gi:eigenDecompDist}.
    The two graphs being $\mathfrak{w}$-equivalent, one gets that
    \begin{equation*}
      \sum\limits_{k=1}^{r}\alpha_k\tilde\lambda_k^l
      +\sum\limits_{k=r+1}^{p}\alpha_k\tilde\theta_k^l=(A_1^l)_{ii}=(A_2^l)_{ii}=\sum\limits_{k=1}^{r}\beta_k\tilde\lambda_k^l
      +\sum\limits_{k=r+1}^{q}\beta_k\tilde\mu_k^l
    \end{equation*}
    holds for all $i\in \{1,\dots,n\}$ and $l\geq 1$, where the second equation follows from Claim~\ref{cl:gi:numOfClosedWalksEqu}.
    Subtracting the right-hand side, one obtains that
    \begin{equation}
      \sum\limits_{k=1}^{r}(\alpha_k-\beta_k)\tilde\lambda_k^l
      +\sum\limits_{k=r+1}^{p}\alpha_k\tilde\theta_k^l-\sum\limits_{k=r+1}^{q}\beta_k\tilde\mu_k^l=0
    \end{equation}
    for all $l\geq 1$.
    Let $m:=p+q-r$, and consider the following linear equations for $l \in \{1,\dots,m\}$.
    \begin{equation}
      \sum\limits_{k=1}^{r}x_k\tilde\lambda_k^l
      +\sum\limits_{k=r+1}^{p}x_k\tilde\theta_k^l+\sum\limits_{k=r+1}^{q}x_{p+k-r}\tilde\mu_k^l=0,
    \end{equation}
    where
    \begin{equation}
      x_s:=
      \begin{cases}
        \alpha_s-\beta_s& \text{if } 1\leq s\leq r,\\
        \alpha_s        & \text{if } r+1\leq s\leq p,\\
        -\beta_{r+s-p}  & \text{if } p+1\leq s\leq p+q-r,\\
      \end{cases}
    \end{equation} for all $s\in \{1,\dots,m\}$.
    The matrix of this linear equation system is
    \begin{equation}
      M:=\begin{bmatrix}
        \tilde\lambda_1^1 & \dots & \tilde\lambda_r^1 & \tilde\theta_{r+1}^1 &\dots & \tilde\theta_{p}^1 & \tilde\mu_{r+1}^1 &\dots & \tilde\mu_{q}^1 \\
        \tilde\lambda_1^2 & \dots & \tilde\lambda_r^2 & \tilde\theta_{r+1}^2 &\dots & \tilde\theta_{p}^2 & \tilde\mu_{r+1}^2 &\dots & \tilde\mu_{q}^2 \\
        \tilde\lambda_1^3 & \dots & \tilde\lambda_r^3  & \tilde\theta_{r+1}^3 &\dots & \tilde\theta_{p}^3 & \tilde\mu_{r+1}^3 &\dots & \tilde\mu_{q}^3 \\
        \vdots & \ddots & \vdots & \vdots & \ddots & \vdots & \vdots & \ddots & \vdots \\
        \tilde\lambda_1^m & \dots & \tilde\lambda_r^m  & \tilde\theta_{r+1}^m &\dots & \tilde\theta_{p}^m & \tilde\mu_{r+1}^m &\dots & \tilde\mu_{q}^m \\
      \end{bmatrix}.
    \end{equation}

    Observe that $M=M'\diag(\tilde\lambda_1^1,  \dots,  \tilde\lambda_r^1, \tilde\theta_{r+1}^1, \dots ,\tilde\theta_{p}^1, \tilde\mu_{r+1}^1, \dots,  \tilde\mu_{q}^1)$, where $M'$ denotes the following Vandermonde matrix.
    \begin{equation}
      M':=\begin{bmatrix}
        1 & \dots & 1 & 1 & \dots & 1 & 1 & \dots & 1 \\
        \tilde\lambda_1^1 & \dots & \tilde\lambda_r^1 & \tilde\theta_{r+1}^1 &\dots & \tilde\theta_{p}^1 & \tilde\mu_{r+1}^1 &\dots & \tilde\mu_{q}^1 \\
        \tilde\lambda_1^2 & \dots & \tilde\lambda_r^2 & \tilde\theta_{r+1}^2 &\dots & \tilde\theta_{p}^2 & \tilde\mu_{r+1}^2 &\dots & \tilde\mu_{q}^2 \\
        \vdots & \ddots & \vdots & \vdots & \ddots & \vdots & \vdots & \ddots & \vdots \\
        \tilde\lambda_1^m & \dots & \tilde\lambda_r^m  & \tilde\theta_{r+1}^m &\dots & \tilde\theta_{p}^m & \tilde\mu_{r+1}^m &\dots & \tilde\mu_{q}^m \\
      \end{bmatrix}
    \end{equation}
    \\
    Therefore, $\det(M)=\det(M')
    \prod\limits_{k=1}^r\tilde\lambda_k
    \prod\limits_{k=r+1}^p\tilde\theta_k
    \prod\limits_{k=r+1}^q\tilde\mu_k\neq 0$, thus the only solution is $x\equiv 0$, that is,
    \begin{equation}
      \begin{cases}
        \alpha_s=\beta_s& \text{if } 1\leq s\leq r,\\
        \alpha_s=0      & \text{if } r+1\leq s\leq p,\\
        \beta_{r+s-p}=0 & \text{if } p+1\leq s\leq p+q-r\\
      \end{cases}
    \end{equation}
    follows for all $s\in \{1,\dots,m\}$.
  \end{proof}
  Let $\lambda^*\neq 0$ denote an eigenvalue which corresponds to exactly one of the graphs, say to $G_1$. Next, we argue that there exists a node $i\in V_1$ such that $\lambda^*$ has a non-zero coefficient in the aggregated eigen decomposition given by Corollary~\ref{lem:gi:eigenDecompDist} for $(A_1^l)_{ii}$ --- contradicting Lemma~\ref{lem:gi:useMutualEigenValsOnly}. Let $\tilde m$ denote the unique index for which $\tilde\lambda_{\tilde m}=\lambda^*$. By Corollary~\ref{lem:gi:eigenDecompDist}, the coefficient of $\tilde\lambda_{\tilde m}$ in the case of the number of closed walks from node $i$ is $\beta^{ii}_{\tilde m}=\sum_{k:\lambda_k=\tilde\lambda_{\tilde m}}u_{ki}u_{ki}$. Let $m$ be an index such that\ $\lambda_m=\tilde\lambda_{\tilde m}$, and let $i$ be such that $u_{mi}u_{mi}>0$ (there exists at least one such index, since $u_mu_m=1$). Observe that $\beta^{ii}_{\tilde m}\geq u_{mi}u_{mi}>0$ holds, therefore node $i$ meets the requirements, contradicting Lemma~\ref{lem:gi:useMutualEigenValsOnly}.\\

  \textbf{Step 2:} We show that the multiplicities of the eigenvalues are the same in $G_1$ and $G_2$.
  It is sufficient to show that the multiplicities of the non-zero eigenvalues are the same, because this also implies that the multiplicities of zero are the same in $G_1$ and $G_2$.
  Let $\tau^{(k)}_i$ denote the multiplicity of $\tilde\lambda_i$ in $G_k$ $(k=1,2)$, where $\tilde\lambda_1,\dots,\tilde\lambda_p$ are the mutual eigenvalues of $G_1$ and $G_2$.

  As a consequence of Lemma~\ref{lem:gi:eigenDecomp}, the sum of the numbers of closed walks of $G_k$ of length $l$ is $\sum\limits_{j=1}^p\tau^{(k)}_j\tilde\lambda_j^l$, $(l\geq 1)$. Since $G_1$ and $G_2$ are $\mathfrak{w}$-equivalent, Claim~\ref{cl:gi:numOfClosedWalksEqu} applies, thus the sum of the numbers of closed walks of length $l$ in the two graphs are the same for all $l$, that is, $\sum\limits_{j=1}^p\tau^{(1)}_j\tilde\lambda_j^l=\sum\limits_{j=1}^p\tau^{(2)}_j\tilde\lambda_j^l$ for all $l\geq 1$. Subtracting the right-hand side, one gets that
  \begin{equation}
    \sum\limits_{j=1}^p(\tau^{(1)}_j-\tau^{(2)}_j)\tilde\lambda_j^l=0
  \end{equation}
  holds for all $l\geq 1$.
  Consider these equations for $l \in \{1,\dots,p\}$, and let $x_j:=\tau^{(1)}_j-\tau^{(2)}_j$ for all $j\in \{1,\dots,p\}$. Similarly to Step 1, the matrix of this equation system has non-zero determinant, thus the only solution is $x\equiv 0$, which means that $\tau^{(1)}_j=\tau^{(2)}_j$ for all $j \in \{1,\dots,p\}$.
  Therefore, each non-zero eigenvalue has the same multiplicities in the two graphs, hence the multiplicities of 0 are the same, as well, which completes the proof.
\end{proof}

Note that $\mathfrak{w}$-equivalence distinguishes more graph pairs than the spectra. For example, a cycle of length $6$ and two disjoint triangles are well-known cospectral graphs, but they are clearly not $\mathfrak{w}$-equivalent.

In what follows, we show that even if two non-isomorphic graphs are cospectral, they may not be $\mathfrak{w}$-equivalent if their eigenspaces are different enough:
\begin{theorem}
  Let $G_1$ and $G_2$ be cospectral graphs with single eigenvalues. If one of the eigenmatrices has a row that contains non-zero elements only, then the two graphs are $\mathfrak{w}$-equivalent if and only if the graphs are isomorphic.
\end{theorem}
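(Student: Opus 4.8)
The plan is to prove the nontrivial direction: assuming $G_1$ and $G_2$ are $\mathfrak{w}$-equivalent, I would deduce that they are isomorphic (the converse is exactly the earlier Claim that isomorphic graphs are $\mathfrak{w}$-equivalent). First I would relabel the nodes so that $\mathfrak{w}_{G_1}(i)\peq\mathfrak{w}_{G_2}(i)$ for every $i$; relabeling only permutes the rows of an eigenmatrix, so the property that some row of $U$ (the eigenmatrix of $G_1$) is all-nonzero is preserved, and after the relabeling we may still fix a node $i^*$ with $u_{ki^*}\neq0$ for all $k$ (if the all-nonzero row belongs to $G_2$ instead, swap the roles of the two graphs). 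Since $G_1,G_2$ are cospectral we have $\mu_k=\lambda_k$ for all $k$, and since the eigenvalues are single the common spectrum consists of $n$ pairwise distinct reals and every eigenvector of either graph is unique up to sign.

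The key point is that $\mathfrak{w}$-equivalence is needed only at the node $i^*$. From $\mathfrak{w}_{G_1}(i^*)\peq\mathfrak{w}_{G_2}(i^*)$ I obtain a permutation $\pi$ of $\{1,\dots,n\}$ with $(A_1^l)_{i^*j}=(A_2^l)_{i^*\pi(j)}$ for all $j$ and all $l\geq 0$; evaluating at $l=0$ (where $A^0=I$) forces $\pi(i^*)=i^*$. Expanding both sides with the eigen decomposition of Lemma~\ref{lem:gi:eigenDecomp} gives, for each fixed $j$,
\[
  \sum_{k=1}^n\bigl(u_{ki^*}u_{kj}-v_{ki^*}v_{k\pi(j)}\bigr)\lambda_k^l=0\qquad(l=0,1,\dots,n-1).
\]
As the $\lambda_k$ are pairwise distinct, the coefficient matrix $(\lambda_k^l)_{l,k}$ is a nonsingular Vandermonde matrix, so every coefficient vanishes:
\[
  u_{ki^*}\,u_{kj}=v_{ki^*}\,v_{k\pi(j)}\qquad\text{for all }j,k.
\]

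Now I would invoke the all-nonzero row. Substituting $j=i^*$ and using $\pi(i^*)=i^*$ gives $v_{ki^*}^2=u_{ki^*}^2\neq0$, hence $v_{ki^*}=\pm u_{ki^*}$ for every $k$. Because the eigenvalues are single, I may flip the sign of each eigenvector $v_k$ of $G_2$ so that the resulting eigenmatrix $\tilde V$ satisfies $\tilde v_{ki^*}=u_{ki^*}$; the displayed identity is insensitive to these sign changes (the products on its right-hand side are sign-invariant), so it persists with $\tilde v$ in place of $v$. Dividing by $u_{ki^*}\neq0$ then yields $u_{kj}=\tilde v_{k\pi(j)}$ for all $j,k$, i.e. row $j$ of $U$ equals row $\pi(j)$ of $\tilde V$, so $\tilde V=PU$ for the permutation matrix $P$ of $\pi$. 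Finally, with $\Lambda=\diag(\lambda_1,\dots,\lambda_n)$ and using $A_1=U\Lambda U^T$, $A_2=\tilde V\Lambda\tilde V^T$ (valid for any choice of eigenmatrix), we get $A_2=PU\Lambda U^TP^T=PA_1P^T$, so $\pi$ is an isomorphism between $G_1$ and $G_2$.

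The two slightly delicate points I expect are: (i) the reduction to a single node together with the Vandermonde step, which genuinely relies on all $n$ eigenvalues being distinct (with single eigenvalues there is no need to separate out a possible zero eigenvalue, since the system can range over $l=0,\dots,n-1$); and (ii) the sign bookkeeping — making precise that the identity $u_{ki^*}u_{kj}=v_{ki^*}v_{k\pi(j)}$ survives the normalization $\tilde v_{ki^*}=u_{ki^*}$ and that the subsequent division by $u_{ki^*}$ is legitimate, which is precisely where the all-nonzero row hypothesis is used. Everything else is routine linear algebra.
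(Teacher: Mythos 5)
Your proof is correct and follows essentially the same route as the paper's: extract from $\mathfrak{w}$-equivalence at the distinguished node the identity $u_{ki^*}u_{kj}=v_{k\pi(i^*)}v_{k\pi(j)}$, use the all-nonzero row to fix signs (your sign-normalized $\tilde V$ is exactly the paper's diagonal matrix $S$), and conclude $\Pi U=VS$, hence $\Pi A_1\Pi^T=A_2$. The only difference is cosmetic: you make explicit the Vandermonde argument that the paper compresses into a single ``thus,'' and you verify $A_2=PA_1P^T$ directly rather than citing the equivalence with $\Pi U=VS$.
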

\begin{proof}
  Clearly, it suffices to show that if $G_1$ and $G_2$ are $\mathfrak{w}$-equivalent, then they are isomorphic.
  To this end, we show a permutation matrix $\Pi$ such that $\Pi A_1\Pi^T=A_2$.
  Recall that $U=(u_1,\dots,u_n)$ and $V=(v_1,\dots,v_n)$ denote the eigenmatrices of $G_1$ and $G_2$, respectively, that is, $A_1=U\diag(\lambda_1,\dots,\lambda_n)U^T$ and $A_2=V\diag(\lambda_1,\dots,\lambda_n)V^T$. A permutation matrix $\Pi$ corresponds to an isomorphism if and only if $\Pi U\diag(\lambda_1,\dots,\lambda_n)U^T\Pi^T=V\diag(\lambda_1,\dots,\lambda_n)V^T$, which in turn holds if and only if $\Pi U=VS$ for a matrix $S=\diag(\sigma_1,\dots,\sigma_n)$, where $\sigma_i\in \{-1,1\}$. Therefore, it is sufficient to show such matrices $\Pi$ and $S$.

  Without loss of generality, assume that row $i^*$ of $U$ consists of non-zero elements.
  By the definition of $\mathfrak{w}$-equivalence, there is a permutation $\pi$
  such that $(A_1^l)_{i^*j}=(A_2^l)_{\pi(i^*)\pi(j)}$, thus
  $u_{ki^*}u_{kj}=v_{k\pi(i^*)}v_{k\pi(j)}$ for all $j \in \{1,\dots,n\}$. Clearly, row $\pi(i^*)$ of $V$ consists of non-zero elements. Let
  $S:=\diag(\sigma_1,\dots,\sigma_n)$, where
  $\sigma_k:=\sign(u_{ki^*})\sign(v_{k\pi(i^*)})\in\{-1,1\}$, and let
  $\Pi=\begin{cases}
    1 & \text{if } \pi(j)=i,\\
    0 & \text{otherwise.}\\
  \end{cases}$
  In what follows, we argue that $\Pi U = VS$.
  The values in position $(j,k)$ of the left and the right side are $u_{k\pi^{-1}(j)}$ and $\sigma_k v_{kj}$, respectively.
  Observe that $u_{k\pi^{-1}(j)}=\sigma_k v_{kj}$ for all $j,k\in\{1,\dots,n\}$ if and only if $u_{kj}=\sigma_k v_{k\pi(j)}$ for all $j,k\in\{1,\dots,n\}$, which in turn is equivalent to $u_{ki^*}u_{kj}=v_{k\pi(i^*)}v_{k\pi(j)}$ for all $j,k\in\{1,\dots,n\}$, because $u_{ki^*}=\sigma_k v_{k\pi(i^*)}$ and $\sigma_k^2=1$, and $\pi$ was chosen such that $u_{ki^*}u_{kj}=v_{k\pi(i^*)}v_{k\pi(j)}$ for all $j,k\in\{1,\dots,n\}$.
  Hence $\Pi U = VS$, which completes the proof of the theorem.
\end{proof}

The next lemma will be useful in the proofs of Theorem~\ref{thm:gi:diffEigenVecMultisets} and Theorem~\ref{thm:gi:perronEigenvAreDiff}.

\begin{lemma}\label{lem:gi:aboluseValDiff}
  If $G_1$ and $G_2$ are $\mathfrak{w}$-equivalent graphs and the nodes of $G_2$ are re-indexed in such a way that $\mathfrak{w}_{G_1}(i)\peq \mathfrak{w}_{G_2}(i)$ for all $i\in\{1,\dots,n\}$, then for any single eigenvalue, the corresponding normalized eigenvectors in the two graphs are element-wise the same up to sign.
\end{lemma}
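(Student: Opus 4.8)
The plan is to recover the squared entries of a single (multiplicity-one) eigenvector from the numbers of closed walks, which $\mathfrak{w}$-equivalence leaves invariant.

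First I would invoke Theorem~\ref{thm:gi:diffSpectrumDiffWalks}: $\mathfrak{w}$-equivalent graphs are cospectral, so every eigenvalue of $G_1$ occurs in $G_2$ with the same multiplicity. In particular a single eigenvalue $\lambda$ of $G_1$ is also a single eigenvalue of $G_2$; let $u$ and $v$ be the corresponding normalized eigenvectors of $A_1$ and $A_2$ (each unique up to a global sign). The goal is then to prove $u_i^2=v_i^2$ for every $i$, which is precisely the asserted element-wise equality up to sign.

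The heart of the argument is Claim~\ref{cl:gi:numOfClosedWalksEqu}: by the re-indexing hypothesis $\mathfrak{w}_{G_1}(i)\peq\mathfrak{w}_{G_2}(i)$, the number of closed walks of length $l$ from $i$ is the same in both graphs, i.e.\ $(A_1^l)_{ii}=(A_2^l)_{ii}$ for all $i$ and all $l\ge 0$. I would expand both sides using the aggregated eigen decomposition of Corollary~\ref{lem:gi:eigenDecompDist} over the common distinct non-zero eigenvalues $\tilde\lambda_1,\dots,\tilde\lambda_p$ (common by Step~1 of the proof of Theorem~\ref{thm:gi:diffSpectrumDiffWalks}), obtaining for each fixed $i$ a linear relation $\sum_{m=1}^p(\alpha_m^i-\beta_m^i)\tilde\lambda_m^{\,l}=0$ valid for all $l\ge 1$, where $\alpha_m^i$ and $\beta_m^i$ are the coefficients of $\tilde\lambda_m$ for $G_1$ and $G_2$. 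Taking $l\in\{1,\dots,p\}$ gives a Vandermonde-type system with non-zero determinant --- the very same computation already carried out twice in the proof of Theorem~\ref{thm:gi:diffSpectrumDiffWalks} --- so $\alpha_m^i=\beta_m^i$ for every $m$ and every $i$.

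Finally I would read off the eigenvector entries. If $\lambda\ne 0$, say $\lambda=\tilde\lambda_{m_0}$, then because $\lambda$ has multiplicity one the coefficient $\alpha_{m_0}^i=\sum_{k:\lambda_k=\tilde\lambda_{m_0}}u_{ki}^2$ is the single term $u_i^2$, and likewise $\beta_{m_0}^i=v_i^2$; hence $u_i^2=v_i^2$ for all $i$. If instead the single eigenvalue is $\lambda=0$, the aggregated decomposition no longer sees it, so I would use that the eigenmatrices are orthogonal and therefore have unit-norm rows: $u_i^2+\sum_{m=1}^p\alpha_m^i=1$ and $v_i^2+\sum_{m=1}^p\beta_m^i=1$, and subtracting these (using $\alpha_m^i=\beta_m^i$) again yields $u_i^2=v_i^2$. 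The only points that need care are the multiplicity bookkeeping --- ensuring that a single eigenvalue contributes exactly one squared entry --- and this separate treatment of the eigenvalue $0$; the Vandermonde inversion itself is routine and already available in the paper.
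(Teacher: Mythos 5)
Your proof is correct and takes essentially the same route as the paper's: use Theorem~\ref{thm:gi:diffSpectrumDiffWalks} for cospectrality, equate the closed-walk counts $(A_1^l)_{ii}=(A_2^l)_{ii}$ coming from the re-indexed $\mathfrak{w}$-equivalence, expand both sides spectrally, and match the coefficient of each single eigenvalue. The paper compresses the coefficient-matching into a single ``which in turn implies'' step, whereas you make the Vandermonde argument and the separate treatment of a single eigenvalue $0$ (via unit-norm rows of the eigenmatrices) explicit --- added detail, not a different argument.
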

\begin{proof}
  By Theorem~\ref{thm:gi:diffSpectrumDiffWalks}, one gets that $G_1$ and $G_2$ are cospectral. By definition and by Claim~\ref{lem:gi:eigenDecomp}, $\restr{\mathfrak{w}_{G_1}}{n+1}(i)\peq \restr{\mathfrak{w}_{G_2}}{n+1}(i)$
  implies that
  \begin{equation}
    \sum\limits_{k=1}^nu_{ki}u_{ki}\lambda_k^l=(A_1^l)_{ii}=(A_2^l)_{ii}
    =\sum\limits_{k=1}^nv_{ki}v_{ki}\lambda_k^l
  \end{equation}
  for all $i\in\{1,\dots,n\}$ and $l\geq 0$, which in turn implies that $u_{ki}u_{ki}=v_{ki}v_{ki}$ holds, where $i\in\{1,\dots,n\}$ is arbitrary and $k$ is such that $\lambda_k$ is a single eigenvalue. That is, $|u_{ki}|=|v_{ki}|$ if $i\in\{1,\dots,n\}$ and $k$ is such that $\lambda_k$ is a single eigenvalue, which completes the proof.
\end{proof}

\begin{theorem}\label{thm:gi:diffEigenVecMultisets}
  Let $G_1$ and $G_2$ be cospectral with single eigenvalues. If $\{u_{ik} : k\in\{1,\dots,n\}\}^\#\neq\{-u_{ik} : k\in\{1,\dots,n\}\}^\#$ and $\{v_{ik} : k\in\{1,\dots,n\}\}^\#\neq\{-v_{ik} : k\in\{1,\dots,n\}\}^\#$ for all $i\in\{1,\dots,n\}$, then the two graphs are $\mathfrak{w}$-equivalent if and only if they are isomorphic.
\end{theorem}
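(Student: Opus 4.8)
The plan is to adapt the argument used for the previous theorem. After re-indexing the nodes of $G_2$ so that the identity mapping realizes the $\mathfrak{w}$-equivalence --- that is, $\mathfrak{w}_{G_1}(i)\peq\mathfrak{w}_{G_2}(i)$ for every $i$ --- it suffices to exhibit signs $\sigma_1,\dots,\sigma_n\in\{-1,1\}$ for which $U=VS$ with $S:=\diag(\sigma_1,\dots,\sigma_n)$. Indeed, $G_1$ and $G_2$ are cospectral by Theorem~\ref{thm:gi:diffSpectrumDiffWalks} and all their eigenvalues are single, so $A_1=U\diag(\lambda_1,\dots,\lambda_n)U^T$ and $A_2=V\diag(\lambda_1,\dots,\lambda_n)V^T$; together with $S\diag(\lambda_1,\dots,\lambda_n)S=\diag(\lambda_1,\dots,\lambda_n)$, the identity $U=VS$ then gives $A_1=A_2$, i.e.\ the re-indexed identity is an isomorphism. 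The reverse implication is the already-established fact that isomorphic graphs are $\mathfrak{w}$-equivalent.

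First I would record, for every node $i$, a permutation $\rho_i$ of $\{1,\dots,n\}$ with $\rho_i(i)=i$ and $u_{ki}u_{kj}=v_{ki}v_{k\rho_i(j)}$ for all $j,k$. This follows from $\mathfrak{w}_{G_1}(i)\peq\mathfrak{w}_{G_2}(i)$: as in the proof of Claim~\ref{cl:gi:numOfClosedWalksEqu}, the $l=0$ column forces the underlying permutation to fix the $i$-th coordinate, so $(A_1^l)_{ij}=(A_2^l)_{i\rho_i(j)}$ for all $l\ge0$; expanding both sides by Lemma~\ref{lem:gi:eigenDecomp} and using that $\lambda_1,\dots,\lambda_n$ are pairwise distinct (all eigenvalues being single) yields the claimed coefficientwise equality. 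Taking $j=i$ gives $u_{ki}^2=v_{ki}^2$ for all $i,k$, which is exactly Lemma~\ref{lem:gi:aboluseValDiff} in this situation.

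Next I would fix the column signs. Fix an eigenvalue index $k$ and pick a node $i$ with $u_{ki}\ne0$, which exists because $u_k$ is a unit vector. The relation for $\rho_i$ gives $u_{kj}=(v_{ki}/u_{ki})\,v_{k\rho_i(j)}$ for every $j$, with $v_{ki}/u_{ki}\in\{-1,1\}$ since $|u_{ki}|=|v_{ki}|$; letting $j$ run over all nodes shows that the multiset $\{u_{kj} : j\}^\#$ equals $(v_{ki}/u_{ki})\{v_{kj} : j\}^\#$. Since this multiset is not negation-invariant by hypothesis, the value $\sigma_k:=v_{ki}/u_{ki}$ is independent of the chosen pivot node $i$. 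Hence $u_{ki}=\sigma_k v_{ki}$ at every node $i$ with $u_{ki}\ne0$; and where $u_{ki}=0$ we also have $v_{ki}=0$ by $u_{ki}^2=v_{ki}^2$, so $u_{ki}=\sigma_k v_{ki}$ there as well. Therefore $U=VS$, which completes the argument.

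The one genuinely delicate point is the well-definedness of the signs $\sigma_k$: a priori the permutations $\rho_i$ attached to distinct nodes are unrelated, and two choices of the pivot node could suggest opposite signs for the same eigenvector. It is precisely there --- and only there --- that the hypothesis that no eigenvector of $G_1$ (equivalently of $G_2$) has a negation-invariant multiset of entries enters, which also explains why it must be imposed at every node rather than at a single one. Everything else reduces to bookkeeping once the permutations $\rho_i$ and the entrywise equalities $|u_{ki}|=|v_{ki}|$ are in hand.
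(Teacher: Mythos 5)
Your proof is correct and follows essentially the same route as the paper: you extract the per-node permutation coming from the $\mathfrak{w}$-equivalence (fixing the node via the $l=0$ column), match coefficients in the eigen decomposition using that all eigenvalues are simple to get $u_{ki}u_{kj}=v_{ki}v_{k\rho_i(j)}$ and $|u_{ki}|=|v_{ki}|$ (the paper's Lemma~\ref{lem:gi:aboluseValDiff}), and then invoke the non-negation-invariance of the eigenvector entry multisets to fix a consistent sign for each eigenvector. The only difference is cosmetic: the paper normalizes each eigenvector to be lexicographically larger than its negation and derives a contradiction from a flipped entry, whereas you define $\sigma_k$ via a pivot node and check pivot-independence --- the same use of the hypothesis, ending with $U=VS$ and hence an isomorphism.
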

\begin{proof}
  If $G_1$ and $G_2$ are isomorphic, then they are clearly $\mathfrak{w}$-equivalent. To show the other direction, let $G_1$ and $G_2$ be $\mathfrak{w}$-equivalent. Let $w_1,w_2\in\R^n$ be vectors for which $\{w_{1k} : k\in\{1,\dots,n\}\}^\#\neq\{w_{2k} : k\in\{1,\dots,n\}\}^\#$, and let $w_1\lexsucc w_2$ mean that after non-increasingly ordering their coordinates, $w_1$ is lexicographically larger than $w_2$.

  Without loss of generality, one can assume that $u_i\lexsucc-u_i$ and $v_i\lexsucc-v_i$ holds for all $i\in\{1,\dots,n\}$. Let $\pi:V_1\to V_2$ be a $\mathfrak{w}$-equivalence. By Lemma~\ref{lem:gi:aboluseValDiff}, $|u_{ki}|=|v_{k\pi(i)}|$ holds for all $k\in\{1,\dots,n\}$ and $i\in\{1,\dots,n\}$.
  By contradiction, suppose that there is an index $k^*$ and $i^*$ such that $u_{k^*i^*}\neq v_{k^*\pi(i^*)}$. This means that $u_{k^*i^*}=-v_{k^*\pi(i^*)}$. Let $\pi ^*$ denote the bijection of node $i^*$, for which $u_{ki^*}u_{kj}=v_{k\pi ^*(i^*)}v_{k\pi ^*(j)}$ holds for all $j,k\in\{1,\dots,n\}$, regardless of whether 0 is an eigenvalue. Clearly, $\pi^*$ can be prescribed to satisfy $\pi^*(i^*)=\pi(i^*)$. Thus one gets that $u_{k^*i^*}u_{k^*j}=v_{k^*\pi ^*(i^*)}v_{k^*\pi ^*(j)}$ for all $j\in\{1,\dots,n\}$, which implies $-u_{k^*}=\pi v_{k^*}$. But then $u_{k^*}\lexsucc -u_{k^*} = \pi^* v_{k^*}\lexsucc -v_{k^*} = {\pi^*}^{-1}u_{k^*}$, therefore $G_1$ and $G_2$ are indeed isomorphic.
\end{proof}

\begin{definition}
  A graph is \emph{friendly}~\cite{Aflalo2942} if each of its eigenvalues has multiplicity one and $\1 U$ has no zero
  coordinates, where $U$ is the eigenmatrix of the graph.
\end{definition}

\begin{corollary}
  Two friendly graphs $G_1$ and $G_2$ are isomorphic if and only if they are $\mathfrak{w}$-equivalent.
\end{corollary}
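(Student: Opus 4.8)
The plan is to derive the corollary directly from Theorem~\ref{thm:gi:diffEigenVecMultisets}, using Theorem~\ref{thm:gi:diffSpectrumDiffWalks} to supply cospectrality and observing that the definition of \emph{friendly} is exactly tailored to the multiset hypothesis of Theorem~\ref{thm:gi:diffEigenVecMultisets}.

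The easy direction is immediate: if $G_1$ and $G_2$ are isomorphic, then they are $\mathfrak{w}$-equivalent by the claim in Section~\ref{sec:gi:CountingWalks} (an isomorphism relabels the nodes so that the walk-count matrices coincide, hence are permutation-equal). For the converse, suppose $G_1$ and $G_2$ are $\mathfrak{w}$-equivalent. By Theorem~\ref{thm:gi:diffSpectrumDiffWalks} they are cospectral, and since both are friendly, each eigenvalue of each graph has multiplicity one; thus $G_1$ and $G_2$ are cospectral with single eigenvalues, as required by Theorem~\ref{thm:gi:diffEigenVecMultisets}. It remains to check the multiset conditions. The key elementary observation is that if a vector $w\in\R^n$ satisfies $\{w_k : k\in\{1,\dots,n\}\}^\#=\{-w_k : k\in\{1,\dots,n\}\}^\#$, then $\sum_k w_k=\sum_k(-w_k)=-\sum_k w_k$, so $\sum_k w_k=0$; contrapositively, a nonzero coordinate sum forces the multiset of coordinates to differ from the multiset of their negatives. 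Now the $i^{\text{th}}$ coordinate of $\1 U$ equals $\sum_{k}u_{ik}$, the sum of the entries of the eigenvector $u_i$, and friendliness of $G_1$ states precisely that this is nonzero for every $i$; hence $\{u_{ik} : k\in\{1,\dots,n\}\}^\#\neq\{-u_{ik} : k\in\{1,\dots,n\}\}^\#$ for all $i$. Applying the same reasoning to $G_2$ and $V$ yields $\{v_{ik} : k\in\{1,\dots,n\}\}^\#\neq\{-v_{ik} : k\in\{1,\dots,n\}\}^\#$ for all $i$. All hypotheses of Theorem~\ref{thm:gi:diffEigenVecMultisets} are therefore met, and we conclude that $G_1$ and $G_2$ are isomorphic.

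I do not anticipate a genuine obstacle here; the only points requiring care are the index bookkeeping that identifies the $i^{\text{th}}$ coordinate of $\1 U$ with the coordinate sum of the eigenvector $u_i$, and the remark that the ``single eigenvalue'' and ``cospectral'' prerequisites of Theorem~\ref{thm:gi:diffEigenVecMultisets} are furnished, respectively, by friendliness and by Theorem~\ref{thm:gi:diffSpectrumDiffWalks} (together with $\mathfrak{w}$-equivalence).
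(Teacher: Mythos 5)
Your proposal is correct and follows essentially the same route as the paper: the paper's proof likewise deduces the multiset conditions $\{u_{ik}\}^\#\neq\{-u_{ik}\}^\#$ and $\{v_{ik}\}^\#\neq\{-v_{ik}\}^\#$ from $\1 u_i\neq 0$ and $\1 v_i\neq 0$ and then applies Theorem~\ref{thm:gi:diffEigenVecMultisets}. You merely spell out the (correct) elementary sum argument and the cospectrality supplied by Theorem~\ref{thm:gi:diffSpectrumDiffWalks}, which the paper leaves implicit.
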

\begin{proof}
  The assumptions, $\1 u_i\neq 0$ and $\1 v_i\neq 0$, imply that $\{u_{ik} : k\in\{1,\dots,n\}\}^\#\neq\{-u_{ik} : k\in\{1,\dots,n\}\}^\#$ and $\{v_{ik} : k\in\{1,\dots,n\}\}^\#\neq\{-v_{ik} : k\in\{1,\dots,n\}\}^\#$ for all $i\in\{1,\dots,n\}$, thus Theorem~\ref{thm:gi:diffEigenVecMultisets} can be applied.
\end{proof}

Recall the following well-known theorem.
\begin{theorem}[Perron-Frobenius]\label{thm:gi:PerronFrobEigenVal}
  Let the graph $G$ be connected and have at least two nodes.
  The largest eigenvalue $\lambda_1$ of the adjacency matrix of $G$ is positive, has multiplicity one, and $\lambda_1\geq|\lambda|$ for every eigenvalue $\lambda$. In addition, the eigenvector corresponding to $\lambda_1$ can be chosen strictly positive.
\end{theorem}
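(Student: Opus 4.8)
The plan is to use the fact that the adjacency matrix $A$ is a real symmetric matrix with nonnegative entries, so that the largest eigenvalue has the Rayleigh‑quotient characterization $\lambda_1=\max_{\|x\|=1}x^{T}Ax$ (and, as the paper already records, all eigenvalues are real and ordered $\lambda_1\ge\cdots\ge\lambda_n$). Given this, I would run the classical argument in four moves: (i) $\lambda_1>0$; (ii) $\lambda_1\ge|\lambda|$ for every eigenvalue $\lambda$; (iii) $\lambda_1$ admits a nonnegative eigenvector; (iv) every nonnegative eigenvector for $\lambda_1$ is in fact strictly positive, from which both the strict positivity of the Perron vector and multiplicity one follow.

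For (i): since $G$ is connected with at least two nodes it has a (non‑loop) edge $\{a,b\}$, so plugging the unit vector $x$ with $x_a=x_b=1/\sqrt2$ and all other coordinates $0$ into the Rayleigh quotient gives $x^{T}Ax\ge 2A_{ab}\cdot\tfrac12=1$, hence $\lambda_1\ge 1>0$. For (ii): if $Ay=\lambda y$ with $y\ne 0$, let $|y|$ be the coordinatewise absolute value; since every entry of $A$ is nonnegative, $|y|^{T}A|y|=\sum_{i,j}A_{ij}|y_i||y_j|\ge\bigl|\sum_{i,j}A_{ij}y_iy_j\bigr|=|\lambda|\,\|y\|^{2}=|\lambda|\,\||y|\|^{2}$, so the Rayleigh quotient of $|y|$ is at least $|\lambda|$, and therefore $\lambda_1\ge|\lambda|$. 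For (iii): take a unit maximizer $x$ of the Rayleigh quotient; the same inequality shows $|x|$ is also a maximizer, hence an eigenvector, which gives a vector $z\ge 0$, $z\ne 0$, with $Az=\lambda_1 z$.

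For (iv): suppose $z\ge 0$, $z\ne 0$, $Az=\lambda_1z$ and $z_i=0$ for some $i$. Then $0=\lambda_1 z_i=(Az)_i=\sum_j A_{ij}z_j$, and since every term is nonnegative we get $z_j=0$ for every neighbour $j$ of $i$; propagating along paths and using connectedness of $G$ forces $z\equiv 0$, a contradiction (a loop at $i$ contributes only the harmless term $A_{ii}z_i=0$, so loops cause no trouble here). Hence $z>0$, so the eigenvector corresponding to $\lambda_1$ can be chosen strictly positive. Multiplicity one follows in two steps: first, any two strictly positive eigenvectors $z,z'$ for $\lambda_1$ are proportional, because for $c:=\min_k z_k/z'_k$ the vector $z-cz'$ is a nonnegative eigenvector with a zero coordinate, so by the previous sentence it must vanish, i.e. $z=cz'$. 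If the $\lambda_1$‑eigenspace had dimension $\ge 2$, it would contain $z>0$ together with some $w$ that is not a scalar multiple of $z$; then $z+tw$ is a strictly positive eigenvector for all sufficiently small $t$, and applying the proportionality just shown to $z+t_1w$ and $z+t_2w$ for two distinct small values $t_1\ne t_2$ forces $w$ to be a scalar multiple of $z$ — a contradiction. So the $\lambda_1$‑eigenspace is one‑dimensional.

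I expect the only genuinely delicate point to be step (iv): one has to be careful to phrase the connectedness propagation for the underlying graph (noting that loops contribute no information but do no damage), and in the multiplicity argument one must dispose cleanly of the degenerate cases in the proportionality equation $z+t_1w=c(z+t_2w)$ (the subcase $c=1$ forcing $w=0$, the subcase $c\ne1$ forcing $z$ to be a multiple of $w$). Everything else is routine once the Rayleigh characterization and the nonnegativity of $A$ are in hand; since this is a classical result, it could alternatively simply be cited.
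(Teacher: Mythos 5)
Your proof is correct, but note that the paper does not prove this statement at all: it is explicitly ``recalled'' as the classical Perron--Frobenius theorem (and then only used, e.g.\ for Theorem~\ref{thm:gi:perronEigenvAreDiff}), so there is no paper proof to compare against and citing the literature, as you yourself suggest at the end, would have been entirely adequate. Your self-contained argument is the standard one for symmetric nonnegative matrices via the Rayleigh quotient $\lambda_1=\max_{\|x\|=1}x^TAx$, and all four steps check out: the edge trick gives $\lambda_1\geq 1>0$; the inequality $|x|^TA|x|\geq |x^TAx|$ gives both $\lambda_1\geq|\lambda|$ and a nonnegative maximizer, which is an eigenvector because Rayleigh maximizers of a symmetric matrix are; the zero-set propagation along neighbours uses connectedness correctly (and your remark that loops are harmless is consistent with the paper's convention of allowing loops); and the multiplicity-one argument via $c=\min_k z_k/z'_k$ and the two-parameter perturbation $z+tw$ disposes of the degenerate cases as you indicate. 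The only implicit ingredient you lean on without proof is the fact that a unit-norm maximizer of the Rayleigh quotient is a $\lambda_1$-eigenvector, which is classical and acceptable at this level of detail.
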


The positive normalized eigenvector corresponding to the largest positive eigenvalue in Theorem~\ref{thm:gi:PerronFrobEigenVal} will be referred to as the \emph{Perron-Frobenius eigenvector} of $G$.
The following theorem is an immediate consequence of Lemma~\ref{lem:gi:aboluseValDiff}.

\begin{theorem}\label{thm:gi:perronEigenvAreDiff}
  Let $G_1$ and $G_2$ be connected cospectral graphs on at least two nodes. If the Perron-Frobenius eigenvectors of $G_1$ and $G_2$ are different, then $G_1$ and $G_2$ are not $\mathfrak{w}$-equivalent.

\end{theorem}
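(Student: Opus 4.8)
The plan is to prove the contrapositive by a short contradiction argument built directly on Lemma~\ref{lem:gi:aboluseValDiff}. Suppose $G_1$ and $G_2$ are $\mathfrak{w}$-equivalent. By the definition of $\mathfrak{w}$-equivalence we may re-index the nodes of $G_2$ so that $\mathfrak{w}_{G_1}(i)\peq\mathfrak{w}_{G_2}(i)$ for every $i\in\{1,\dots,n\}$; Theorem~\ref{thm:gi:diffSpectrumDiffWalks} then gives that $G_1$ and $G_2$ are cospectral, so in particular they share the same largest eigenvalue $\lambda_1$. Since both graphs are connected and have at least two nodes, Theorem~\ref{thm:gi:PerronFrobEigenVal} tells us that $\lambda_1$ is a single eigenvalue of both $A_1$ and $A_2$, and that the eigenmatrices $U$ and $V$ may be chosen so that their first columns $u_1$ and $v_1$ are exactly the normalized, strictly positive Perron--Frobenius eigenvectors of $G_1$ and $G_2$.

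Next I would invoke Lemma~\ref{lem:gi:aboluseValDiff} with the single eigenvalue $\lambda_1$: it yields $|u_{1i}|=|v_{1i}|$ for all $i\in\{1,\dots,n\}$, i.e.\ $u_1$ and $v_1$ agree coordinate-wise up to sign. But both $u_1$ and $v_1$ have strictly positive entries, so $u_{1i}=|u_{1i}|=|v_{1i}|=v_{1i}$ for every $i$; hence the Perron--Frobenius eigenvectors of $G_1$ and of the re-indexed $G_2$ coincide. This contradicts the assumption that they are different, which proves the theorem.

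The only point that needs care --- rather than a genuine obstacle --- is the bookkeeping around the re-indexing: the hypothesis ``the Perron--Frobenius eigenvectors are different'' has to be read up to the relabeling of the node sets, since $\mathfrak{w}$-equivalence is itself defined only up to relabeling, and re-indexing a graph merely permutes (without otherwise changing) its strictly positive normalized Perron--Frobenius eigenvector. Once this is understood, every other ingredient --- cospectrality, the simplicity and positivity of the top eigenvalue and its eigenvector, and the coordinate-wise equality up to sign --- is supplied verbatim by Theorem~\ref{thm:gi:diffSpectrumDiffWalks}, Theorem~\ref{thm:gi:PerronFrobEigenVal}, and Lemma~\ref{lem:gi:aboluseValDiff}, so the statement really is an immediate consequence of the lemma.
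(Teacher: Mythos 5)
Your proof is correct and follows exactly the route the paper intends: the paper states Theorem~\ref{thm:gi:perronEigenvAreDiff} as an immediate consequence of Lemma~\ref{lem:gi:aboluseValDiff} without spelling out details, and your argument (cospectrality via Theorem~\ref{thm:gi:diffSpectrumDiffWalks}, simplicity and positivity of the top eigenpair via Theorem~\ref{thm:gi:PerronFrobEigenVal}, then $|u_{1i}|=|v_{1i}|$ forcing $u_1=v_1$) is precisely that derivation. Your remark about reading ``different'' up to the relabeling inherent in $\mathfrak{w}$-equivalence is a fair and correct clarification of the statement, not a gap.
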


The Perron-Frobenius eigenvector of a graph determines the \emph{invariant distribution} with respect to infinite random walks, therefore the previous theorem states that if the invariant distributions of two graphs are different, then they are not $\mathfrak{w}$-equivalent.

\section{Labeling by Perfect Aggregation}\label{sec:gi:StructureOfWalks}
This section introduces a refinement of $\mathfrak{w}$-labeling.
First, let
\begin{equation*}
\mathfrak{s}^1_G(i_1)_{il}:=
\begin{cases}
    (\emptyset,\{\delta_{ii_1}\})                                                     & \text{if } l=0,\\
    (\mathfrak{s}^1_G(i_1)_{i,l-1},\{\mathfrak{s}^1_G(i_1)_{i',l-1}:i'\in N_G(i)\}^\#)& \text{otherwise}\\
\end{cases}
\end{equation*}
for a node $i_1\in V$ and $i\in\{1,\dots,n\}$, $l\in\Z_+$. Essentially, we collect the values of the neighbors into a multiset, instead of adding them together as we did in the case of $\mathfrak{w}$-labels.
Clearly, generating these multisets preserves no less information than adding the values together --- which may give the same sum even if the summands were different.
In fact, we gather strictly more information with $\mathfrak{s}^1$ than with $\mathfrak{w}$, see Figure~\ref{fig:gi:w1IsWeakerThanS1} for an example.

We go even further, and give a generalization of $\mathfrak{s}^1$, which initializes the first iteration, that is the case $l=0$, in a slightly more complicated way.

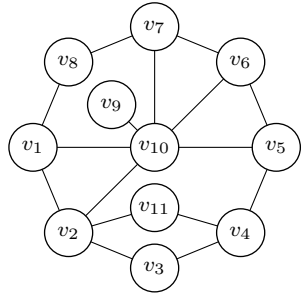
\begin{figure}
  \centering
  \begin{tikzpicture}[scale=.8]
    \SetVertexMath
    \tikzset{VertexStyle/.append style = {minimum size = 18pt,inner sep=0pt}}
    \newcounter{cx}
    \newcounter{prevCx}
    \forLoop{1}{8}{cx}{
      \begin{scope}[rotate=360/8*(\thecx-1)]
        \Vertex[x=-2,y=0,L=v_{\thecx}]{v_\thecx}
      \end{scope}
    }
    \forLoop{2}{8}{cx}{
      \setcounter{prevCx}{
        \numexpr(\thecx)-1\relax
      }
      \draw[] (v_\thecx) -- (v_\theprevCx);
    }
    \Vertex[x=0,y=0,L=v_{10}]{v_10}
    \draw[] (v_1) -- (v_10);
    \draw[] (v_2) -- (v_10);
    \draw[] (v_5) -- (v_10);
    \draw[] (v_6) -- (v_10);
    \draw[] (v_7) -- (v_10);
    \draw[] (v_1) -- (v_8);

    \begin{scope}[scale=.5]
      \begin{scope}[rotate=360/8*(7)]
        \Vertex[x=-2,y=0,L=v_{9}]{v_9}
      \end{scope}
      \begin{scope}[rotate=360/8*(2)]
        \Vertex[x=-2,y=0,L=v_{11}]{v_11}
      \end{scope}
    \end{scope}

    \draw[] (v_9) -- (v_10);
    \draw[] (v_2) -- (v_11);
    \draw[] (v_4) -- (v_11);

    % \draw[] (v4) -- (v4e);

  \end{tikzpicture}
  \caption{An example when the $\mathfrak{s}^k$-labeling is strictly stronger than the $\mathfrak{w}$-labeling. Here, $\mathfrak{w}_G(v_1)\peq\mathfrak{w}_G(v_5)$, but $\mathfrak{s}_G^{1}(v_1)\npeq\mathfrak{s}_G^{1}(v_5)$.}\label{fig:gi:w1IsWeakerThanS1}
\end{figure}

\begin{notation}
  Let $\mathfrak{s}^k_G(i_1,\dots,i_k)$ be an $n\times \Z_+$ matrix whose  position $(j,l)$ describes the walks of length at most $l$ between nodes $\{i_1,\dots,i_k\}$ and $i$.
  Formally, define $\mathfrak{s}^k_G(i_1,\dots,i_k)_{il}$ as
\begin{equation}\label{gi:perfAgr:req}
\begin{cases}
    (\emptyset,\{\sum\limits^k_{q=1} q\delta_{ii_q}\})                                                    & \text{if } l=0,\\
    (\mathfrak{s}^k_G(i_1,\dots,i_k)_{i,l-1},\{\mathfrak{s}^k_G(i_1,\dots,i_k)_{i',l-1}:i'\in N_G(i)\}^\#)& \text{otherwise}\\
\end{cases}
\end{equation}
for all $j\in V$ and for all $l\geq 0$.
\end{notation}

Note that the first column of matrix $\mathfrak{s}^k_G(i_1,\dots,i_k)$ corresponds to walks of length zero, therefore its index is zero.
Column $l$ will be denoted by $\mathfrak{s}^k_G(i_1,\dots,i_k)_{\bullet l}$. Recall that $\restr{\mathfrak{s}^k_G(i_1,\dots,i_k)}{q}$ denotes the first $q$ columns of matrix $\mathfrak{s}^k_G(i_1,\dots,i_k)$, that is, it describes the walks up to length $(q-1)$.

Simple inductive proof shows that it suffices to consider the first $(n+1)$ columns, similarly to the
case of $\mathfrak{w}$-labels.
\begin{claim}\label{cl:gi:onlyShortSWLMatter}
Let us given a graph pair $G_1, G_2$, distinct nodes $i_1,\dots,i_k\in V_1,j_1,\dots,j_k\in V_2$ and an integer $k\geq 1$. Then,
\[\mathfrak{s}^k_{G_1}(i_1,\dots,i_k)\peq \mathfrak{s}^k_{G_2}(j_1,\dots,j_k)\]
holds if and only if
\[\restr{\mathfrak{s}^k_{G_1}(i_1,\dots,i_k)}{n+1}\peq \restr{\mathfrak{s}^k_{G_2}(j_1,\dots,j_k)}{n+1},\]
where $n=|V_1|=|V_2|$.
\end{claim}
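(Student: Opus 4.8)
The plan is to mimic the argument of Theorem~\ref{thm:gi:onlyshortwalks}, but to adapt it to the fact that $\mathfrak{s}^k$-labels carry nested multiset data rather than integer walk-counts, so that linear algebra is not directly available. The ``only if'' direction is immediate: a permutation matrix $P$ with $P\,\mathfrak{s}^k_{G_1}(i_1,\dots,i_k)=\mathfrak{s}^k_{G_2}(j_1,\dots,j_k)$ clearly restricts to the first $n+1$ columns, giving $P\restr{\mathfrak{s}^k_{G_1}(i_1,\dots,i_k)}{n+1}=\restr{\mathfrak{s}^k_{G_2}(j_1,\dots,j_k)}{n+1}$. So the whole content is the ``if'' direction, and I would do it by a two-layer argument: first identify what combinatorial equivalence on nodes is encoded by agreement on the first $n+1$ columns, and then bootstrap that equivalence to all columns by induction.

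First I would observe the key ``stabilization'' fact underlying the claim. For a fixed label, consider the equivalence relation $\sim_l$ on $V$ defined by $i\sim_l i'$ iff $\mathfrak{s}^k_G(\dots)_{il}=\mathfrak{s}^k_G(\dots)_{i'l}$. Directly from the recursion in~\eqref{gi:perfAgr:req}, $\sim_{l+1}$ refines $\sim_l$ (two nodes with different length-$l$ entries keep different length-$(l+1)$ entries, since the entry at $l+1$ contains the entry at $l$ as its first coordinate). Hence $\sim_0\supseteq\sim_1\supseteq\sim_2\supseteq\cdots$ is a descending chain of partitions of an $n$-element set, so it stabilizes after at most $n-1$ proper refinements: there is some $l^*\le n-1$ with $\sim_{l^*}=\sim_{l^*+1}=\cdots$. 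Moreover — and this is the point that replaces Lemma~\ref{lem:gi:spanlemma} — once $\sim_l=\sim_{l+1}$, the recursion shows $\sim_{l+1}=\sim_{l+2}$, because the length-$(l+2)$ entry of a node is determined by its length-$(l+1)$ entry together with the multiset of length-$(l+1)$ entries of its neighbours, and that data is in turn determined by the $\sim_{l+1}$-class of the node and the multiset of $\sim_{l+1}$-classes of its neighbours, which by $\sim_l=\sim_{l+1}$ equals the data determining the length-$(l+1)$ entries. Thus the partition $\sim_l$ is constant for all $l\ge l^*$, and in fact the map $i\mapsto\mathfrak{s}^k_G(\dots)_{il}$ for $l\ge l^*$ is, up to renaming values, the same partition.

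Next I would combine these stabilized partitions for $G_1$ and $G_2$. Suppose $\restr{\mathfrak{s}^k_{G_1}(i_1,\dots,i_k)}{n+1}\peq\restr{\mathfrak{s}^k_{G_2}(j_1,\dots,j_k)}{n+1}$ via a permutation matrix $P$; write $\pi$ for the corresponding bijection $V_1\to V_2$. I claim $P\,\mathfrak{s}^k_{G_1}(i_1,\dots,i_k)=\mathfrak{s}^k_{G_2}(j_1,\dots,j_k)$, i.e. the length-$l$ column is preserved by $\pi$ for every $l$, which I prove by induction on $l$; the base cases $l\le n$ are the hypothesis. For the inductive step with $l\ge n+1$: by the stabilization fact applied to each graph (with $l^*\le n-1<n\le l-1$), the length-$l$ column of $\mathfrak{s}^k_{G_1}$ induces the same partition of $V_1$ as the length-$(l-1)$ column, and likewise for $G_2$; moreover the entry at $l$ is computed from the entry at $l-1$ and the neighbour-multiset of entries at $l-1$ by a fixed rule. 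Since $\pi$ preserves the length-$(l-1)$ column by induction and is a graph isomorphism-compatible relabelling on the relevant data — here I need that $\pi$ also preserves the neighbour structure as seen through the labels, which follows because for $l-1\ge 1$ the length-$(l-1)$ entry of a node already encodes the multiset of length-$(l-2)$ entries of its neighbours, and equality of these multisets across $\pi$ is exactly what the induction at levels $\le l-1$ gives — applying the same fixed rule on both sides yields that $\pi$ preserves the length-$l$ column. Hence $P\,\mathfrak{s}^k_{G_1}(i_1,\dots,i_k)=\mathfrak{s}^k_{G_2}(j_1,\dots,j_k)$, proving permutation-equality of the full (infinite) labels.

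The main obstacle, and the step I would write most carefully, is the precise formulation of ``the length-$l$ entry is determined by the length-$(l-1)$ data together with the neighbour-multiset of length-$(l-1)$ data, by a rule that commutes with relabelling,'' and in particular checking that once the partition stabilizes the neighbour-multisets of the already-matched labels are themselves matched by $\pi$ — so that the fixed rule really does propagate the match one more step. This is the analogue of the linear-dependence propagation in~\eqref{} inside the proof of Theorem~\ref{thm:gi:onlyshortwalks}, but it must be phrased purely combinatorially; everything else (the descending-chain stabilization and the trivial forward direction) is routine. Since the paper says ``simple inductive proof,'' I would keep the write-up short, stating the stabilization of the partitions as the crux and the induction on $l$ as the mechanics.
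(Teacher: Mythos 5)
Your proposal is correct, and it is essentially the argument the paper has in mind: the paper omits the proof of Claim~\ref{cl:gi:onlyShortSWLMatter} altogether (it only asserts that a ``simple inductive proof, similarly to the case of $\mathfrak{w}$-labels'' works), and your descending-chain/stabilization lemma for the partitions $\sim_l$ is precisely the combinatorial substitute for Lemma~\ref{lem:gi:spanlemma} that such an induction requires, with the induction on columns playing the role it plays in Theorem~\ref{thm:gi:onlyshortwalks}. The one step you flag but leave compressed --- that after stabilization the rule sending a length-$(l-2)$ entry to the corresponding length-$(l-1)$ entry is the \emph{same} rule in $G_1$ and in $G_2$, so that the equal neighbour-multisets at level $l-2$ (which the induction hypothesis at level $l-1$ provides, since that entry contains the multiset as its second coordinate) lift to equal neighbour-multisets at level $l-1$ --- does close with the ingredients you already have: if $a\in V_1$ and $b\in V_2$ have equal length-$(l-2)$ entries, then $\pi^{-1}(b)$ has the same length-$(l-2)$ entry as $a$ by the induction hypothesis, hence the same length-$(l-1)$ entry as $a$ by stabilization inside $G_1$ (valid because $l-2\ge n-1\ge l^*$), hence the same length-$(l-1)$ entry as $b$ by the induction hypothesis again; since every value occurring in the two neighbour-multisets occurs in both graphs, applying this common rule elementwise yields the level-$(l-1)$ match, and the induction proceeds. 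With that sentence added, your write-up is a complete proof of the claim the paper leaves to the reader.
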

From now on, $\mathfrak{s}^k_{G}(i_1,\dots,i_k)$ might refer to $\restr{\mathfrak{s}^k_{G}(i_1,\dots,i_k)}{n+1}$, that is, to its first $(n+1)$ columns.
Example~\ref{ex:gi:nPlus1ColumnsExample} shows that the previous claim is tight in the sense that considering the first $n$ columns would not be sufficient.

Note that the size of $\mathfrak{s}^k_G(i_1,\dots,i_k)_{jl}$ may be exponentially large in $n$.
Practically, one may address this issue by hashing the occurring data using SHA512 --- this also enables the generation of short graph fingerprints.
It is also possible to verify in polynomial time whether the $\mathfrak{s}^k$-labels of two given graphs are the same by replacing the labels with small integers as follows.
Let the number associated with the empty set be zero, and that of the set $\{p\}$ be $p$ for all $p\in\{1,\dots,k\}$.
In the recursion~(\ref{gi:perfAgr:req}), we substitute the associated numbers in place of all labels which already have an assigned number.
When a label appears the first time throughout the labeling process, we let its representing number be the next natural number.
Since all labels $\mathfrak{s}^k_G(i_1,\dots,i_k)_{\bullet l-1}$ have an associated number when computing $\mathfrak{s}^k_G(i_1,\dots,i_k)_{\bullet l}$ --- and hence the labels are multisets consisting of small integers --- the process runs in polynomial time.
We emphasize that each occurring label must be associated with one and only one number throughout the whole process, and this number must be used consistently at all occurrences of the label in both graphs.

\begin{notation}\label{nota:gi:potato}
For $q=k-1,\dots,0$, let $\mathfrak{s}^k_G(i_1,\dots,i_{q}) := \{\mathfrak{s}^k_G(i_1,\dots,i_{q},i) : i\in V\setminus\{i_1,\dots,i_{q}\}\}^\#$.
\end{notation}
The purpose of this recursive notation is to derive the isomorphism invariant fingerprint $\mathfrak{s}^k_G$, which is obtained for $q=0$ (we omit the empty parenthesis after $\mathfrak{s}^k_G$). For example, if $k=2$, then one obtains $\mathfrak{s}^k_G(v)=\{\mathfrak{s}^k_G(v,v') : v'\in V\setminus\{v\}\}^\#$ for $q=1$, and the fingerprint $\mathfrak{s}^k_G=\{\mathfrak{s}^k_G(v) : v\in V\}^\#$ for $q=0$.

The following two claims easily follow by definition.
\begin{claim}
For all integer $k\geq 1$ and $q\in\{0,\dots,k-1\}$, if $\mathfrak{s}^k_{G_1}(i_1,\dots,i_q)\neq\mathfrak{s}^k_{G_2}(j_1,\dots,j_q)$ for some nodes $i_1,\dots,i_q\in V_1$ and $j_1,\dots,j_q\in V_2$, then there is no isomorphism between $G_1$ and $G_2$ that maps node $i_r$ to node $j_r$ for each $r\in\{1,\dots,q\}$.
\end{claim}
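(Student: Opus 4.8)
The plan is to prove the contrapositive: assuming there is an isomorphism $\varphi\colon G_1\to G_2$ with $\varphi(i_r)=j_r$ for all $r\in\{1,\dots,q\}$, I will show that $\mathfrak{s}^k_{G_1}(i_1,\dots,i_q)=\mathfrak{s}^k_{G_2}(j_1,\dots,j_q)$. At the outset I would fix the convention --- implicit in Notation~\ref{nota:gi:potato} and in the use of $\peq$ in Claim~\ref{cl:gi:onlyShortSWLMatter} --- that the matrices $\mathfrak{s}^k_G(\cdot)$ are identified up to permutation of their rows, and the nested multisets of Notation~\ref{nota:gi:potato} accordingly; note also that $i_1,\dots,i_q$ are distinct, hence so are $j_1,\dots,j_q$ since $\varphi$ is injective.

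The first step is a lemma at the level of the matrices themselves: for any distinct $a_1,\dots,a_k\in V_1$, writing $b_r:=\varphi(a_r)$, I would show
\[
  \mathfrak{s}^k_{G_1}(a_1,\dots,a_k)_{il}=\mathfrak{s}^k_{G_2}(b_1,\dots,b_k)_{\varphi(i)\,l}\qquad\text{for all }i\in V_1,\ l\ge 0,
\]
so that $\varphi$ itself is the row permutation witnessing $\mathfrak{s}^k_{G_1}(a_1,\dots,a_k)\peq\mathfrak{s}^k_{G_2}(b_1,\dots,b_k)$. This is an induction on $l$ following recursion~(\ref{gi:perfAgr:req}): for $l=0$, injectivity of $\varphi$ gives $\delta_{ia_q}=\delta_{\varphi(i)b_q}$ for every $q$, so the initial labels $\{\sum_q q\delta_{ia_q}\}$ and $\{\sum_q q\delta_{\varphi(i)b_q}\}$ coincide; for $l\ge 1$, the first coordinate matches by the induction hypothesis, and since $\varphi$ restricts to a bijection $N_{G_1}(i)\to N_{G_2}(\varphi(i))$, the induction hypothesis gives $\{\mathfrak{s}^k_{G_1}(a_1,\dots,a_k)_{i'\,l-1}:i'\in N_{G_1}(i)\}^\#=\{\mathfrak{s}^k_{G_2}(b_1,\dots,b_k)_{i''\,l-1}:i''\in N_{G_2}(\varphi(i))\}^\#$, which handles the second coordinate.

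The second step lifts this through the recursion of Notation~\ref{nota:gi:potato} by a downward induction on $q$ from $k-1$ down to $0$: I would prove $\mathfrak{s}^k_{G_1}(a_1,\dots,a_q)=\mathfrak{s}^k_{G_2}(\varphi(a_1),\dots,\varphi(a_q))$ for all distinct $a_1,\dots,a_q\in V_1$. Indeed $\varphi$ restricts to a bijection from $V_1\setminus\{a_1,\dots,a_q\}$ onto $V_2\setminus\{\varphi(a_1),\dots,\varphi(a_q)\}$, and for each $i$ in the former set the term $\mathfrak{s}^k_{G_1}(a_1,\dots,a_q,i)$ equals $\mathfrak{s}^k_{G_2}(\varphi(a_1),\dots,\varphi(a_q),\varphi(i))$ --- by Step~1 when $q=k-1$ (so that $(a_1,\dots,a_q,i)$ is a $k$-tuple), and by the induction hypothesis applied to the $(q+1)$-tuple $(a_1,\dots,a_q,i)$ when $q<k-1$. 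Hence the two defining multisets agree term by term under $\varphi$, so they are equal; specializing to $a_r=i_r$ gives the claim by contraposition.

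I do not anticipate a real obstacle: the only delicate point is pinning down, once and for all, that the matrices $\mathfrak{s}^k_G(\cdot)$ are compared modulo row permutation (equivalently, after sorting rows into a canonical order), so that the single permutation induced by $\varphi$ simultaneously witnesses every equality the argument needs; with that in place, both inductions are entirely routine. If one preferred to sidestep this convention, Step~1 could be stated as the explicit entrywise identity displayed above, postponing any mention of $\peq$ to the final line.
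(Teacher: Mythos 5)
Your argument is correct and is exactly the routine verification the paper has in mind: the paper states this claim without proof (``The following two claims easily follow by definition''), and your two inductions --- entrywise agreement of $\mathfrak{s}^k_{G_1}(a_1,\dots,a_k)$ and $\mathfrak{s}^k_{G_2}(\varphi(a_1),\dots,\varphi(a_k))$ under the row permutation induced by $\varphi$, then the downward induction through Notation~\ref{nota:gi:potato} --- simply spell that out. Your preliminary remark that the matrices are compared up to row permutation matches the paper's own reading of ``$=$'' at the full-tuple level (as made explicit in the proof of Theorem~\ref{thm:gi:colorfulWalk}), so there is no gap.
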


\begin{claim}\label{cl:gi:distPowerOfSWLIsNonDecr}
For all integer $k\geq 1$, $q\in\{0,\dots,k\}$ and any $i_1,\dots,i_q\in V_1$ and $j_1,\dots,j_q\in V_2$,  if $\mathfrak{s}^{k}_{G_1}(i_1,\dots,i_q)\neq\mathfrak{s}^{k}_{G_2}(j_1,\dots,j_q)$, then $\mathfrak{s}^{k+1}_{G_1}(i_1,\dots,i_q)\neq\mathfrak{s}^{k+1}_{G_2}(j_1,\dots,j_q)$.
\end{claim}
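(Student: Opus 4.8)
The plan is to establish the contrapositive: if $\mathfrak{s}^{k+1}_{G_1}(i_1,\dots,i_q)=\mathfrak{s}^{k+1}_{G_2}(j_1,\dots,j_q)$, then $\mathfrak{s}^{k}_{G_1}(i_1,\dots,i_q)=\mathfrak{s}^{k}_{G_2}(j_1,\dots,j_q)$. To this end I would construct a map $\Phi$ that depends only on $n,k,q$ (not on the graph) and satisfies $\mathfrak{s}^{k}_G(i_1,\dots,i_q)=\Phi\big(\mathfrak{s}^{k+1}_G(i_1,\dots,i_q)\big)$ for every graph $G$ and every tuple of distinct nodes $i_1,\dots,i_q$; applying $\Phi$ to both sides of the assumed equality then yields the conclusion. (Throughout I tacitly assume $k<n$, so that none of the index sets $V\setminus\{i_1,\dots,i_r\}$ appearing in the recursion of Notation~\ref{nota:gi:potato} is empty.)

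The heart of the construction is the full-length case $q=k$, where the $\mathfrak{s}^{k}$-object is a matrix. The recursion~(\ref{gi:perfAgr:req}) at depth $l\ge 1$ merely nests previously built labels, so the only integers occurring anywhere in a label $\mathfrak{s}^{k+1}_G(i_1,\dots,i_{k+1})_{il}$ are the level-$0$ values $\sum_{q'=1}^{k+1}q'\delta_{i'i_{q'}}$; moreover, since the nodes are distinct and $i_{k+1}\notin\{i_1,\dots,i_k\}$, the value $k+1$ is attained by exactly one such value, at node $i_{k+1}$, while every other level-$0$ value lies in $\{0,\dots,k\}$. Let $\psi$ be the map on labels that replaces every occurrence of the integer $k+1$ by $0$ and leaves the nesting structure otherwise unchanged, i.e. $\psi((\emptyset,\{m\}))=(\emptyset,\{m\})$ for $m\le k$, $\psi((\emptyset,\{k+1\}))=(\emptyset,\{0\})$, and $\psi((a,S))=(\psi(a),\{\psi(s):s\in S\}^\#)$. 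A straightforward induction on the column index $l$ shows that applying $\psi$ entrywise to $\mathfrak{s}^{k+1}_G(i_1,\dots,i_{k+1})$ produces $\mathfrak{s}^{k}_G(i_1,\dots,i_k)$: at $l=0$ the value $k+1$ at node $i_{k+1}$ is sent to $0$, matching $\sum_{q'=1}^{k}q'\delta_{i_{k+1}i_{q'}}=0$, while at every other node $\psi$ acts trivially and the value already equals $\sum_{q'=1}^{k}q'\delta_{i'i_{q'}}$; the inductive step is immediate from~(\ref{gi:perfAgr:req}) and the recursive clause defining $\psi$ on pairs. Crucially, the resulting matrix does not depend on which node played the role of $i_{k+1}$, so all $n-k$ matrices in the multiset $\mathfrak{s}^{k+1}_G(i_1,\dots,i_k)=\{\mathfrak{s}^{k+1}_G(i_1,\dots,i_k,i):i\in V\setminus\{i_1,\dots,i_k\}\}^\#$ share the same entrywise $\psi$-image, which equals $\mathfrak{s}^{k}_G(i_1,\dots,i_k)$. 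Hence $\Phi$ at length $q=k$ may be taken to be ``pick any element of the multiset and apply $\psi$ entrywise'', a well-defined, graph-independent operation.

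For $q<k$ I would define $\Phi$ by downward induction on $q$. By Notation~\ref{nota:gi:potato}, both $\mathfrak{s}^{k}_G(i_1,\dots,i_q)$ and $\mathfrak{s}^{k+1}_G(i_1,\dots,i_q)$ are multisets whose elements are indexed by the same set $V\setminus\{i_1,\dots,i_q\}$, with $i$-entry $\mathfrak{s}^{k}_G(i_1,\dots,i_q,i)$ and $\mathfrak{s}^{k+1}_G(i_1,\dots,i_q,i)$ respectively; applying the already-built map $\Phi$ for length $q+1$ to each element of the latter multiset yields the former. Since applying a fixed function elementwise to a multiset is well defined and graph-independent, this extends $\Phi$ to length $q$ and, continuing down to $q=0$, to the fingerprints $\mathfrak{s}^{k}_G$ themselves. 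The conclusion then follows: $\mathfrak{s}^{k+1}_{G_1}(i_1,\dots,i_q)=\mathfrak{s}^{k+1}_{G_2}(j_1,\dots,j_q)$ implies $\mathfrak{s}^{k}_{G_1}(i_1,\dots,i_q)=\Phi(\mathfrak{s}^{k+1}_{G_1}(i_1,\dots,i_q))=\Phi(\mathfrak{s}^{k+1}_{G_2}(j_1,\dots,j_q))=\mathfrak{s}^{k}_{G_2}(j_1,\dots,j_q)$, which is the contrapositive of the claim. If the comparisons are intended up to the relation $\peq$ (and elementwise-$\peq$ for the multisets of matrices), nothing changes: applying $\psi$ entrywise commutes with row permutations and $\Phi$ respects the multiset identifications, so the same argument applies verbatim.

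I expect the main obstacle to be the entrywise induction establishing $\psi\big(\mathfrak{s}^{k+1}_G(i_1,\dots,i_{k+1})\big)=\mathfrak{s}^{k}_G(i_1,\dots,i_k)$ --- specifically, the bookkeeping that the recursion never creates new integers and that the integer $k+1$ is attained at level $0$ only at node $i_{k+1}$ (this is precisely where distinctness of the argument list is used), so that ``kill every $k+1$'' is exactly the projection that forgets the last distinguished node. Once this is in place, the multiset induction and the transfer between $G_1$ and $G_2$ are routine.
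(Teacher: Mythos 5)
Your argument is correct: the paper offers no proof of this claim (it is stated as following ``easily by definition''), and your forgetting map $\psi$ (replace the atom $k+1$ by $0$, so that $\psi$ applied entrywise sends $\mathfrak{s}^{k+1}_G(i_1,\dots,i_k,i)$ to $\mathfrak{s}^{k}_G(i_1,\dots,i_k)$ independently of $i$, then propagate elementwise through the multisets of Notation~\ref{nota:gi:potato}) is exactly the natural formalization of the intended refinement argument, including the correct use of distinctness of $i_1,\dots,i_{k+1}$ and the harmless degenerate case excluded by $k<n$. The only cosmetic point is that ``pick any element of the multiset and apply $\psi$'' is well defined only on multisets of the form $\mathfrak{s}^{k+1}_G(i_1,\dots,i_k)$, which you in fact verify, so no gap remains; the same reasoning indeed survives replacing equality by $\peq$.
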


\begin{definition}
Two graphs $G_1$ and $G_2$ are \emph{$\mathfrak{s}^k$-equivalent} if $\mathfrak{s}^{k}_{G_1}=\mathfrak{s}^{k}_{G_2}$.
\end{definition}

As a special case of the previous claim, one gets that if two graphs are not $\mathfrak{s}^k$-equivalent, then they are not $\mathfrak{s}^{k+1}$-equivalent either.
That is, as $k$ is increased, more and more non-isomorphic graph pairs are distinguished.
Note that when $k$ reaches the number of nodes, the graph is uniquely identified by $\mathfrak{s}^k$.

\begin{remark}
For any given constant $k\in\N$, one can verify in polynomial time whether two graphs are $\mathfrak{s}^k$-equivalent or not.
\end{remark}

\subsection{The Distinguishing Power of Perfect Aggregation}

This section investigates the distinguishing power of the above notion on trees, planar graphs, and subject to node-connectivity parameters.

\begin{theorem}
  Two trees are isomorphic if and only if they are $\mathfrak{s}^1$-equivalent.
\end{theorem}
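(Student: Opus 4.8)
The plan is to prove that $\mathfrak{s}^1$-equivalence captures tree isomorphism by recovering, from the $\mathfrak{s}^1$-label, the classical rooted-tree canonical form that underlies the Aho--Hopcroft--Ullman tree isomorphism algorithm. The key observation is that $\mathfrak{s}^1_G(i_1)$ is, by its very recursion, a faithful encoding of the tree $G$ rooted at $i_1$: at step $l$ the entry $\mathfrak{s}^1_G(i_1)_{i,l}$ records the pair consisting of the previous label of $i$ together with the multiset of previous labels of the neighbors of $i$, which is exactly one round of the standard ``rooted tree hashing'' refinement, with the one distinguished node $i_1$ singled out by the initial value $\{1\}$ versus $\{0\}$ at $l=0$.

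First I would make precise the following claim: for a tree $T$ and a node $r\in V(T)$, the label $\mathfrak{s}^1_T(r)$ determines, and is determined by, the isomorphism type of the rooted tree $(T,r)$. The ``determined by'' direction is immediate since the recursion~(\ref{gi:perfAgr:req}) only ever refers to the graph structure and to $r$. For the ``determines'' direction I would argue by induction on the number of columns (equivalently, on $l$, up to the bound $n+1$ from Claim~\ref{cl:gi:onlyShortSWLMatter}): after enough iterations the label at a node $i$ stabilizes to encode the full rooted subtree hanging below $i$ when $T$ is oriented away from $r$; formally, in a tree the ball of radius $l$ around $i$ intersected with the ``downward'' direction is recovered from $\mathfrak{s}^1_T(r)_{i,l}$, and taking $l = n$ recovers the whole rooted subtree. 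The initial condition distinguishes the root from all other nodes, and since a tree has a unique path between any two nodes, orientation away from $r$ is well-defined, so the recursion really is reconstructing the rooted tree and nothing more.

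Given that, the theorem follows quickly. If $T_1$ and $T_2$ are isomorphic trees, they are $\mathfrak{s}^1$-equivalent because $\mathfrak{s}^k$ is an isomorphism invariant (this was already noted after Claim~\ref{cl:gi:onlyShortSWLMatter} / in the preceding claims). Conversely, suppose $\mathfrak{s}^1_{T_1} = \mathfrak{s}^1_{T_2}$, i.e.\ $\{\mathfrak{s}^1_{T_1}(v) : v\in V_1\}^\# = \{\mathfrak{s}^1_{T_2}(v) : v\in V_2\}^\#$. I would locate a centroid (center) of each tree: a classical fact is that a tree has either one central node or two adjacent central ones, and centrality is an isomorphism-invariant property that, as I would check, is detectable from the label $\mathfrak{s}^1_{T}(v)$ itself (it encodes the eccentricity structure seen from $v$). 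Pick a central node $c_1$ of $T_1$; since the multisets of labels agree, some node $c_2\in V_2$ has $\mathfrak{s}^1_{T_1}(c_1) = \mathfrak{s}^1_{T_2}(c_2)$, and $c_2$ is then forced to be central in $T_2$. By the claim above, $\mathfrak{s}^1_{T_1}(c_1) = \mathfrak{s}^1_{T_2}(c_2)$ implies $(T_1,c_1)\cong (T_2,c_2)$ as rooted trees, hence $T_1\cong T_2$ as (unrooted) trees. In the two-centers case one roots instead at the pair of adjacent centers — equivalently, subdivides the central edge — and runs the same argument; the label of a center still determines the rooted structure because the second center appears as a distinguished neighbor.

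The main obstacle I expect is the careful verification of the reconstruction claim, namely that $\mathfrak{s}^1_T(r)_{i,l}$ for $l$ up to $n$ really does encode precisely the rooted subtree below $i$ and no spurious ``upward'' information — one has to rule out that walks going back toward the root contaminate the label in a way that breaks the clean subtree decoding. In a tree this works out because there are no cycles, so an $l$-step walk from $i$ that ever moves upward toward $r$ must immediately backtrack, and an induction on $l$ shows the upward contribution is itself determined by the already-known portion of the tree on the root side of $i$; but writing this cleanly, with the right induction hypothesis stated at the level of labeled rooted trees rather than of raw multisets, is where the real work lies. A secondary, more routine point is confirming that centrality is label-detectable (alternatively, one can sidestep this by rooting at every node in turn and using a minimal canonical form over all roots, at the cost of a slightly less transparent argument).
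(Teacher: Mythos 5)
Your overall strategy --- root the tree, argue that the $\mathfrak{s}^1$-label with a distinguished root determines the rooted tree, and then transfer an isomorphism through a matched pair of nodes --- is viable, but as written the proposal defers exactly the step that constitutes the entire proof. The ``reconstruction claim'' (that $\mathfrak{s}^1_T(r)$ determines $(T,r)$ up to rooted isomorphism) is true, but your sketch of it is where the genuine gap lies, and the worry you yourself flag is real: the label $\mathfrak{s}^1_T(r)_{i,l}$ of an internal node $i$ is \emph{not} a clean encoding of the subtree below $i$; it mixes in parent-side information, and recovering the downward subtree from it requires first identifying, inside the multiset of neighbor labels, which neighbor is the parent (this can be done via the depth at which the root marker first becomes visible, i.e.\ the parent is the unique neighbor whose marker-distance is one less than that of $i$), together with careful bookkeeping of how many columns are needed at each depth. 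None of this is carried out, and the induction hypothesis ``stated at the level of labeled rooted trees'' that you say is needed is precisely what is missing. In addition, the centroid/center detour is superfluous and introduces a further unverified sub-claim (label-detectability of centrality): once the reconstruction claim holds for an arbitrary root, equality of the label multisets already yields some pair $v_1\in V_1$, $v_2\in V_2$ with $\mathfrak{s}^1_{T_1}(v_1)=\mathfrak{s}^1_{T_2}(v_2)$, and that alone gives $(T_1,v_1)\cong(T_2,v_2)$ and hence $T_1\cong T_2$.

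For comparison, the paper organizes the induction so that the contamination issue never arises: for an edge $(r,p)$ it considers the hanging subtree $T(r,p)$ containing $r$ after deleting the edge, and proves by induction on its size $n'$ that if the labels of \emph{both} endpoints agree up to the first $n'+1$ columns, i.e.\ $\restr{\mathfrak{s}^1_{G_1}(r_1)}{n'+1}\peq\restr{\mathfrak{s}^1_{G_2}(r_2)}{n'+1}$ and likewise for $p_1,p_2$, then $T_1(r_1,p_1)\cong T_2(r_2,p_2)$; conditioning on the parent's label explicitly is what replaces your ``identify and discard the upward contribution'' step, and the matching $\phi$ between the neighbors of $r_1$ and $r_2$ obtained from the multiset equality drives the induction. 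The proof is then finished by choosing a leaf $r_1$ and a node $r_2$ with the same label (necessarily a leaf), rather than a center. If you want to keep your single-root formulation, you should either prove the reconstruction lemma by an induction of this edge-conditioned shape, or spell out the parent-identification argument with the truncation-depth bookkeeping; as it stands, the proposal names the right statement but does not prove it.
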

\begin{proof}
  Given two $\mathfrak{s}^1$-equivalent trees $G_1=(V,E_1)$ and $G_2=(V,E_2)$, we show that they are isomorphic.
  For an edge $(r,p)\in E_i$, let $T_i(r,p)=(V_i(r,p),E_i(r,p))$ denote the subtree of $G_i$ obtained as the connected component of $(V,E_i\setminus\{(r,p)\})$ containing node $r$.

  By induction, we prove that for any edges $(r_1,p_1)\in E_1$ and $(r_2,p_2)\in E_2$ if $\restr{\mathfrak{s}^1_{G_1}(r_1)}{n'+1} \peq \restr{\mathfrak{s}^1_{G_2}(r_2)}{n'+1}$ and $\restr{\mathfrak{s}^1_{G_1}(p_1)}{n'+1} \peq \restr{\mathfrak{s}^1_{G_2}(p_2)}{n'+1}$ for $n'=|V_1(r_1,p_1)|$, then $T_1(r_1,p_1)$ and $T_2(r_2,p_2)$ are isomorphic.
  Clearly, if $n'=1$ --- in other words, $r_1$ is a leaf node in $G_1$ --- then $r_2$ must also be a leaf node in $G_2$.

  Otherwise, one gets that $\{ \restr{\mathfrak{s}^1_{G_1}(i)}{n'} : i\in N_{G_1}(r_1)\}^\# = \{ \restr{\mathfrak{s}^1_{G_2}(i)}{n'} : i\in N_{G_2}(r_2)\}^\#$.
  Thus $r_1$ and $r_2$ have the same number of neighbors and there is a one-to-one mapping $\phi: N_{G_1}(r_1) \to N_{G_2}(r_2)$ so that $v$ and $\phi(v)$ have same label up to the first $n'$ columns for each $v\in N_{G_1}(r_1)$.
  Therefore, from the induction hypothesis, $T_1(v,r_1)$ and $T_2(\phi(v),r_2)$ are isomorphic subtrees for all $v\in N_{G_1}(r_1)\setminus p_1$.
  The isomorphism of $T_1(r_1,p_1)$ and $T_2(r_2,p_2)$ follows from this immediately.

  In order to complete the proof of the theorem, let us choose an arbitrary leaf node $r_1\in V_1$ and a node $r_2\in V_2$ with $\mathfrak{s}^1_{G_1}(r_1)\peq\mathfrak{s}^1_{G_2}(r_2)$.
  Node $r_2$ is also a leaf node and $\restr{\mathfrak{s}^1_{G_1}(p_1)}{|V_1|-1}\peq\restr{\mathfrak{s}^1_{G_2}(p_2)}{|V_1|-1}$ for their neighbors $p_1\in V_1$ and $p_2\in V_2$.
  Applying the above claim to $r_1,p_1,r_2,p_2$ proves the isomorphism of $G_1$ and $G_2$.
\end{proof}

Note that this proof provides a new polynomial-time isomorphism algorithm for trees, however, a more efficient algorithm exists~\cite{PlanarGraphIso}.

\begin{definition}
The \emph{pattern of a walk} $(h_1,\dots,h_p)\in V^{p}$ with respect to $\mathfrak{s}^k_{G}(i_1,\dots,i_{k})$ is $(\mathfrak{s}^k_{G}(i_1,\dots,i_{k})_{h_10},\dots,\mathfrak{s}^k_{G}(i_1,\dots,i_{k})_{h_p0})$.
\end{definition}

\begin{theorem}\label{thm:gi:colorfulWalk}
  Let $\mathfrak{s}^k_{G_1}(i_1,\dots,i_{k})=\mathfrak{s}^k_{G_2}(j_1,\dots,j_{k})$ and let $i,j$ be two nodes such that $\mathfrak{s}^k_{G_1}(i_1,\dots,i_{k})_i=\mathfrak{s}^k_{G_2}(j_1,\dots,j_{k})_j$. Then the numbers of walks with any given pattern ending at node $i$ in $G_1$ and at node $j$ in $G_2$ are the same.
\end{theorem}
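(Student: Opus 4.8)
The plan is to prove, by induction on the number $p$ of nodes of the walk, a statement slightly stronger than the theorem: for \emph{every} pair of nodes $i\in V_1$, $j\in V_2$ with $\mathfrak{s}^k_{G_1}(i_1,\dots,i_k)_i=\mathfrak{s}^k_{G_2}(j_1,\dots,j_k)_j$ and every pattern $(c_1,\dots,c_p)$, the number of walks $(h_1,\dots,h_p)$ in $G_1$ with $h_p=i$ and pattern $(c_1,\dots,c_p)$ with respect to $\mathfrak{s}^k_{G_1}(i_1,\dots,i_k)$ equals the number of walks $(h_1,\dots,h_p)$ in $G_2$ with $h_p=j$ and the same pattern with respect to $\mathfrak{s}^k_{G_2}(j_1,\dots,j_k)$. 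Quantifying over all label-matched pairs $(i,j)$, rather than just the one named in the statement, is what lets the induction recurse into neighbourhoods; the theorem is then the special case of the given $i,j$.

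For $p=1$ a walk is a single node, so the number of walks ending at $i$ with pattern $(c_1)$ is $1$ if $\mathfrak{s}^k_{G_1}(i_1,\dots,i_k)_{i0}=c_1$ and $0$ otherwise; since rows $i$ and $j$ of the two labels coincide, so do their $l=0$ entries, and the counts agree. For the inductive step I fix $i,j$ with equal rows and a pattern $(c_1,\dots,c_{p+1})$, and observe that a walk on $p+1$ nodes ending at $i$ is exactly a walk on $p$ nodes ending at some neighbour $i'\in N_{G_1}(i)$ with the edge $i'i$ appended; the pattern forces $c_{p+1}$ to equal the colour $\mathfrak{s}^k_{G_1}(i_1,\dots,i_k)_{i0}$ of $i$, otherwise both counts vanish. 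Hence the number of walks ending at $i$ with pattern $(c_1,\dots,c_{p+1})$ is $\sum\limits_{i'\in N_{G_1}(i)}w_{i'}$, where $w_{i'}$ denotes the number of walks ending at $i'$ with pattern $(c_1,\dots,c_p)$. I then group the neighbours of $i$ by their $\mathfrak{s}^k$-label: by the key lemma below there is a label-preserving bijection $\psi\colon N_{G_1}(i)\to N_{G_2}(j)$, so the two neighbourhoods carry the same labels with the same multiplicities, and by the induction hypothesis $w_{i'}$ depends only on the label of $i'$, with the same value at any neighbour of $j$ carrying that label (apply the hypothesis to $i'$ and $\psi(i')$, and to two neighbours of $i$ with equal labels against a common partner in $G_2$). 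Summing label by label gives the same value as $\sum\limits_{j'\in N_{G_2}(j)}w_{j'}$, which is the count for $G_2$, closing the induction.

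It remains to establish the key lemma: $\mathfrak{s}^k_{G_1}(i_1,\dots,i_k)_i=\mathfrak{s}^k_{G_2}(j_1,\dots,j_k)_j$ implies $\{\mathfrak{s}^k_{G_1}(i_1,\dots,i_k)_{i'}:i'\in N_{G_1}(i)\}^\#=\{\mathfrak{s}^k_{G_2}(j_1,\dots,j_k)_{j'}:j'\in N_{G_2}(j)\}^\#$. For this I would first invoke Claim~\ref{cl:gi:onlyShortSWLMatter} (and the argument behind it) to pass to the untruncated labels, so that the infinite rows of $i$ and $j$ agree; then, by the recursion~(\ref{gi:perfAgr:req}), the second component of the entry $\mathfrak{s}^k_{G_1}(i_1,\dots,i_k)_{i,n+1}$ in column $n+1$ is exactly the multiset of the entries in column $n$ of the rows of the neighbours of $i$, so equating the column $n+1$ entries of rows $i$ and $j$ gives $\{\mathfrak{s}^k_{G_1}(i_1,\dots,i_k)_{i',n}:i'\in N_{G_1}(i)\}^\#=\{\mathfrak{s}^k_{G_2}(j_1,\dots,j_k)_{j',n}:j'\in N_{G_2}(j)\}^\#$; finally, each column entry contains all earlier ones nested inside it, so the entry in column $n$ encodes the first $n+1$ columns, and these determine the whole label by Claim~\ref{cl:gi:onlyShortSWLMatter}, whence the displayed equality is in fact between full-label multisets and $\psi$ is read off from it. I expect this last paragraph to be the delicate point: one must get the column bookkeeping exactly right — it is column $n+1$ at $i$, equivalently column $n$ at the neighbours, that is needed, mirroring the tightness shown by Example~\ref{ex:gi:nPlus1ColumnsExample} — after which the remainder is routine manipulation of multisets.
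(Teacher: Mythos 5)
Your proof is essentially correct and rests on the same inductive scheme as the paper's: induct on the walk length, decompose a walk ending at $i$ into a walk ending at a neighbour plus the final edge, match the two neighbourhoods via the recursion~(\ref{gi:perfAgr:req}), and apply the cross-graph induction hypothesis (the paper realizes your ``all label-matched pairs'' strengthening by fixing one global bijection $\pi$ with $\pi(i)=j$ coming from the hypothesis $\mathfrak{s}^k_{G_1}(i_1,\dots,i_k)=\mathfrak{s}^k_{G_2}(j_1,\dots,j_k)$ and proving the statement for every pair $(i',\pi(i'))$). The one substantive difference is your key lemma. The paper never needs a bijection between $N_{G_1}(i')$ and $N_{G_2}(\pi(i'))$ that preserves \emph{full} rows: in the step for walks of length $l$ it only uses that the column-$l$ entries of the matched rows agree, whose second components are, by definition, the multisets of the neighbours' column-$(l-1)$ entries; the induction is carried at the level of column entries (the count of walks of length $l-1$ with a given pattern ending at a node is determined by that node's column-$(l-1)$ entry), so no truncation or stabilization argument is ever invoked. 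By carrying full-row equality as your invariant you force yourself into exactly the delicate point you flag: the statement ``the first $n+1$ columns of a single row determine the whole row'' is \emph{not} literally Claim~\ref{cl:gi:onlyShortSWLMatter}, which is about permutation-equality of the entire $n\times(n+1)$ matrices; at the level of an arbitrary pair of rows it needs its own stabilization argument (and the bound $n+1$ then uses the global hypothesis, via the fact that every colour class splits between the two graphs). This is fixable rather than fatal --- since you pass to the untruncated labels anyway, you can apply your column-extraction at every $l$ and invoke stabilization of the nested partitions on the finite disjoint union at some sufficiently large column, with no need to hit column $n+1$ exactly --- but the paper's proof shows the cleaner route: keep the induction on column entries and the whole bookkeeping disappears.
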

\begin{proof}
  First observe that $\mathfrak{s}^k_{G_1}(i_1,\dots,i_{k})=\mathfrak{s}^k_{G_2}(j_1,\dots,j_{k})$ means that there exists a bijection $\pi:V_1\to V_2$ for which $\mathfrak{s}^k_{G_1}(i_1,\dots,i_{k})_{i'}=\mathfrak{s}^k_{G_2}(j_1,\dots,j_{k})_{\pi(i')}$ holds for all $i'\in V_1$. Since $\mathfrak{s}^k_{G_1}(i_1,\dots,i_{k})_i=\mathfrak{s}^k_{G_2}(j_1,\dots,j_{k})_j$, one can also prescribe that $\pi(i)=j$.
  We show by induction on the length of the walk that the numbers of walks with any given pattern ending at node $i'$ in $G_1$ and at node $\pi(i')$ in $G_2$ are the same.
  By definition, the number of walks of length zero is the same for any pattern of length one. By induction, assume that the statement holds for walks of length $(l-1)$. Note that $\mathfrak{s}^k_{G_1}(i_1,\dots,i_{k})_{i'}=\mathfrak{s}^k_{G_2}(j_1,\dots,j_{k})_{\pi(i')}$ implies that $\{\mathfrak{s}^k_{G_1}(i_1,\dots,i_k)_{i'',l-1}:i''\in N_{G_1}(i')\}^\#=\{\mathfrak{s}^k_{G_2}(j_1,\dots,j_k)_{j'',l-1}:j''\in N_{G_2}(\pi(i'))\}^\#$. By the induction hypothesis, this means that the numbers of walks of length $(l-1)$ with any given pattern ending at the neighbors of $i'$ in $G_1$ and those ending at the neighbors of $\pi(i')$ in $G_2$ are the same.
  Any walk of length $l\geq 1$ ending at a node $t$ consists of an edge $tt'$ incident to $t$ and a walk of length $(l-1)$ ending at $t'$. Applying this observation for $t=i'$ and $t=j'$, the numbers of walks of length $l$ with any given pattern ending at node $i'$ in $G_1$ and at node $\pi(i')$ in $G_2$ are the same, which had to be shown.
\end{proof}

\begin{definition}
  A subset $X$ of the nodes is a \emph{$k$-separator} if $|X|\leq k$ and $G\setminus X$ consists of more connected components than $G$.
\end{definition}

\begin{theorem}
  If $i\in V_1$ is contained in a $(k-1)$-separator of $G_1$ and $j\in V_2$ is not contained in any $k$-separators of $G_2$, then $\mathfrak{s}^k_{G_1}(i)\neq\mathfrak{s}^k_{G_2}(j)$.
\end{theorem}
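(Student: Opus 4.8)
The plan is to argue by contradiction: assume $\mathfrak{s}^k_{G_1}(i)=\mathfrak{s}^k_{G_2}(j)$ (in the sense of Notation~\ref{nota:gi:potato}) and deduce that $j$ lies in some $k$-separator of $G_2$. First I would fix a convenient $(k-1)$-separator. Let $X\ni i$ be one. Unless $i$ is isolated in $G_1$, one can replace $X$ by a $(k-1)$-separator $X^\ast\ni i$ containing no whole connected component of $G_1$: any component of $G_1$ lying inside the current separator may be peeled off — and, if it is the component of $i$, first exchanged for $\{i\}$ — without losing the properties of being a $(k-1)$-separator or of containing $i$, because deleting a whole component only raises the component count. (An isolated $i$ would be handled separately, by a direct component count; there $\mathfrak{s}^k_{G_1}(i)=\mathfrak{s}^k_{G_2}(j)$ forces $j$ to be isolated too and one compares $G_1,G_2$ after deleting these two vertices.) Since $X^\ast$ separates $G_1$, some component $D$ of $G_1$ has $D\setminus X^\ast$ with at least two components; fix $a\in D\setminus X^\ast$ and $b\in D\setminus X^\ast$ lying in a different component of $G_1\setminus X^\ast$ than $a$. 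Finally, choose the marking $\{i_1,\dots,i_k\}\supseteq X^\ast\cup\{a\}$ with $i_1=i$, possible since $|X^\ast|\le k-1$ and $a\notin X^\ast$; the remaining marks are arbitrary.

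Expanding $\mathfrak{s}^k_{G_1}(i)=\mathfrak{s}^k_{G_2}(j)$ through Notation~\ref{nota:gi:potato} yields $j=j_1,j_2,\dots,j_k\in V_2$ with $\mathfrak{s}^k_{G_1}(i_1,\dots,i_k)=\mathfrak{s}^k_{G_2}(j_1,\dots,j_k)$. Let $\pi\colon V_1\to V_2$ be the label-preserving bijection supplied by the proof of Theorem~\ref{thm:gi:colorfulWalk}; since $i_m$ (resp.\ $j_m$) is the unique node of $G_1$ (resp.\ $G_2$) whose length-$0$ label is the singleton $\{m\}$, we get $\pi(i_m)=j_m$ for every $m$. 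The key lemma I would prove — a ``reachability transfer'' — is: for every $m$, every $S\subseteq\{i_1,\dots,i_k\}$ with $i_m\notin S$, every $v\in V_1$ and every $\ell\ge 0$, the number of length-$\ell$ walks from $i_m$ to $v$ in $G_1$ avoiding $S$ equals the number of length-$\ell$ walks from $j_m$ to $\pi(v)$ in $G_2$ avoiding $\pi(S)$. This follows from Theorem~\ref{thm:gi:colorfulWalk} by summing the per-pattern walk counts over all length-$(\ell{+}1)$ patterns that begin with $\{m\}$ (which pins the start to $i_m$, resp.\ $j_m$) and never use a label $\{q\}$ with $i_q\in S$ (which is exactly the condition that the walk avoids $S$, resp.\ $\pi(S)$).

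Two consequences finish the argument. Write $Y:=\pi(X^\ast)$. Applying the lemma with $S=\varnothing$ shows that for each $m$, $\pi$ maps the component of $G_1$ containing $i_m$ onto the component of $G_2$ containing $j_m$; hence a component of $G_2$ contained in $Y\subseteq\{j_1,\dots,j_k\}$ would contain some $j_m$ and equal $\pi(\text{component of }i_m)$, forcing the component of $i_m$ to lie inside $X^\ast$ — impossible by the choice of $X^\ast$. So deleting $Y$ from $G_2$ destroys no component. Applying the lemma with $S=X^\ast$ and $i_m=a$ (legitimate, as $a\notin X^\ast$) shows that $\pi$ maps the component of $a$ in $G_1\setminus X^\ast$ bijectively onto an entire component of $G_2\setminus Y$. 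Now $b$ lies in the component of $a$ in $G_1$ but not in the component of $a$ in $G_1\setminus X^\ast$, and $b\notin X^\ast$; therefore $\pi(b)$ lies in the component of $\pi(a)$ in $G_2$, outside $Y$, and in a component of $G_2\setminus Y$ distinct from the one containing $\pi(a)$. Thus one component of $G_2$ breaks into at least two upon removal of $Y$ while no component is destroyed, so $G_2\setminus Y$ has strictly more components than $G_2$. Since $j=\pi(i)\in Y$ and $|Y|\le k-1\le k$, this exhibits a $k$-separator of $G_2$ containing $j$ — contradicting the hypothesis.

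The step I expect to be the genuine obstacle is this last component bookkeeping, i.e.\ guaranteeing that deleting $Y$ really does increase the number of components of $G_2$ instead of merely splitting one component while silently swallowing another. The reachability lemma only controls walks issued from marked nodes and says nothing directly about how $\pi$ acts on components of $G_\bullet$ avoiding all marks, so the reduction to an $X^\ast$ containing no whole component of $G_1$ (and the separate treatment of an isolated $i$) is exactly what has to be carried out with care; everything else is a routine consequence of Theorem~\ref{thm:gi:colorfulWalk} together with elementary facts about how deleting a vertex set changes the number of connected components.
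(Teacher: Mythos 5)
Your main argument is correct, and it packages the proof differently from the paper. The paper fixes the marks $X\cup\{i_k\}$ with $i_1=i$, notes that in $G_1$ there is no walk from $i_k$ to a node of another piece of $G_1\setminus X$ avoiding $X$, and contradicts Theorem~\ref{thm:gi:colorfulWalk} directly because in $G_2$ such an avoiding walk from $j_k$ does exist, this existence being read off from the hypothesis that $\{j_1,\dots,j_{k-1}\}\ni j$ is not a separator. You instead push a cleaned-up separator $X^*$ through the label-matching bijection $\pi$ and verify by component bookkeeping that $\pi(X^*)$ is itself a $(k-1)$-separator of $G_2$ containing $j$. Your reachability-transfer lemma is a legitimate consequence of Theorem~\ref{thm:gi:colorfulWalk} (the marked nodes are the unique nodes with initial labels $\{1\},\dots,\{k\}$, so ``starts at $i_m$ and avoids $S$'' is a condition on patterns), and its two applications ($S=\emptyset$ to rule out swallowed components, $S=X^*$ to exhibit a split) are fine; indeed your bookkeeping is more explicit than the paper's one-line existence claim for the avoiding walk, which is immediate only when $G_2$ is connected. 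What the paper's route buys is brevity: it never has to build a separator in $G_2$, so it needs neither the normalization of $X^*$ nor any case distinction.

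The genuine gap is the isolated-$i$ case, which your parenthetical sketch does not close. If $i$ is isolated, every separator containing $i$ contains the whole component $\{i\}$, so your normalization is impossible; running the argument anyway, the component $\{j\}$ of $G_2$ (and $j$ is forced to be isolated, as you say) is swallowed by $Y=\pi(X^*)$, so to make any set containing $j$ a separator of $G_2$ you must certify a surplus of \emph{two}: either one $G_2$-component splitting into at least three pieces, or two components each splitting. Certifying a split with your lemma requires the witness to be a \emph{marked} start node, and marks are scarce: $i$ is forced to be $i_1$ by Notation~\ref{nota:gi:potato}, and $X\setminus\{i\}$ can have size $k-2$ (e.g.\ $G_1$ an isolated vertex plus $K_{1,3}$ with $k=3$, where the smallest separator containing the isolated vertex has size $2$), leaving a single spare mark --- one short of the two witnesses needed. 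Your proposed reduction ``delete $i$ and $j$ and compare'' hits exactly the same wall: since $c(G_2)=c(G_2-j)+1$, what is needed is a set of size at most $k-1$ in $G_2-j$ whose removal raises the component count by at least two, which is neither what the theorem (or an induction on it) supplies nor what one spare mark can certify via the lemma. Closing this case requires a further idea --- for instance exploiting the full recursive row labels of unmarked nodes rather than only walk-pattern counts issued from marked nodes --- or switching to the paper's direct walk-count contradiction for this case.
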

\begin{proof}
  By Claim~\ref{cl:gi:distPowerOfSWLIsNonDecr}, one can assume that $|X|=k-1$.
  Let $i_1,\dots,i_{k-1}\in V_1$ denote the nodes of $X$, and let $i_k$ be an arbitrary node in $V_1\setminus X$.
  We can assume that $i=i_1$.
  By contradiction, if $\mathfrak{s}^k_{G_1}(i)=\mathfrak{s}^k_{G_2}(j)$, then there exist $j_1,\dots,j_k\in V_2$ such that $\mathfrak{s}^k_{G_1}(i_1,\dots,i_k)=\mathfrak{s}^k_{G_2}(j_1,\dots,j_k)$, where $j_1=j$.
  Let $C$ be an arbitrary connected component of $G_1\setminus X$ that does not contain $i_k$, and let $i'$ be a node of $C$.
  Observe that any walk in $G_1$ between node $i_k$ and $i'$ crosses at least one of $i_1,\dots,i_{k-1}$.
  At the same time, there exists a walk in $G_2$ between $j_k$ and any other node of $V_2\setminus\{j_1,\dots,j_{k-1}\}$ that avoids all nodes $j_1,\dots,j_{k-1}$, contradicting Theorem~\ref{thm:gi:colorfulWalk}.
\end{proof}
By the previous theorem, one gets that if $\mathfrak{s}^k_{G_1}(i)=\mathfrak{s}^k_{G_2}(j)$, then $i$ is included in a $(k-1)$-separator if and only if $j$ is, hence the following corollary immediately follows.
\begin{corollary}
  If $G_1$ is $k$-connected and $G_2$ is not, then $\mathfrak{s}^k_{G_1}\neq\mathfrak{s}^k_{G_2}$
\end{corollary}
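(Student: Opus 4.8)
The plan is to argue by contradiction. Suppose $\mathfrak{s}^k_{G_1}=\mathfrak{s}^k_{G_2}$; then there is a bijection $\sigma\colon V_2\to V_1$ with $\mathfrak{s}^k_{G_2}(j)=\mathfrak{s}^k_{G_1}(\sigma(j))$ for all $j\in V_2$. Since $G_1$ is $k$-connected it is connected and has at least $k+1$ nodes, so $|V_2|=|V_1|>k$ as well; hence $G_2$ can fail to be $k$-connected for only two reasons: either $G_2$ is disconnected, or $G_2$ is connected and has a vertex set $S$ with $|S|\le k-1$ whose deletion disconnects it, i.e.\ a $(k-1)$-separator. I would handle these two cases separately.

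In the connected case, pick any $j\in S$ and set $i:=\sigma(j)$. Then $j$ lies in a $(k-1)$-separator of $G_2$ and $\mathfrak{s}^k_{G_1}(i)=\mathfrak{s}^k_{G_2}(j)$, so by the observation stated just before the corollary, $i$ lies in a $(k-1)$-separator of $G_1$ --- contradicting $k$-connectivity. One caveat: the form of the preceding theorem actually used here is the sharpened one in which the hypothesis on $j$ reads ``$j$ is not contained in any $(k-1)$-separator of $G_2$'' (not any $k$-separator); its proof is unchanged, because that hypothesis serves only to guarantee that the particular size-$(k-1)$ vertex set $\{j_1,\dots,j_{k-1}\}$ extracted from the matching $\mathfrak{s}^k$-labels is not a separator of the connected graph $G_2$, after which Theorem~\ref{thm:gi:colorfulWalk} supplies a walk avoiding all marked nodes in $G_2$ with no counterpart in $G_1$. (If one prefers to avoid enlarging a separator to size exactly $k-1$, first replace $k$ by $\kappa(G_2)+1\le k$, which is legitimate because non-$\mathfrak{s}^m$-equivalence implies non-$\mathfrak{s}^{m+1}$-equivalence.)

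In the disconnected case I would drop down to the coarser labelings: $\mathfrak{s}^k$ refines $\mathfrak{s}^1$, which refines $\mathfrak{w}$, so it suffices to show $G_1$ and $G_2$ are not $\mathfrak{w}$-equivalent. As $G_1$ is connected, every row of every matrix $\mathfrak{w}_{G_1}(i)$ is nonzero (some walk joins $i$ to each node), whereas for a node $j$ lying in a component of $G_2$ of size $<n$ the matrix $\mathfrak{w}_{G_2}(j)$ has a zero row; the number of nonzero rows is preserved by permutation-equality, so no $\mathfrak{w}$-label of $G_2$ is permutation-equal to one of $G_1$, and the two graphs are not $\mathfrak{w}$-equivalent, hence not $\mathfrak{s}^k$-equivalent. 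The step I expect to need the most care is exactly this split between ``$(k-1)$-separator'' and ``$k$-separator'': the preceding theorem as literally stated is too weak to invoke directly, since in the graphs at hand essentially every node of $G_2$ may sit in some $k$-separator, so one must use its sharper form and separately dispose of the disconnected case, which the separator argument does not reach. The remaining ingredients --- turning the multiset identity $\mathfrak{s}^k_{G_1}=\mathfrak{s}^k_{G_2}$ into a node bijection, and verifying that a non-$k$-connected graph on more than $k$ nodes genuinely offers a $(k-1)$-separator or a disconnection --- are routine.
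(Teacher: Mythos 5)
Your proof is correct, and in the main case it uses exactly the machinery the paper intends: unfold the multiset identity into matched labels and invoke the preceding separator theorem (itself resting on Theorem~\ref{thm:gi:colorfulWalk}). Where you differ is in rigor rather than in route. The paper's own justification is the single sentence claiming that matched labels force ``$i$ lies in a $(k-1)$-separator iff $j$ does,'' which does not literally follow from the theorem as stated, since its two hypotheses are asymmetric ($(k-1)$-separator on one side, no $k$-separator on the other) and a $k$-connected graph may well contain $k$-separators; the paper also silently ignores the possibility that $G_2$ is disconnected, in which case $G_2$ may have no separators at all under the paper's definition and the quoted equivalence yields no contradiction. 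You repair both points: you correctly observe that the theorem's proof only ever deletes the size-$(k-1)$ set $\{j_1,\dots,j_{k-1}\}$, so it actually establishes the sharpened statement with ``$(k-1)$-separator'' on both sides (and your fallback of replacing $k$ by $\kappa(G_2)+1$ and lifting via Claim~\ref{cl:gi:distPowerOfSWLIsNonDecr} is legitimate), and you dispose of the disconnected case separately. The only ingredient you lean on that the paper asserts merely informally is that $\mathfrak{s}^k$-equivalence implies $\mathfrak{w}$-equivalence; this is easy to make precise (apply Theorem~\ref{thm:gi:colorfulWalk} with one marked node and sum the pattern counts over patterns whose first entry is the marked label), or you can avoid it entirely by staying inside the $\mathfrak{s}^k$ framework: match tuples $\mathfrak{s}^k_{G_1}(i_1,\dots,i_k)=\mathfrak{s}^k_{G_2}(j_1,\dots,j_k)$, take $j'$ in a component of $G_2$ not containing $j_k$ and its matched node $i'$ in $G_1$; then $G_1$ has a walk from $i_k$ to $i'$ whose pattern begins with the unique label of $i_k$, while $G_2$ has no walk with that pattern ending at $j'$, contradicting Theorem~\ref{thm:gi:colorfulWalk}. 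In short, your write-up is more complete than the paper's one-line derivation and buys a genuinely airtight treatment of the disconnected case.
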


\begin{theorem}\label{thm:gi:complements}
  Let $G_1$, $G_2$ be two graphs, and let $\overline G_1,\overline G_2$ be their complements, respectively. Then $\mathfrak{s}^{k}_{G_1}=\mathfrak{s}^{k}_{G_2}$ if and only if $\mathfrak{s}^{k}_{\overline G_1}=\mathfrak{s}^{k}_{\overline G_2}$.
\end{theorem}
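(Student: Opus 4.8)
The plan is to establish a sharper, \emph{per-tuple} version of the statement and then lift it along the recursive definition of the fingerprint in Notation~\ref{nota:gi:potato}. Concretely, I would first prove that for \emph{every} choice of nodes $i_1,\dots,i_k\in V_1$ and $j_1,\dots,j_k\in V_2$,
\[
\mathfrak{s}^k_{G_1}(i_1,\dots,i_k)=\mathfrak{s}^k_{G_2}(j_1,\dots,j_k)
\iff
\mathfrak{s}^k_{\overline{G_1}}(i_1,\dots,i_k)=\mathfrak{s}^k_{\overline{G_2}}(j_1,\dots,j_k).
\]
Since complementation is an involution ($\overline{\overline G}=G$), it is enough to prove ``$\Rightarrow$''. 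So assume the left-hand equality and fix a bijection $\pi\colon V_1\to V_2$ with $\mathfrak{s}^k_{G_1}(i_1,\dots,i_k)_{i'}=\mathfrak{s}^k_{G_2}(j_1,\dots,j_k)_{\pi(i')}$ for all $i'$ (recall from the proof of Theorem~\ref{thm:gi:colorfulWalk} that this is what equality of $\mathfrak{s}^k$-labels means). I claim that the \emph{same} $\pi$ witnesses equality of the two complement labels. This is proved by induction on the column index $l$, carried out jointly with the auxiliary fact, for a single graph, that the $l$-th columns of $\mathfrak{s}^k_{G}(i_1,\dots,i_k)$ and of $\mathfrak{s}^k_{\overline G}(i_1,\dots,i_k)$ induce the same partition of $V$ into colour classes. (This is the familiar phenomenon that neighbourhood colour refinement yields the same stable colouring on a graph and on its complement when started from the same initial colouring $i\mapsto\sum_q q\,\delta_{ii_q}$.)

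For the induction, the case $l=0$ is immediate: by~(\ref{gi:perfAgr:req}) the $0$-th column is defined without reference to the edge set, so it is literally identical in $G$ and $\overline G$. For the step, let $\mathcal P_1$ and $\mathcal P_2$ be the (common) level-$(l-1)$ colour partitions of $V_1$ and $V_2$; by the induction hypothesis $\pi$ maps $\mathcal P_1$ bijectively onto $\mathcal P_2$, matched classes $P,\pi(P)$ have equal size, and on such matched classes the relevant level-$(l-1)$ labels — those of $G_1$ versus $G_2$ by the choice of $\pi$, and those of $\overline{G_1}$ versus $\overline{G_2}$ by the induction hypothesis — are constant and take matching values. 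By~(\ref{gi:perfAgr:req}), the $l$-th column entry at a node $i'$ is the pair consisting of its $(l{-}1)$-st entry and the multiset of $(l{-}1)$-st entries of its neighbours, and, the level-$(l-1)$ colouring being fixed, that multiset is equivalent data to the vector $\big(|N(i')\cap P|\big)_{P}$. The crux is the identity
\[
|N_{\overline{G_t}}(i')\cap P| \;=\; |P|-[\,i'\in P\,]-|N_{G_t}(i')\cap P| \qquad (t=1,2),
\]
valid because in a loop-free graph $P\setminus\{i'\}$ is the disjoint union of $N_{G_t}(i')\cap P$ and $N_{\overline{G_t}}(i')\cap P$ (with loops allowed, the middle term is replaced by another quantity depending only on the colour class of $i'$, so nothing below changes). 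Now equality of the $l$-th entries at $i'$ and $\pi(i')$ in $G_1,G_2$ forces the level-$(l-1)$ neighbour multisets to agree, hence $|N_{G_1}(i')\cap P|=|N_{G_2}(\pi(i'))\cap\pi(P)|$ for every $P$; together with $|P|=|\pi(P)|$, $[\,i'\in P\,]=[\,\pi(i')\in\pi(P)\,]$ and the identity above, this gives $|N_{\overline{G_1}}(i')\cap P|=|N_{\overline{G_2}}(\pi(i'))\cap\pi(P)|$ for every $P$. Reading this back through the matched colour values, the level-$(l-1)$ complement-neighbour multiset of $i'$ in $\overline{G_1}$ equals that of $\pi(i')$ in $\overline{G_2}$; with the first components equal by the induction hypothesis, we obtain $\mathfrak{s}^k_{\overline{G_1}}(i_1,\dots,i_k)_{i',l}=\mathfrak{s}^k_{\overline{G_2}}(j_1,\dots,j_k)_{\pi(i'),l}$. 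The same identity applied within one graph shows that two nodes receive the same $l$-th complement entry iff they receive the same $l$-th entry in the graph itself, which maintains the auxiliary fact. By Claim~\ref{cl:gi:onlyShortSWLMatter} only the first $n+1$ columns matter, so the per-tuple statement follows.

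Finally I would lift this to the fingerprints by downward induction on $q\in\{k,k-1,\dots,0\}$, the base case $q=k$ being the per-tuple statement just proved. For the step, $\mathfrak{s}^k_{G_1}(i_1,\dots,i_q)$ and $\mathfrak{s}^k_{G_2}(j_1,\dots,j_q)$ are, by Notation~\ref{nota:gi:potato}, the multisets $\{\mathfrak{s}^k_{G_1}(i_1,\dots,i_q,i)\}^\#$ and $\{\mathfrak{s}^k_{G_2}(j_1,\dots,j_q,j)\}^\#$; they are equal iff there is a bijection $\rho$ between the corresponding index sets with $\mathfrak{s}^k_{G_1}(i_1,\dots,i_q,i)=\mathfrak{s}^k_{G_2}(j_1,\dots,j_q,\rho(i))$ for all $i$, and by the induction hypothesis each such per-pair equality holds iff the analogous equality of complement labels does. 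Hence $\mathfrak{s}^k_{G_1}(i_1,\dots,i_q)=\mathfrak{s}^k_{G_2}(j_1,\dots,j_q)$ iff $\mathfrak{s}^k_{\overline{G_1}}(i_1,\dots,i_q)=\mathfrak{s}^k_{\overline{G_2}}(j_1,\dots,j_q)$, and $q=0$ gives the theorem. The genuinely delicate step is the induction on $l$ in the per-tuple statement: one must phrase everything through the colour classes so that ``neighbour multiset in $G$'' and ``neighbour multiset in $\overline G$'' become mutually recoverable once the previous-level class structure is known, and one must keep the self-loop bookkeeping (the term $[\,i'\in P\,]$, or its loopy analogue) under control so that no information about $G$ beyond a single $\mathfrak{s}^k$-column is ever invoked.
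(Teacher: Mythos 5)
Your proposal is correct and follows essentially the same route as the paper: reduce to one direction via $\overline{\overline G}=G$, prove a per-tuple statement by induction on the column index $l$ using the fact that the complement-neighbourhood data at level $l-1$ is recoverable from the (matched) global label data minus the $G$-neighbourhood data, and then lift to the fingerprint by downward induction on $q$ through Notation~\ref{nota:gi:potato}. The only difference is presentational: the paper carries out the key step directly as a multiset difference of level-$(l-1)$ labels, whereas you route it through colour classes and per-class counts with the explicit correction $|N_{\overline G_t}(i')\cap P|=|P|-[\,i'\in P\,]-|N_{G_t}(i')\cap P|$ (which, incidentally, is slightly more careful about the node's own label than the paper's displayed multiset identity, harmlessly so since the labels of $i$ and $j$ agree).
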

\begin{proof}
  It suffices to show that if $\mathfrak{s}^{k}_{G_1}=\mathfrak{s}^{k}_{G_2}$, then $\mathfrak{s}^{k}_{\overline G_1}=\mathfrak{s}^{k}_{\overline G_2}$, as this immediately implies the other direction, as well.
  Let $i_1,\dots,i_{k}\in V_1$ and $j_1,\dots,j_{k}\in V_2$ be such that $\mathfrak{s}^k_{G_1}(i_1,\dots,i_{k})=\mathfrak{s}^k_{G_2}(j_1,\dots,j_{k})$.

  First, we prove by induction on $l$ that if $i\in V_1$ and $j\in V_2$ are such that $\mathfrak{s}^k_{G_1}(i_1,\dots,i_{k})_{il}=\mathfrak{s}^k_{G_2}(j_1,\dots,j_{k})_{jl}$, then $\mathfrak{s}^k_{\overline G_1}(i_1,\dots,i_{k})_{il}=\mathfrak{s}^k_{\overline G_2}(j_1,\dots,j_{k})_{jl}$ holds.
  The base case, $l=0$, clearly holds, since the initial labels are the same in all four graphs.
  By the induction hypothesis, assume that the statement is true for $(l-1)$ for some $l\geq 1$.
  First observe that $\mathfrak{s}^k_{G_1}(i_1,\dots,i_{k})=\mathfrak{s}^k_{G_2}(j_1,\dots,j_{k})$ implies that
  \begin{equation}\label{eq:gi:setOfAllHashs}
    \{\mathfrak{s}^k_{G_1}(i_1,\dots,i_{k})_{i',l-1} : i'\in V_1\}^\#=\{\mathfrak{s}^k_{G_2}(j_1,\dots,j_{k})_{j',l-1} : j'\in V_2\}^\#.
  \end{equation}
  By definition, $\mathfrak{s}^k_{G_1}(i_1,\dots,i_{k})_{il}=\mathfrak{s}^k_{G_2}(j_1,\dots,j_{k})_{jl}$ implies that
    \begin{equation}\label{eq:gi:setOfNeighHashs}
        \{\mathfrak{s}^k_{G_1}(i_1,\dots,i_k)_{i',l-1}:i'\in N_{G_1}(i)\}^\#=\{\mathfrak{s}^k_{ G_2}(j_1,\dots,j_k)_{j',l-1}:j'\in N_{G_2}(j)\}^\#
    \end{equation}
    Hence,
    \begin{multline}
      \{\mathfrak{s}^k_{G_1}(i_1,\dots,i_k)_{i',l-1}:i'\in N_{\overline G_1}(i)\}^\#\\
      =\{\mathfrak{s}^k_{G_1}(i_1,\dots,i_{k})_{i',l-1} : i'\in V_1\}^\#\setminus \{\mathfrak{s}^k_{G_1}(i_1,\dots,i_k)_{i',l-1}:i'\in N_{G_1}(i)\}^\#\\
      =\{\mathfrak{s}^k_{G_2}(j_1,\dots,j_{k})_{j',l-1} : j'\in V_2\}^\#\setminus
      \{\mathfrak{s}^k_{ G_2}(j_1,\dots,j_k)_{j',l-1}:j'\in N_{G_2}(j)\}^\#\\
      =\{\mathfrak{s}^k_{G_2}(j_1,\dots,j_k)_{j',l-1}:j'\in N_{\overline G_2}(j)\}^\#,
    \end{multline}
    where the second equality holds by (\ref{eq:gi:setOfAllHashs}) and (\ref{eq:gi:setOfNeighHashs}).
    From this, $\{\mathfrak{s}^k_{\overline G_1}(i_1,\dots,i_k)_{i',l-1}:i'\in N_{\overline G_1}(i)\}^\#=\{\mathfrak{s}^k_{\overline G_2}(j_1,\dots,j_k)_{j',l-1}:j'\in N_{\overline G_2}(j)\}^\#$ by induction, hence $\mathfrak{s}^k_{\overline G_1}(i_1,\dots,i_{k})_{il}=\mathfrak{s}^k_{\overline G_2}(j_1,\dots,j_{k})_{jl}$ follows, which proves the statement.

    We show by induction on $q=k,\dots,0$ that if $\mathfrak{s}^k_{G_1}(i_1,\dots,i_{q}) = \mathfrak{s}^k_{G_2}(j_1,\dots,j_{q})$, then $\mathfrak{s}^k_{\overline G_1}(i_1,\dots,i_{q}) = \mathfrak{s}^k_{\overline G_2}(j_1,\dots,j_{q})$ --- which is the statement of the theorem if $q=0$.
    The base case, $q=k$, immediately follows from the above statement.
    By the induction hypothesis, assume that our statement holds for $(q+1)$ and show that it holds for $q$, as well.
    The assumption, $\mathfrak{s}^k_{G_1}(i_1,\dots,i_{q}) = \mathfrak{s}^k_{G_2}(j_1,\dots,j_{q})$, means that $\{\mathfrak{s}^k_{G_1}(i_1,\dots,i_{q},i) : i\in V_1\setminus\{i_1,\dots,i_{q}\}\}^\#= \{\mathfrak{s}^k_{G_2}(j_1,\dots,j_{q},j) : j\in V_2\setminus\{j_1,\dots,j_{q}\}\}^\#$.
    From this, it follows by induction that $\{\mathfrak{s}^k_{\overline G_1}(i_1,\dots,i_{q},i) : i\in V_1\setminus\{i_1,\dots,i_{q}\}\}^\#= \{\mathfrak{s}^k_{\overline G_2}(j_1,\dots,j_{q},j) : j\in V_2\setminus\{j_1,\dots,j_{q}\}\}^\#$, that is, $\mathfrak{s}^k_{\overline G_1}(i_1,\dots,i_{q}) = \mathfrak{s}^k_{\overline G_2}(j_1,\dots,j_{q})$, which had to be shown.
\end{proof}

A graph class $\mathcal{G}$ is \emph{self-complementary} if the complement of any graph of $\mathcal{G}$ is also in $\mathcal{G}$.
The previous theorem immediately implies the following observation, which will be useful in Section~\ref{sec:gi:expRes}.
\begin{corollary}\label{cor:gi:complementReduction}
  Let $\mathcal{G}$ be a self-complementary graph class.
  Then $\mathfrak{s}^k$ identifies all graphs in $\mathcal{G}$ if and only if it identifies all graphs in $\mathcal{G'}:=\{G\in\mathcal{G}:|E_G|\leq\lceil{\binom{|V_G|}{2}}/2\rceil\}$.
\end{corollary}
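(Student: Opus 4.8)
The plan is to read this off from Theorem~\ref{thm:gi:complements} together with one elementary observation: two $\mathfrak{s}^k$-equivalent graphs have the same number of nodes and the same number of edges. The forward implication of the corollary is trivial, since $\mathcal{G}'\subseteq\mathcal{G}$: if $\mathfrak{s}^k_{G_1}=\mathfrak{s}^k_{G_2}$ implies $G_1\cong G_2$ for all $G_1,G_2\in\mathcal{G}$, the same holds a fortiori for $G_1,G_2\in\mathcal{G}'$.

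For the converse I would first record the size bookkeeping. If $\mathfrak{s}^k_{G_1}=\mathfrak{s}^k_{G_2}$ then $|V_1|=|V_2|=:n$, because $\mathfrak{s}^k_G$ is (by Notation~\ref{nota:gi:potato} with $q=0$) a multiset with $|V_G|$ elements. Moreover $|E_1|=|E_2|=:m$: the column of index $1$ of $\mathfrak{s}^k_G(i_1,\dots,i_k)$ at row $i$ contains, by~(\ref{gi:perfAgr:req}), the multiset $\{\mathfrak{s}^k_G(i_1,\dots,i_k)_{i',0}:i'\in N_G(i)\}^\#$, whose cardinality is $\deg_G(i)$, so the fingerprint encodes the degree sequence and hence the edge count. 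Next comes the only genuinely combinatorial point: writing $N:=\binom{n}{2}$, any $G\in\mathcal{G}$ with $|E_G|>\lceil N/2\rceil$ satisfies $|E_{\overline G}|=N-|E_G|\le \lfloor N/2\rfloor-1<\lceil N/2\rceil$, so $\overline G\in\mathcal{G}'$ by self-complementarity of $\mathcal{G}$; and if $|E_G|\le\lceil N/2\rceil$ then $G\in\mathcal{G}'$ already. Thus every graph of $\mathcal{G}$ either lies in $\mathcal{G}'$ or has its complement in $\mathcal{G}'$.

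Now take $G_1,G_2\in\mathcal{G}$ with $\mathfrak{s}^k_{G_1}=\mathfrak{s}^k_{G_2}$, and assume $\mathfrak{s}^k$ identifies all graphs in $\mathcal{G}'$. By the paragraph above, $G_1$ and $G_2$ have the same $n$ and the same $m$, so either (a) $m\le\lceil N/2\rceil$, whence $G_1,G_2\in\mathcal{G}'$ and the hypothesis gives $G_1\cong G_2$ at once; or (b) $m>\lceil N/2\rceil$, whence $\overline{G_1},\overline{G_2}\in\mathcal{G}'$. In case (b), Theorem~\ref{thm:gi:complements} gives $\mathfrak{s}^k_{\overline{G_1}}=\mathfrak{s}^k_{\overline{G_2}}$, the hypothesis applied in $\mathcal{G}'$ gives $\overline{G_1}\cong\overline{G_2}$, and complementing an isomorphism yields $G_1\cong G_2$. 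This proves the corollary.

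I do not expect a real obstacle. The two things that need a line of care are (i) that $\mathfrak{s}^k$-equivalence forces equal edge counts — without this we could not tell which of a graph and its complement falls below the threshold $\lceil N/2\rceil$ — and (ii) the off-by-one verification that $N-m<\lceil N/2\rceil$ whenever $m>\lceil N/2\rceil$, which should be checked for both parities of $N$. Everything else is a direct invocation of Theorem~\ref{thm:gi:complements} and the inclusion $\mathcal{G}'\subseteq\mathcal{G}$.
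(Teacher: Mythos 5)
Your proof is correct and follows the route the paper intends: the paper gives no explicit proof, stating that the corollary follows immediately from Theorem~\ref{thm:gi:complements}, and your argument is exactly that reduction, supplemented with the (correct) bookkeeping that $\mathfrak{s}^k$-equivalence forces equal node and edge counts so that $G_1$ and $G_2$ fall on the same side of the threshold $\lceil\binom{n}{2}/2\rceil$.
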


In what follows, we prove that two 3-connected planar graphs are isomorphic if and only if they are $\mathfrak{s}^3$-equivalent.
\begin{lemma}\label{lem:gi:Labels3ConnPlanar}
  Let $G$ be a 3-connected planar graph.
  If $i_1,i_2,i_3\in V$ are three distinct nodes sharing a common face, then $\mathfrak{s}^3_{G}(i_1,i_2,i_3)_i \neq \mathfrak{s}^3_{G}(i_1,i_2,i_3)_j$ for all distinct $i,j\in V$.
\end{lemma}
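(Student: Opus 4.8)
The plan is to combine the structural rigidity of 3-connected planar graphs (Whitney's theorem: such a graph has an essentially unique planar embedding, and in particular a well-defined set of faces) with Theorem~\ref{thm:gi:colorfulWalk}, which tells us that nodes with equal $\mathfrak{s}^3$-labels have matching counts of walks of every pattern. First I would set up notation: fix a planar embedding of $G$, and let $F$ be the common face of $i_1,i_2,i_3$. The three "seed" nodes carry pairwise distinct initial labels (by the term $\sum_q q\,\delta_{ii_q}$ in~(\ref{gi:perfAgr:req}), the values $1,2,3$), so every other node has initial label $0$; thus a pattern records, for each step of a walk, which of the four classes $\{i_1\},\{i_2\},\{i_3\},\text{other}$ the visited node lies in. The goal is to show that the pattern-count vector is a complete invariant for the remaining nodes, i.e. injective on $V\setminus\{i_1,i_2,i_3\}$ (and distinct from each seed, which is immediate from the initial labels).

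The key geometric idea is that in a 3-connected plane graph, three nodes on a common face $F$ separate the rest of the graph in a strong way — one can think of $F$ as (say) the outer face and of $i_1,i_2,i_3$ as anchors on its boundary. I would argue that for any node $v$, the "combinatorial position" of $v$ relative to the triangle-like configuration $(i_1,i_2,i_3)$ is recoverable from walk-pattern data. Concretely: the distances $\dist(v,i_1),\dist(v,i_2),\dist(v,i_3)$ and, more refined, the numbers of shortest (and near-shortest) walks from $v$ to each $i_q$ that do / do not pass through the other two seeds are all encoded in pattern-counts. The plan is to show these quantities already pin down $v$ uniquely, using planarity: if two distinct nodes $v,w$ had identical such data, one could build a planar obstruction — essentially a $K_{3,3}$ or a violation of 3-connectivity — by routing disjoint paths from $v$ and from $w$ to the three seeds. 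This is where Menger/Whitney-type arguments enter: 3-connectivity guarantees three internally disjoint paths from any node to $\{i_1,i_2,i_3\}$, and the faces of $G$ are exactly the non-separating induced cycles, so the face $F$ gives a canonical cyclic reference frame.

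A cleaner route, which I would try first, is to induct using Theorem~\ref{thm:gi:colorfulWalk} directly on the "peeling" structure: since $i_1,i_2,i_3$ lie on a common face, $G$ minus these three nodes is connected (3-connectivity) and one can reconstruct, layer by layer via BFS from $\{i_1,i_2,i_3\}$, an ordering of all nodes that is determined by the labels — much as in the trees theorem earlier, but now the face condition supplies the planarity needed to make the reconstruction rigid. At each layer, a node's multiset of labelled neighbours in the previous layer, together with the pattern-counts from Theorem~\ref{thm:gi:colorfulWalk}, determines it uniquely because planarity forbids two nodes in the same layer from having the same "attachment pattern" to the fan of already-identified nodes without creating a forbidden minor. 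I would formalize "attachment pattern" via the rotation system, which is canonical by Whitney's theorem.

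The main obstacle I anticipate is exactly this last point: translating "the labels record all pattern-counts" into "the rotation system / embedding is recoverable." Pattern-counts are a purely metric-combinatorial statistic, while planar embeddings are topological, so the bridge must be an argument that any two non-isomorphic local configurations around a node (relative to the three anchors) are distinguished by some walk pattern — and making that exhaustive is delicate. I expect the proof to lean on a minimality/contradiction setup: assume $i,j$ are distinct with $\mathfrak{s}^3_G(i_1,i_2,i_3)_i=\mathfrak{s}^3_G(i_1,i_2,i_3)_j$, take such a pair minimizing, say, $\min_q \dist(\cdot,i_q)$, and derive that their neighbourhoods toward the anchors coincide as labelled sets, then invoke 3-connectivity plus planarity (non-separating cycles, or a direct $K_{3,3}$/$K_5$ extraction) to conclude $i=j$. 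The rest — that this local theorem globalizes to $\mathfrak{s}^3$-equivalence implying isomorphism for 3-connected planar graphs — should follow routinely by the same bootstrapping from faces to the whole graph as in the tree case.
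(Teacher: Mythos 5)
Your proposal does not contain a proof; it is a plan whose central step is exactly the part left open, and it misses the idea the paper actually uses. The paper's argument is short and analytic: pin $i_1,i_2,i_3$ at three affinely independent points of $\R^2$, start every other node at a common point, and iterate the barycentric averaging map $\gamma_k(i)=\frac{1}{\delta_G(i)}\sum_{i'\in N_G(i)}\gamma_{k-1}(i')$ (anchors kept fixed). By Tutte's spring-embedding theorem this iteration converges to a planar embedding of the 3-connected planar graph $G$, so $\gamma_k$ is injective for large $k$. A one-line induction then shows that the $\mathfrak{s}^3$-labels \emph{refine} the averaged positions: if two nodes have equal labels in column $k$, then by (\ref{gi:perfAgr:req}) they have equal degrees and equal multisets of neighbours' column-$(k-1)$ labels, hence (by induction) equal multisets of neighbours' positions $\gamma_{k-1}$, hence equal averages $\gamma_k$. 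Injectivity of $\gamma_k$ for large $k$ therefore forces the labels of distinct nodes to differ. No rotation systems, Whitney uniqueness, Menger arguments or forbidden minors are needed.

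Your route, by contrast, reduces the labels to pattern-counts of walks (via Theorem~\ref{thm:gi:colorfulWalk}) and then asserts that distance triples to the anchors, counts of (near-)shortest walks avoiding the other seeds, or ``attachment patterns'' in a BFS peeling, pin down a vertex uniquely, with a contradiction to be extracted as a $K_5$/$K_{3,3}$ minor or a violation of 3-connectivity. None of this is established, and it is the crux: it is far from clear (and certainly not proved by the sketched minimal-counterexample setup) that such metric statistics separate every pair of vertices of a 3-connected planar graph relative to three face vertices --- many vertices can share distance triples and shortest-path counts, and two vertices in the same BFS layer can have identical attachment multisets without any forbidden-minor obstruction arising, since the anchors' distinguishing power has to propagate through the full recursive multiset structure, not just through these aggregated counts. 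You flag this translation from ``pattern-counts'' to ``embedding recoverable'' as the main obstacle yourself, but never resolve it, so the proposal has a genuine gap precisely where the paper supplies its key lemma-level idea (the Tutte iteration as an injective limit that the labelling dominates).
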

\begin{proof}
  For all $k\in\Z_+$, let $\gamma_k$ be a $V\rightarrow\R^2$ function defined as follows.
  If $k=0$, then let
  \begin{equation}
    \gamma_0(i):=
    \begin{cases}
      (0,0)        & \text{if } i=i_1,\\
      (0,1)        & \text{if } i=i_2,\\
      (1,0)        & \text{if } i=i_3,\\
      (1,1)        & \text{otherwise.}\\
    \end{cases}
  \end{equation}
  For $k\geq 1$, let
  \begin{equation}
    \gamma_k(i):=
    \begin{cases}
      \gamma_{k-1}(i)                                                & \text{if } i\in\{i_1,i_2,i_3\},\\
      \frac{1}{\delta_G(i)}\sum\limits_{i'\in N_G(i)}\gamma_{k-1}(i')& \text{otherwise.}\\
    \end{cases}
  \end{equation}

  As $k$ goes to infinity, $\gamma_k$ converges to a planar embedding~\cite{TutteDrawGraphs}, hence $\gamma_k$ is an injection for sufficiently large $k$.
  Therefore, it suffices to show that
  \begin{equation}\label{eq:gi:IndEq}
    \gamma_k(i)\neq\gamma_k(j) \Longrightarrow \mathfrak{s}^3_{G}(i_1,i_2,i_3)_{ik}\neq\mathfrak{s}^3_{G}(i_1,i_2,i_3)_{jk}
  \end{equation}
  holds for all $i,j\in V$, which we prove by induction on $k$.

  The base case, $\gamma_0(i)\neq\gamma_0(j) \Longrightarrow \mathfrak{s}^3_{G}(i_1,i_2,i_3)_{i0}\neq\mathfrak{s}^3_{G}(i_1,i_2,i_3)_{j0}$, easily follows by definition. By induction, suppose that (\ref{eq:gi:IndEq}) holds for $(k-1)$, where $k\geq 1$.

  If $i\in\{i_1,i_2,i_3\}$ or $j\in\{i_1,i_2,i_3\}$, then (\ref{eq:gi:IndEq}) holds, since all the rows of $\restr{\mathfrak{s}^3_{G}(i_1,i_2,i_3)}{k}$ corresponding to nodes $\{i_1,i_2,i_3\}$ are unique. Assume that $i,j\notin\{i_1,i_2,i_3\}$. By definition, $\gamma_{k}(i)\neq\gamma_{k}(j)$ means that
  \begin{equation}\label{eq:gi:differentAvgEq}
    \frac{1}{\delta_G(i)}\sum\limits_{i'\in N_G(i)}\gamma_{k-1}(i')\neq \frac{1}{\delta_G(j)}\sum\limits_{j'\in N_G(j)}\gamma_{k-1}(j').
  \end{equation}

  If $\delta_G(i)\neq\delta_G(j)$, then (\ref{eq:gi:IndEq}) holds by the definition of $\mathfrak{s}^3_{G}(i_1,i_2,i_3)$. Otherwise, (\ref{eq:gi:differentAvgEq}) implies that
  \begin{equation*}
    \{\gamma_{k-1}(i'):i'\in N_G(i)\}^\#\neq\{\gamma_{k-1}(j'):j'\in N_G(j)\}^\#
  \end{equation*}
  which, by induction, means that
  \begin{equation}
    \{\mathfrak{s}^3_{G}(i_1,i_2,i_3)_{i',k\!-\!1} : i'\in N_G(i)\}^\#
    \neq\{\mathfrak{s}^3_{G}(i_1,i_2,i_3)_{j',k\!-\!1} : j'\in N_G(j)\}^\#
  \end{equation}
  holds, and therefore $\mathfrak{s}^3_{G}(i_1,i_2,i_3)_{ik}\neq\mathfrak{s}^3_{G}(i_1,i_2,i_3)_{jk}$.
\end{proof}

\begin{theorem}
Two 3-connected planar graphs, $G_1$ and $G_2$ are isomorphic if and only if $\mathfrak{s}^3_{G_1}=\mathfrak{s}^3_{G_2}$.
\end{theorem}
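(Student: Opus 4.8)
The "only if" direction is immediate from the definitions, so the plan is to extract an isomorphism from the hypothesis $\mathfrak{s}^3_{G_1}=\mathfrak{s}^3_{G_2}$, with Lemma~\ref{lem:gi:Labels3ConnPlanar} as the engine. First I would fix three distinct nodes $i_1,i_2,i_3\in V_1$ lying on a common face of $G_1$; these exist because a $3$-connected simple plane graph has every face bounded by a cycle, necessarily of length at least three (Whitney/Steinitz). Next I would unwind Notation~\ref{nota:gi:potato} one layer at a time: from $\mathfrak{s}^3_{G_1}=\{\mathfrak{s}^3_{G_1}(v)\}^\#=\{\mathfrak{s}^3_{G_2}(v)\}^\#=\mathfrak{s}^3_{G_2}$ pick $j_1\in V_2$ with $\mathfrak{s}^3_{G_1}(i_1)=\mathfrak{s}^3_{G_2}(j_1)$, then from the analogous equality for $\mathfrak{s}^3_{G_1}(i_1)$ pick $j_2\in V_2\setminus\{j_1\}$ with $\mathfrak{s}^3_{G_1}(i_1,i_2)=\mathfrak{s}^3_{G_2}(j_1,j_2)$, and once more $j_3\in V_2\setminus\{j_1,j_2\}$ with $\mathfrak{s}^3_{G_1}(i_1,i_2,i_3)\peq\mathfrak{s}^3_{G_2}(j_1,j_2,j_3)$. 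The witnessing permutation gives a bijection $\pi\colon V_1\to V_2$ with $\mathfrak{s}^3_{G_1}(i_1,i_2,i_3)_i=\mathfrak{s}^3_{G_2}(j_1,j_2,j_3)_{\pi(i)}$ for every $i$ (and, reading off the $l=0$ column, forced to satisfy $\pi(i_r)=j_r$); the claim to be proved is that $\pi$ is an isomorphism $G_1\to G_2$.

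To finish, I would recover adjacency from the labels. By Lemma~\ref{lem:gi:Labels3ConnPlanar} the rows of $\restr{\mathfrak{s}^3_{G_1}(i_1,i_2,i_3)}{n+1}$ are pairwise distinct; since the partitions of $V_1$ refined by successive columns stabilize and each column nests all lower-indexed ones as iterated first components, there is an index $l^*$ (which can be taken with $l^*+1$ still within the first $n+1$ columns) for which the single entries $\mathfrak{s}^3_{G_1}(i_1,i_2,i_3)_{i,l^*}$ are already pairwise distinct over $i\in V_1$, and likewise over $V_2$. Fixing $i\in V_1$ and comparing the neighbour-multiset component of column $l^*+1$ of the rows indexed $i$ and $\pi(i)$ then yields
\[
  \{\,\mathfrak{s}^3_{G_1}(i_1,i_2,i_3)_{i',l^*}:i'\in N_{G_1}(i)\,\}^\#
  =\{\,\mathfrak{s}^3_{G_2}(j_1,j_2,j_3)_{j',l^*}:j'\in N_{G_2}(\pi(i))\,\}^\#.
\]
Because the column-$l^*$ entries are pairwise distinct on both sides and $\mathfrak{s}^3_{G_1}(i_1,i_2,i_3)_{i',l^*}=\mathfrak{s}^3_{G_2}(j_1,j_2,j_3)_{\pi(i'),l^*}$ for every $i'$, this multiset identity is equivalent to $\pi(N_{G_1}(i))=N_{G_2}(\pi(i))$. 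As $i$ is arbitrary, both $\pi$ and $\pi^{-1}$ preserve edges, so $\pi$ is an isomorphism.

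I expect essentially all of the difficulty to be already discharged inside Lemma~\ref{lem:gi:Labels3ConnPlanar} — the Tutte spring-embedding argument showing that fixing a facial triple drives all node labels apart — which we are entitled to assume; everything after that is bookkeeping. Within that bookkeeping the two points worth spelling out are the existence of three distinct vertices on a common face of a $3$-connected planar graph, and the elementary observation that a single sufficiently deep column of an $\mathfrak{s}^3$-label already separates all nodes once the whole $(n+1)$-column matrix does, which is exactly what lets one read adjacency off two consecutive columns. Neither is a real obstacle, so the proof is short modulo the lemma.
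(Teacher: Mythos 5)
Your proposal is correct and follows essentially the same route as the paper: fix a facial triple $i_1,i_2,i_3$, unwind the multiset fingerprint to obtain $j_1,j_2,j_3$ with $\mathfrak{s}^3_{G_1}(i_1,i_2,i_3)\peq\mathfrak{s}^3_{G_2}(j_1,j_2,j_3)$, use Lemma~\ref{lem:gi:Labels3ConnPlanar} to get a unique row-matching bijection $\pi$, and recover adjacency by comparing two consecutive columns. The only (harmless) deviation is at the very end: the paper steps out to column $n+1$ via Claim~\ref{cl:gi:onlyShortSWLMatter}, whereas you stay within the first $n+1$ columns by the stabilization argument yielding a single separating column $l^*\le n-1$; both versions are valid.
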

\begin{proof}
  It suffices to show that if $\mathfrak{s}^3_{G_1}=\mathfrak{s}^3_{G_2}$, then $G_1$ and $G_2$ are isomorphic. Let $i_1,i_2,i_3\in V_1$ be three distinct nodes on a common face in some planar embedding of $G_1$. By definition, $\mathfrak{s}^3_{G_1}=\mathfrak{s}^3_{G_2}$ means that $\{\mathfrak{s}^3_{G_1}(i):i\in V_1\}^\#=\{\mathfrak{s}^3_{G_2}(j):j\in V_2\}^\#$, therefore there exists $j_1\in V_2$ such that $\mathfrak{s}^3_{G_1}(i_1)=\mathfrak{s}^3_{G_2}(j_1)$. Similarly, one gets that there exists $j_2\in V_2$ such that $\mathfrak{s}^3_{G_1}(i_1,i_2)=\mathfrak{s}^3_{G_2}(j_1,j_2)$, and there exists $j_3\in V_2$ such that $\mathfrak{s}^3_{G_1}(i_1,i_2,i_3)\peq\mathfrak{s}^3_{G_2}(j_1,j_2,j_3)$.
  The following claim provides the sought bijection.
  \begin{claim}
    There exists a unique bijection $\pi:V_1\to V_2$ such that $\mathfrak{s}^3_{G_1}(i_1,i_2,i_3)_i\peq\mathfrak{s}^3_{G_2}(j_1,j_2,j_3)_{\pi(i)}$ holds for all $i\in V_1$, and this $\pi$ is edge-preserving.
  \end{claim}
  \begin{proof}
    By Lemma~\ref{lem:gi:Labels3ConnPlanar}, the labels in $G_1$ are unique, that is
    \begin{equation}\label{eq:gi:uniqueLabelsInG1Eq}
      \mathfrak{s}^3_{G_1}(i_1,i_2,i_3)_i\peq\mathfrak{s}^3_{G_1}(i_1,i_2,i_3)_{i'}\Longleftrightarrow i=i'
    \end{equation}
    follows. Since $\mathfrak{s}^3_{G_1}(i_1,i_2,i_3)\peq\mathfrak{s}^3_{G_2}(j_1,j_2,j_3)$, the labels in $G_2$ are unique too, that is, we get that
    \begin{equation}\label{eq:gi:uniqueLabelsInG2Eq}
      \mathfrak{s}^3_{G_2}(j_1,j_2,j_3)_j\peq\mathfrak{s}^3_{G_2}(j_1,j_2,j_3)_{j'}\Longleftrightarrow j=j'.
    \end{equation}
    Given that $\mathfrak{s}^3_{G_1}(i_1,i_2,i_3)\peq\mathfrak{s}^3_{G_2}(j_1,j_2,j_3)$, the unique existence of $\pi$ easily follows from (\ref{eq:gi:uniqueLabelsInG1Eq}) and (\ref{eq:gi:uniqueLabelsInG2Eq}). In order to show that $\pi$ is edge-preserving, observe that (\ref{eq:gi:uniqueLabelsInG1Eq}) and (\ref{eq:gi:uniqueLabelsInG2Eq}) hold even for the first $(n+1)$ columns of matrices $\mathfrak{s}^3_{G_1}(i_1,i_2,i_3)$ and $\mathfrak{s}^3_{G_2}(j_1,j_2,j_3)$ by Claim~\ref{cl:gi:onlyShortSWLMatter}. Accordingly, no two rows turn out to be different in column $(n+2)$. More precisely,
    \begin{equation}\label{eq:gi:iterationNPlusTwo}
      \{\mathfrak{s}^3_{G_1}(i_1,i_2,i_3)_{i',n\!+\!1}: i'\in N_{G_1}(i)\}^\#=\{\mathfrak{s}^3_{G_2}(j_1,j_2,j_3)_{j',n\!+\!1}: j'\in N_{G_2}(\pi(i))\}^\#
    \end{equation}
    hold for all nodes $i\in V_1$. Observe that for all nodes $i\in V_1$
    \begin{equation}\label{eq:gi:edgePresEq}
      \{\pi(i'): i'\in N_{G_1}(i)\}^\#=\{j': j'\in N_{G_2}(\pi(i))\}^\#
    \end{equation}
    follows from (\ref{eq:gi:iterationNPlusTwo}), as the rows of matrices $\restr{\mathfrak{s}^3_{G_1}(i_1,i_2,i_3)}{n+1}$ and $\restr{\mathfrak{s}^3_{G_2}(j_1,j_2,j_3)}{n+1}$ uniquely identify the corresponding nodes. Equation~(\ref{eq:gi:edgePresEq}) means that $\pi$ is edge-preserving, which completes the proof.
  \end{proof}
\end{proof}

\subsection{Experimental Results}\label{sec:gi:expRes}
We verified that $\mathfrak{s}^{2}$ identifies all \num{165078520805} graphs on at most 12 nodes, that is, two such graphs are isomorphic if and only if they are $\mathfrak{s}^2$-equivalent.
We also considered all $r$-regular graphs for $r\in\{3,\dots,10\}$ on larger node sets (see Table~\ref{tbl:gi:expResRegular}), and all of them were identified by $\mathfrak{s}^{2}$. It also identifies all \num{43753} strongly regular graphs on at most 64 nodes.

The experiments were run on the HPC called Atlasz~\cite{atlasz}, which is a computer cluster with $45$ computing nodes, each of which is equipped with an 18-core Intel Xeon Gold 6240 CPU and 90GB of RAM.
Essentially, this means that the system consists of $45$ separate computers, which can communicate with each other over network connection.
We computed the hash values of all the considered graphs using these computing nodes, each processing the graphs in parallel.
The small graphs were generated using the Nauty package~\cite{NautyII}.
Note that the graphs on $n$ nodes form a self-complementary graph class, hence one can significantly reduce the number of graphs to be investigated by Corollary~\ref{cor:gi:complementReduction}.
We collected the hash values on hard disk --- as storing about four terabytes of data in memory was not possible.
To minimize the running time and the space requirement, we used a non-cryptographic 64-bit hash function, called MurmurHash~\cite{MurmurHash}.
We found that there were only a couple of thousand graph pairs with the same hash value, for which we could easily compute the much stronger and slower SHA512 hash values, which successfully distinguished all the graphs.
The computation took more than three weeks.

\begin{table}[t]
\centering
\begin{tabular}{|r|c|r|}
\hline
\multicolumn{1}{|c|}{degree (r)} & \multicolumn{1}{c|}{number of nodes (n)} & \multicolumn{1}{c|}{\parbox{3.6cm}{\centering\vspace{1mm}number of $r$-regular graphs\\on at most $n$ nodes\vspace{1mm}}} \\ \hline
3                       & $\leq 26$              & \num{2220297316}                            \\ \hline
4                       & $\leq 18$              & \num{1081035905}                            \\ \hline
5                       & $\leq 16$              & \num{2588603970}                            \\ \hline
6                       & $\leq 15$              & \num{1492278977}                            \\ \hline
7                       & $\leq 14$              & \num{21610854}                              \\ \hline
8                       & $\leq 15$              & \num{1473763950}                            \\ \hline
9                       & $\leq 14$              & \num{88203}                                 \\ \hline
10                      & $\leq 16$              & \num{2585942872}                            \\ \hline
\end{tabular}
\caption{ Maximum node numbers of the tested regular graphs, and the number of such graphs. }
\label{tbl:gi:expResRegular}
\end{table}

\subsection{Indistinguishable Graph Pairs}
In the light of the positive results and the computational experiments, it is quite natural to ask whether there exists a non-isomorphic graph pair that can not be distinguished by $\mathfrak{s}^k$.
In this section, we construct non-isomorphic graph pairs that have the same $\mathfrak{s}^1$ or $\mathfrak{s}^2$ fingerprints.

We need the following well-known definition.
\begin{definition}
A regular graph $G=(V, E)$ with $n$ nodes and degree $d$ is said to be \emph{strongly regular} if there exist integers $\lambda$ and $\mu$ such that every two adjacent nodes have exactly $\lambda$ common neighbors, and every two non-adjacent nodes have exactly $\mu$ common neighbors.
\end{definition}

\begin{theorem}
  If $G_1$ and $G_2$ are connected strongly regular graphs with the same parameters $(n,d,\lambda,\mu)$, then $\mathfrak{s}^1_{G_1}=\mathfrak{s}^1_{G_2}$.
\end{theorem}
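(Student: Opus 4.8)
The plan is to show that for two strongly regular graphs with the same parameters $(n,d,\lambda,\mu)$, the $\mathfrak{s}^1$-labels of every node are identical, and more strongly that for every ordered pair of a node and a vertex, $\mathfrak{s}^1_{G_1}(i) \peq \mathfrak{s}^1_{G_2}(j)$. The key observation is that $\mathfrak{s}^1_G(i_1)_{il}$ is determined recursively by the multiset of labels of neighbors at the previous level, with the initial label at $l=0$ recording only whether $i = i_1$. So what really matters is the ``distance-type'' structure: in a connected SRG, the distance from $i_1$ to any node $i$ is $0$, $1$, or $2$ (diameter $\le 2$), and I claim that $\mathfrak{s}^1_G(i_1)_{i,\bullet}$ depends only on $\dist(i_1,i)\in\{0,1,2\}$.

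First I would prove by induction on $l$ that for a fixed source $i_1$, $\mathfrak{s}^1_G(i_1)_{il}$ depends only on $\dist(i_1,i)$ (and on $l$), and moreover that the common value is the same function of $\dist$, $l$, and the parameters $(n,d,\lambda,\mu)$ for any SRG with those parameters. The base case $l=0$ is immediate: the label is $(\emptyset,\{\delta_{ii_1}\})$, which distinguishes $\dist = 0$ from $\dist \ge 1$. For the inductive step, I need to count, for a node $i$ at distance $t$ from $i_1$, how many of its $d$ neighbors lie at distance $0$, $1$, or $2$ from $i_1$ — because by the induction hypothesis the multiset $\{\mathfrak{s}^1_G(i_1)_{i',l-1} : i' \in N_G(i)\}^\#$ is determined by those three counts. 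These counts are exactly the intersection numbers of the SRG: a neighbor of $i_1$ (distance $1$) has $1$ neighbor at distance $0$ (namely $i_1$), $\lambda$ neighbors at distance $1$, and $d-1-\lambda$ at distance $2$; a node at distance $2$ has $0$ neighbors at distance $0$, $\mu$ at distance $1$, and $d-\mu$ at distance $2$; and $i_1$ itself has all $d$ neighbors at distance $1$. Since these numbers depend only on $(n,d,\lambda,\mu)$, the inductive step goes through and yields the same label-as-function-of-distance in both graphs.

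Next, having established that in each graph the function $i \mapsto \mathfrak{s}^1_{G_k}(i_1)$ factors through $\dist(i_1,\cdot)$ with a parameter-dependent value, I would count how many nodes lie at each distance: exactly one at distance $0$, exactly $d$ at distance $1$, and exactly $n-1-d$ at distance $2$ — again a function of the parameters only. Therefore the multiset $\{\mathfrak{s}^1_{G_1}(i_1)_{i} : i\in V_1\}^\#$ equals $\{\mathfrak{s}^1_{G_2}(j_1)_{j} : j\in V_2\}^\#$ for any $i_1\in V_1$, $j_1\in V_2$, i.e.\ $\mathfrak{s}^1_{G_1}(i_1) \peq \mathfrak{s}^1_{G_2}(j_1)$. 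Finally, taking the multiset over all source nodes $i_1$, and using that all $n$ sources give permutation-equal labels in both graphs (so each label appears with multiplicity $n$), gives $\mathfrak{s}^1_{G_1} = \mathfrak{s}^1_{G_2}$, which is the claim.

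I expect the main obstacle to be purely bookkeeping: carefully setting up the induction so that the ``label depends only on distance, via the same formula in both graphs'' statement is precise enough to be used in the recursive step, in particular making sure that equality of the neighbor-label multisets follows from equality of the three distance-counts. One subtlety worth noting: the argument uses connectedness to guarantee diameter $\le 2$ (so that ``distance'' takes only three values and $\mu \ge 1$), and for $\mu = 0$ / disconnected cases the statement would need the easy separate treatment, but the hypothesis already assumes $G_1,G_2$ connected, so this is not an issue. No spectral machinery is needed here — this is a direct combinatorial induction driven by the SRG intersection numbers.
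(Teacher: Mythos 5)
Your proposal is correct and follows essentially the same route as the paper: both arguments partition the nodes by distance $0,1,2$ from the source (using that a connected SRG has diameter at most $2$), prove by induction on $l$ that the label at step $l$ depends only on that distance via the intersection numbers $d$, $1/\lambda/(d-1-\lambda)$, $0/\mu/(d-\mu)$, and then conclude from the layer sizes $1$, $d$, $n-1-d$ that all node labels, and hence the fingerprints, coincide. No substantive difference; your explicit counting of layer sizes is merely left implicit in the paper.
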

\begin{proof}
  Recall that $\mathfrak{s}^1_{G_1}=\{\mathfrak{s}^1_{G_1}(i) : i\in V_1\}$ and $\mathfrak{s}^1_{G_2}=\{\mathfrak{s}^1_{G_2}(j) : j\in V_1\}$. It suffices to show that $\mathfrak{s}^1_{G_1}(i)=\mathfrak{s}^1_{G_2}(j)$ for all $i\in V_1,j\in V_2$. Let $i\in V_1,j\in V_2$ be arbitrary. First observe that any node of $G_1$ and $G_2$ can be reached in at most two steps from $i$ and $j$, respectively. Let $L^1_q\subseteq V_1$ and $L^2_q\subseteq V_2$ denote the nodes that are at distance $q$ from $i$ in $G_1$ and from $j$ in $G_2$, respectively, where $q\in\{0,1,2\}$.

  By definition, $L^1_0=\{i\}$, $L^2_0=\{j\}$, $L^1_1=N_{G_1}(i)$, $L^2_1=N_{G_2}(j)$, $L^1_2=V_1\setminus(L^1_0\cup L^1_1)$ and $L^2_2=V_2\setminus(L^2_0\cup L^2_1)$. Observe that there are no edges between $L^p_0$ and $L^p_2$ for $p\in\{1,2\}$, hence any node of $L^p_2$ has exactly $\mu$ neighbors in $L^p_1$ and exactly $(d-\mu)$ neighbors in $L^p_2$.

  In what follows, it is shown that $\restr{\mathfrak{s}^1_{G_1}(i)}{l} = \restr{\mathfrak{s}^1_{G_2}(j)}{l}$ for all $l\geq 0$. If $l=0$, then the statement holds because the initial labels are the same. For $l=1$, the labels of the nodes of $L^p_0,L^p_1$ and $L^p_2$ are, $(1,\{\underbrace{0,\dots,0}_d\}^\#)$, $(0,\{1,\underbrace{0,\dots,0}_{d-1}\}^\#)$ and $(0,\{\underbrace{0,\dots,0}_{d}\}^\#)$, respectively, that is, they are the same in both graphs.

  For $l\geq 2$, we show that the labels of the nodes in $L^1_q$ and $L^2_q$ remain the same in $G_1$ and $G_2$ for all $q\in\{0,1,2\}$. Let $h_q$ denote the labels for $(l-1)$ of the nodes in $L^1_q$ for $q\in\{0,1,2\}$. The new labels of the nodes of $L^p_0,L^p_1$ and $L^p_2$ are, by definition, $(h_0,\{\underbrace{h_1,\dots,h_1}_d\}^\#)$, $(h_1,\{h_0,\underbrace{h_1,\dots,h_1}_{\lambda},\underbrace{h_2,\dots,h_2}_{d-\lambda-1}\}^\#)$ and $(h_2,\{\underbrace{h_1,\dots,h_1}_{\mu},\underbrace{h_2,\dots,h_2}_{d-\mu}\}^\#)$, respectively, for both $p=1$ and $p=2$. Hence one gets that $\mathfrak{s}^1_{G_1}(i)=\mathfrak{s}^1_{G_2}(j)$ holds for all $i\in V_1$, $j\in V_2$, meaning that $\mathfrak{s}^1_{G_1}=\mathfrak{s}^1_{G_2}$, which completes the proof.
\end{proof}

As there exist two non-isomorphic strongly regular graphs with the same parameters (the smallest two such graphs have parameters $(16,6,2,2)$), the previous theorem immediately implies the following.

\begin{corollary}
  There exist two non-isomorphic graphs $G_1$ and $G_2$ which are not distinguished by $\mathfrak{s}^1$, that is, $\mathfrak{s}^1_{G_1}=\mathfrak{s}^1_{G_2}$.
\end{corollary}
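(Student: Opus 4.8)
The plan is to obtain the corollary as an immediate application of the preceding theorem, so the only real work is to point to a concrete non-isomorphic pair of connected strongly regular graphs sharing all four parameters $(n,d,\lambda,\mu)$. First I would invoke the classical fact that the smallest such pair occurs for the parameter set $(16,6,2,2)$: the $4\times 4$ rook's graph (equivalently, the line graph of $K_{4,4}$) and the Shrikhande graph are both strongly regular with these parameters. They are not isomorphic, which can be seen from their local structure — in a $(16,6,2,2)$ strongly regular graph the neighborhood of any vertex is a $2$-regular graph on $6$ vertices, hence either $C_6$ or $2K_3$, and it is $2K_3$ for the rook's graph (the three row-neighbors and the three column-neighbors form disjoint triangles with no edges between them) but $C_6$ for the Shrikhande graph.

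Next I would verify that both graphs satisfy the hypotheses of the previous theorem. Having the same parameters holds by construction, and connectedness is automatic for any strongly regular graph with $\mu\ge 1$: from any vertex, every non-adjacent vertex is reached in two steps through one of the $\mu\ge 1$ common neighbors, and here $\mu=2$. Thus, setting $G_1$ to be the $4\times4$ rook's graph and $G_2$ the Shrikhande graph (relabeled so that both node sets are $\{1,\dots,16\}$), the theorem gives $\mathfrak{s}^1_{G_1}=\mathfrak{s}^1_{G_2}$, while we have already noted $G_1\not\cong G_2$. This is exactly the statement of the corollary.

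There is essentially no obstacle here beyond knowing (or citing) that a non-isomorphic pair of parameter-equal connected strongly regular graphs exists; the $16$-vertex example is the cleanest and keeps the corollary self-contained given the preceding theorem. If one preferred not to name the $(16,6,2,2)$ example explicitly, one could instead appeal to any of the infinitely many parameter sets known to admit several non-isomorphic strongly regular graphs (for instance those arising from inequivalent Steiner triple systems or Latin squares), but this would not shorten the argument.
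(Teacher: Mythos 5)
Your proposal is correct and follows exactly the paper's route: the paper likewise derives the corollary by citing the existence of non-isomorphic strongly regular graphs with the same parameters (noting the smallest case $(16,6,2,2)$) and applying the preceding theorem. Your extra details — naming the rook's graph and the Shrikhande graph and checking connectedness and non-isomorphism explicitly — only make the same argument more self-contained.
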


A long-standing question of the authors is whether $\mathfrak{s}^{2}$ identifies all graphs. Now, two non-isomorphic graphs are presented that have the same $\mathfrak{s}^{2}$ fingerprints.

\begin{theorem}\label{thm:gi:2potatoCntExample}
  There exist two non-isomorphic graphs $G_1,G_2$ such that $\mathfrak{s}^{2}_{G_1}=\mathfrak{s}^{2}_{G_2}$.
\end{theorem}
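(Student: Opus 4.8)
The goal is to exhibit a concrete non-isomorphic pair $G_1,G_2$ with $\mathfrak{s}^2_{G_1}=\mathfrak{s}^2_{G_2}$. Since $\mathfrak{s}^2_{G}=\{\mathfrak{s}^2_G(v):v\in V\}^\#$ and $\mathfrak{s}^2_G(v)=\{\mathfrak{s}^2_G(v,v'):v'\in V\setminus\{v\}\}^\#$, proving $\mathfrak{s}^2$-equivalence amounts to producing, for every ordered pair of distinct nodes in $G_1$, a matching ordered pair in $G_2$ with permutation-equal $\mathfrak{s}^2$-matrices, and conversely. My plan is to build the pair from a highly symmetric gadget — the natural candidate being a construction layered on top of a strongly regular graph pair with the same parameters (e.g.\ the two $(16,6,2,2)$ graphs, the Shrikhande graph and the $4\times4$ rook's graph), since the previous theorem already shows $\mathfrak{s}^1$ fails to separate them, and the task is to arrange that even ``anchoring'' two nodes does not help. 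The cleanest route is to take $G_i$ to be a graph in which, after fixing any two nodes $(i_1,i_2)$, the induced ``distance-and-adjacency-to-the-anchors'' partition of the remaining vertices is governed entirely by the SRG parameters, so that the refinement process driven by \eqref{gi:perfAgr:req} stabilizes to the same cell structure in both graphs.

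**Key steps, in order.** First I would fix the explicit pair $G_1,G_2$ and verify non-isomorphism (for the rook's/Shrikhande pair this is classical — e.g.\ the rook's graph has $K_4$ subgraphs that the Shrikhande graph lacks, or one compares the local structure of neighborhoods). Second, I would set up the two-anchor coloring: for an ordered pair $(i_1,i_2)$, every other vertex $w$ gets an initial label recording $\delta$-type data $(d(w,i_1),d(w,i_2))$ and, because we are inside an SRG, the relevant quantities — how many common neighbors $w$ shares with $i_1$, with $i_2$, whether $i_1\sim i_2$ — are all determined by $(n,d,\lambda,\mu)$ and by the adjacency pattern of $w$ to $\{i_1,i_2\}$. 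The crucial combinatorial lemma to prove is that this partition is \emph{equitable} (each cell sees each other cell a fixed number of times) with cell sizes and inter-cell counts depending only on the SRG parameters and on whether $i_1\sim i_2$, \emph{not} on which SRG we picked. Third, I would show by induction on $l$ — exactly mimicking the proof of the $\mathfrak{s}^1$ SRG theorem above — that the $\mathfrak{s}^2$-labels are constant on each cell of this equitable partition and agree cell-by-cell between $G_1$ and $G_2$; this gives $\mathfrak{s}^2_{G_1}(i_1,i_2)\peq\mathfrak{s}^2_{G_2}(j_1,j_2)$ for the matched pair. Fourth, I would assemble these into $\mathfrak{s}^2_{G_1}(i_1)=\mathfrak{s}^2_{G_2}(j_1)$ by ranging over the second anchor (using vertex- and edge-transitivity of the SRGs so the multiset over choices of $i_2$ matches), and finally into $\mathfrak{s}^2_{G_1}=\mathfrak{s}^2_{G_2}$ by ranging over the first anchor.

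**Main obstacle.** The delicate point is not the inductive bookkeeping — that copies the earlier SRG argument almost verbatim — but establishing that fixing \emph{two} anchors really does leave an equitable partition whose parameters are SRG-intrinsic. With two anchors the vertex set splits into up to a dozen cells (adjacent/non-adjacent to $i_1$, crossed with adjacent/non-adjacent to $i_2$, crossed with the $i_1\sim i_2$ vs.\ $i_1\not\sim i_2$ cases, plus the anchors themselves), and one must check that the number of neighbors a vertex in cell $C$ has in cell $C'$ is a fixed function of $(n,d,\lambda,\mu)$ — counting, say, common neighbors of $i_1$ and $i_2$, and neighbors of a vertex adjacent to exactly one anchor among the common neighbors of both — using only $\lambda$ and $\mu$. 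For generic SRG parameters this can fail (the count of common neighbors of two adjacent-to-$i_1$ vertices need not be determined), so the real work is either to choose parameters for which it does hold, or — more robustly — to pass to a derived construction (a blow-up, a product with a fixed gadget, or a graph on ordered pairs) engineered so that the two-anchor refinement provably collapses. I expect most of the proof's length to be spent verifying this equitability by explicit parameter arithmetic for the chosen pair, after which the rest follows by the template already established in this section.
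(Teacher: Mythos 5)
Your plan hinges on a lemma that is false in the cases you name, and the paper's own computations rule out your candidate pairs. You propose to take two distinct SRGs with the same parameters (e.g.\ the $(16,6,2,2)$ pair) and argue that after anchoring \emph{two} vertices the refinement driven by the recursion stabilizes to an equitable partition whose intersection numbers depend only on $(n,d,\lambda,\mu)$. With one anchor this is exactly the SRG axiom, which is why the $\mathfrak{s}^1$ theorem goes through; with two anchors the needed counts (e.g.\ how many common neighbors of $w$ and $i_1$ lie in the neighborhood of $i_2$) are not SRG parameters, the cells keep splitting under iteration, and in fact $\mathfrak{s}^2$ \emph{does} separate such pairs: the experimental section of the paper reports that $\mathfrak{s}^2$ identifies all strongly regular graphs on at most 64 nodes, which includes the Shrikhande/rook's pair. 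So no pair of non-isomorphic equal-parameter SRGs in that range can serve as your counterexample, and your fallback remark (``pass to a blow-up, a product with a gadget, or a graph on ordered pairs'') is precisely the missing construction, not a step you have supplied. You correctly identify the obstacle in your last paragraph, but the proof stops exactly where the real content begins.

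The paper's route is different and essentially computational: take a single strongly regular graph $G$ with parameters $(35,18,9,9)$ and two carefully chosen vertices $u,v$ of $G$, form two disjoint copies $G',G''$, and glue them in two ways --- $G_1$ identifies $u'$ with $u''$ and $v'$ with $v''$, while $G_2$ identifies $u'$ with $v''$ and $v'$ with $u''$. One then verifies by computer that $\mathfrak{s}^2_{G_1}=\mathfrak{s}^2_{G_2}$ while $\mathfrak{s}^3_{G_1}\neq\mathfrak{s}^3_{G_2}$; the latter certifies non-isomorphism. The choice of $u,v$ matters (the paper notes that $\{\mathfrak{s}^2_G(i):i\in N_G(u)\}^\#\neq\{\mathfrak{s}^2_G(i):i\in N_G(v)\}^\#$ was imposed, and that even this is not sufficient in general), so there is no purely parameter-theoretic argument of the kind you envisage in the paper either. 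To repair your write-up you would need either to exhibit such a glued (or otherwise derived) pair and verify the $\mathfrak{s}^2$-equality, or to prove a genuine structural lemma replacing the false two-anchor equitability claim; as it stands the proposal does not establish the theorem.
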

\begin{proof}[Sketch of the proof]
  Let $G=(V,E)$ denote the strongly regular graph with parameters $(35,18,9,9)$ given in Appendix~\ref{app:gi:2potatoCntExample}, and let $u,v$ be the two nodes corresponding to the two highlighted columns. Let $G'=(V',E')$ and $G''=(E'',V'')$ be two disjoint copies of $G$, and let $u',v'\in V'$ denote the copies of $u,v\in V$ in $G'$ and $u'',v''\in V''$ denote the copies of $u,v\in V$ in $G''$, respectively. Consider the following construction. First, let $G_1$ denote the graph obtained by unifying 1) the two copies of $u$ and 2) the two copies of $v$. Second, let $G_2$ denote the graph obtained by unifying 1) nodes $u\in V'$ and $v\in V''$, and 2) unifying nodes $v\in V'$ and $u\in V''$. Figure~\ref{fig:gi:2potatoCntExample} illustrates the construction. We verify by computer that $\mathfrak{s}^{2}_{G_1}=\mathfrak{s}^{2}_{G_2}$ and $\mathfrak{s}^{3}_{G_1}\neq\mathfrak{s}^{3}_{G_2}$. The latter implies that the two graphs are not isomorphic, hence $\mathfrak{s}^{2}$ does not distinguish graphs $G_1$ and $G_2$.
\end{proof}

\begin{remark}
Let $G=(V,E)$ be the graph given in Appendix~\ref{app:gi:2potatoCntExample}. The two highlighted nodes of $G$ given by Appendix~\ref{app:gi:2potatoCntExample}, $u$ and $v$, were selected such that
\[\{\mathfrak{s}^{2}_{G}(i):i\in N_G(u)\}^\#\neq \{\mathfrak{s}^{2}_{G}(i):i\in N_G(v)\}^\#\]
holds.
However, this condition is not sufficient to provide a counterexample.
\end{remark}

\def\ra{2}
\def\d{1.7}
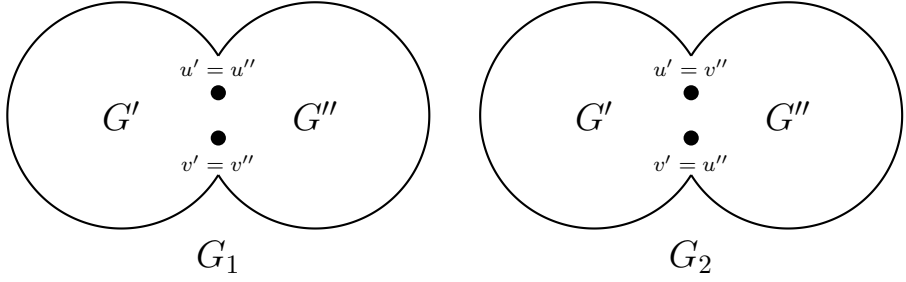
\begin{figure}
  \centering
  \begin{subfigure}[t]{.49\textwidth}
    \centering
    \begin{tikzpicture}[
       scale=.75,
       thick,
       acteur/.style={
         circle,
         fill=black,
         thick,
         inner sep=2pt,
         minimum size=0.2cm
       }
     ]
    \draw [samples=1000,domain={acos(\d/2)}:{360-acos(\d/2)}] plot ({2*(-\d+cos(\x))}, {2*sin(\x)});
    \draw [samples=1000,domain={acos(\d/2)}:{360-acos(\d/2)}] plot ({-2*cos(\x)}, {2*sin(\x)});
       \node (n1) at (-\d,0.4) [acteur,label={$u'=u''$}]{};
       \node (n2) at (-\d,-0.4) [acteur,label=below:{$v'=v''$}]{};
    \node [font=\bfseries] at (0,0) {\Large $G''$};
    \node [font=\bfseries] at (-2*\d,0) {\Large $G'$};
    \node [font=\bfseries] at (-\d,-2.5) {\Large $G_1$};
    \end{tikzpicture}
  \end{subfigure}
  \hfill
  \begin{subfigure}[t]{.49\textwidth}
      \centering
      \begin{tikzpicture}[
        scale=.75,
       thick,
       acteur/.style={
         circle,
         fill=black,
         thick,
         inner sep=2pt,
         minimum size=0.2cm
       }
     ]
    \draw [samples=1000,domain={acos(\d/2)}:{360-acos(\d/2)}] plot ({2*(-\d+cos(\x))}, {2*sin(\x)});
    \draw [samples=1000,domain={acos(\d/2)}:{360-acos(\d/2)}] plot ({-2*cos(\x)}, {2*sin(\x)});
       \node (n1) at (-\d,0.4) [acteur,label={$u'=v''$}]{};
       \node (n2) at (-\d,-0.4) [acteur,label=below:{$v'=u''$}]{};
    \node [font=\bfseries] at (0,0) {\Large $G''$};
    \node [font=\bfseries] at (-2*\d,0) {\Large $G'$};
    \node [font=\bfseries] at (-\d,-2.5) {\Large $G_2$};
\end{tikzpicture}
  \end{subfigure}
  \caption{Illustration of the construction described in the proof of Theorem~\ref{thm:gi:2potatoCntExample}.}\label{fig:gi:2potatoCntExample}
\end{figure}

\subsection{An Alternative Way of Aggregation}
We conclude the paper with a natural alternative to Notation~\ref{nota:gi:potato}, which is compared to $\mathfrak{s}^k$.

\begin{notation}\label{nota:gi:potatoB}
For $q=k,\dots,0$ and $i\in V$, let
\begin{equation*}
\mathfrak{t}^k_G(i_1,\dots,i_{q};i):=
\begin{cases}
    \mathfrak{s}^k_G(i_1,\dots,i_{k})_i                                                & \text{if } q=k,\\
    \{\mathfrak{t}^k_G(i_1,\dots,i_{q},j;i) : j\in V\setminus\{i,i_1,\dots,i_{q}\}\}^\#& \text{otherwise,}
\end{cases}
\end{equation*}
\end{notation}
and let
\begin{equation}
    \mathfrak{t}^k_G:=\{\mathfrak{t}^k_G(;i) : i\in V\}^\#.
\end{equation}

Based on the following theorem, the two ways of aggregation, $\mathfrak{s}^{k}_{G}$ and $\mathfrak{t}^{k}_{G}$ given by Notations~\ref{nota:gi:potato}~and~\ref{nota:gi:potatoB}, are strongly related.
\begin{theorem}\label{thm:gi:twoHashs}
  If $\mathfrak{s}^{k+1}_{G_1}=\mathfrak{s}^{k+1}_{G_2}$, then $\mathfrak{t}^k_{G_1}=\mathfrak{t}^k_{G_2}$. Similarly, if $\mathfrak{t}^{k+1}_{G_1}=\mathfrak{t}^{k+1}_{G_2}$, then $\mathfrak{s}^k_{G_1}=\mathfrak{s}^k_{G_2}$.
\end{theorem}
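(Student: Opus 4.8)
The plan is to prove both implications in the same way: exhibit the target fingerprint as a fixed, graph‑independent function of the source one, so that equal sources force equal targets. Everything rests on the observation that the recursion defining $\mathfrak{s}^m_G(i_1,\dots,i_m)$ is exactly $1$‑dimensional color refinement started from the initial coloring ``$i_q\mapsto q$, everything else $\mapsto 0$'', with the multiset‑valued labels playing the role of colors and column $l$ being the color after $l$ rounds. I would isolate three elementary facts. \emph{(Forgetting.)} If an initial coloring $c'$ refines another one via a fixed surjection, $c=\varphi\circ c'$, then after $l$ rounds the fine color of every node determines its coarse color by a fixed function $F_l$ built from $\varphi$; hence for every node $v$ one has $\mathfrak{s}^{m-1}_G(i_1,\dots,i_{m-1})_v=F\!\big(\mathfrak{s}^{m}_G(i_1,\dots,i_{m})_v\big)$, where $F$ applies $F_l$ to column $l$ and $\varphi$ collapses the extra color $m$ to $0$. \emph{(Reordering.)} Permuting the marked nodes merely relabels the initial coloring bijectively, so $\mathfrak{s}^m_G(i_{\rho(1)},\dots,i_{\rho(m)})$ is the image of $\mathfrak{s}^m_G(i_1,\dots,i_m)$ under a fixed bijection $\Psi_\rho$ acting column‑wise on each row; moreover, from a matrix known only up to row permutation one can single out the row of any marked node, being the unique row with the appropriate initial color. \emph{(Adjacency readout.)} The length‑$\le 1$ part of $\mathfrak{s}^m_G(i_1,\dots,i_m)_v$ records the multiset of initial colors of the neighbours of $v$, and hence tells us, for each $t\le m$, whether $i_t\in N_G(v)$.

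For the first implication I would read the target $i$ of $\mathfrak{t}^k$ as the \emph{first} marked node of $\mathfrak{s}^{k+1}$. Then Reordering followed by Forgetting give
\[
  \mathfrak{t}^k_G(i_1,\dots,i_k;i)=\mathfrak{s}^k_G(i_1,\dots,i_k)_i=\Phi\!\big(\mathfrak{s}^{k+1}_G(i,i_1,\dots,i_k)\big),
\]
where $\Phi$ is the fixed function ``locate the row with initial color $1$, move that mark to the end via $\Psi$, forget it via $F$''. Since $\mathfrak{t}^k_G(i_1,\dots,i_q;i)$ and $\mathfrak{s}^{k+1}_G(i,i_1,\dots,i_q)$ are formed by multiset‑nesting over literally the same index sets $i_{q+1}\in V\setminus\{i,i_1,\dots,i_q\}$, a downward induction on $q=k,\dots,0$ shows that $\mathfrak{t}^k_G$ is obtained from $\mathfrak{s}^{k+1}_G$ simply by replacing every bottom‑level matrix $\mathfrak{s}^{k+1}_G(i,i_1,\dots,i_k)$ by its $\Phi$‑image. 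Hence $\mathfrak{s}^{k+1}_{G_1}=\mathfrak{s}^{k+1}_{G_2}$ forces $\mathfrak{t}^k_{G_1}=\mathfrak{t}^k_{G_2}$.

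For the second implication I would show $\mathfrak{s}^k_G$ is determined by $\mathfrak{t}^{k+1}_G$. The heart of it is to reconstruct, for every ordered $k$‑tuple $(i_1,\dots,i_k)$ of distinct nodes, the matrix $\mathfrak{s}^k_G(i_1,\dots,i_k)$ up to row permutation (that is, the multiset of its rows): for each $v\notin\{i_1,\dots,i_k\}$ and any auxiliary node $w\notin\{i_1,\dots,i_k,v\}$, Forgetting gives $\mathfrak{s}^k_G(i_1,\dots,i_k)_v=F\!\big(\mathfrak{t}^{k+1}_G(i_1,\dots,i_k,w;v)\big)$ (independently of $w$); Adjacency readout on the same entries recovers every edge $vi_t$, and applied to $\mathfrak{t}^{k+1}_G\big(i_1,\dots,\widehat{i_p},\dots,i_k,w_1,w_2;i_p\big)$ for fresh $w_1,w_2$ it recovers every edge $i_pi_q$; finally the rows at the marked nodes $i_q$ are obtained by running the defining recursion, because at each level the neighbours of $i_q$ split into unmarked ones (rows already known) and marked ones $i_p$ (rows known from the previous level). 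Every step here is a fixed, graph‑independent operation.

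The genuine difficulty — and where I expect most of the care to be needed — is organisational rather than computational: $\mathfrak{t}^{k+1}_G$ is nested with the target node outermost, whereas $\mathfrak{s}^k_G$ presents its row data innermost, and the levels of a nested multiset cannot in general be permuted. I would handle this by not flattening at all: unpack the hypothesis $\mathfrak{t}^{k+1}_{G_1}=\mathfrak{t}^{k+1}_{G_2}$ recursively into a coherent family of bijections matching the index sets of the two fingerprints, and push the reconstruction of the previous paragraph through that family; equivalently, prove the contrapositive by induction on nesting depth, turning the first level at which the $\mathfrak{s}^k$‑multisets of $G_1$ and $G_2$ disagree into a level at which their $\mathfrak{t}^{k+1}$‑multisets disagree, the reconstruction serving to translate a distinguishing $k$‑tuple into a distinguishing $(k+1)$‑tuple together with a target node. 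The boundary cases $n\le k+1$, where there may be no room for the auxiliary node $w$, are disposed of separately and trivially, since there $\mathfrak{s}^k_G$ already identifies $G$ outright.
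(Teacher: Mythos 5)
Your first implication is correct and is essentially the paper's own argument: the paper proves exactly the downward induction on $q$ over the identical index sets $V\setminus\{i,i_1,\dots,i_q\}$, with the base case $q=k$ resting on the observation you package into $\Phi$ — the row of the extra marked node is identifiable inside $\mathfrak{s}^{k+1}_{G}(i,i_1,\dots,i_k)$ by its initial label, and forgetting a mark is a fixed, graph-independent coarsening of the labels.

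The second implication is where the proposal has a genuine gap, and it is precisely the point you flag in your last paragraph without resolving it. Your reconstruction of $\mathfrak{s}^k_G(i_1,\dots,i_k)$ from the values $\mathfrak{t}^{k+1}_G(i_1,\dots,i_k,w;v)$ is a valid identity inside a single graph, where all entries are indexed by actual nodes; but the hypothesis $\mathfrak{t}^{k+1}_{G_1}=\mathfrak{t}^{k+1}_{G_2}$ only matches anonymous nested multisets whose outermost level is the target and whose inner levels are the marks, whereas the reconstruction needs the opposite grouping: the rows of all targets $v$ for one and the same mark tuple. Unpacking the hypothesis level by level yields, for each matched pair of targets, bijections between mark tuples that depend on the chosen target pair and need not cohere into any single matching of mark tuples valid simultaneously for all targets, which is what pushing the reconstruction across the two graphs would require. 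Moreover, unlike in the first implication, no pointwise bridge exists to repair this: from $\mathfrak{t}^{k+1}_{G_1}(i_1,\dots,i_q;i)=\mathfrak{t}^{k+1}_{G_2}(j_1,\dots,j_q;j)$ alone no $\mathfrak{s}^k$-type conclusion can follow, since the target is never among its own marks — for instance, if $i$ and $j$ are isolated vertices in graphs of the same order, these $\mathfrak{t}$-values coincide for every choice of the tuples while the two graphs, and the corresponding $\mathfrak{s}^k$-data, can be arbitrary. So the proof must recombine information across different targets in an essential way, and the "coherent family of bijections''/contrapositive plan you sketch is the actual content of the second half, not a routine verification; as written, that half of the theorem is not established. (For what it is worth, the paper proves only the first implication in detail and dismisses the second as "similar''; the asymmetry just described shows it is not a mirror image, so you cannot lean on that either.) Finally, your boundary remark is off in the same direction: when $n\le k+1$ the $\mathfrak{t}^{k+1}$-hypothesis degenerates into nested empty multisets carrying essentially only $n$, so the fact that $\mathfrak{s}^k$ is then very strong makes the implication harder, not trivial — that range has to be excluded, not waved away.
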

\begin{proof}
  We deal with the first part only, as the second one can be proved in a similar way. Let $i\in V_1$, $j\in V_2$ be such that $\mathfrak{s}^{k+1}_{G_1}(i)=\mathfrak{s}^{k+1}_{G_2}(j)$. First we show that
  \begin{equation}\label{eq:gi:subStm}
    \mathfrak{s}^{k+1}_{G_1}(i,i_1,\dots,i_q)=\mathfrak{s}^{k+1}_{G_2}(j,j_1,\dots,j_q) \Longrightarrow \mathfrak{t}^k_{G_1}(i_1,\dots,i_q;i)=\mathfrak{t}^k_{G_2}(j_1,\dots,j_q;j)
  \end{equation}
  holds for all $i_1,\dots,i_q\in V_1$ and $j_1,\dots,j_q\in V_2$ and for all $q=k,\dots,0$. The proof is by induction on $q=k,\dots,0$. In the base case, when $q=k$, if $\mathfrak{t}^k_{G_1}(i_1,\dots,i_q;i)\neq\mathfrak{t}^k_{G_2}(j_1,\dots,j_q;j)$, then $\mathfrak{s}^{k+1}_{G_1}(i,i_1,\dots,i_q)\neq\mathfrak{s}^{k+1}_{G_2}(j,j_1,\dots,j_q)$, since the labels of the nodes in the latter case refine the former ones. (Note that this is not necessarily the case if we considered $\mathfrak{s}^{k}$ instead of $\mathfrak{s}^{k+1}$!)
  For $q<k$, assume that the statement holds for all larger values of $q$ by the induction hypothesis. By definition, $\mathfrak{s}^{k+1}_{G_1}(i,i_1,\dots,i_q)=\mathfrak{s}^{k+1}_{G_2}(j,j_1,\dots,j_q)$ means that \begin{multline}\label{eq:gi:hashMeaning1}
    \{\mathfrak{s}^{k+1}_{G_1}(i,i_1,\dots,i_{q},i') : i'\in V_1\setminus\{i,i_1,\dots,i_{q}\}\}^\#\\
    =\{\mathfrak{s}^k_{G_2}(j,j_1,\dots,j_{q},j') : j'\in V\setminus\{j,j_1,\dots,j_{q}\}\}^\#,
  \end{multline}
  whereas $\mathfrak{t}^k_{G_1}(i_1,\dots,i_q;i)=\mathfrak{t}^k_{G_2}(j_1,\dots,j_q;j)$ means that
  \begin{multline}\label{eq:gi:hashMeaning2}
    \{\mathfrak{t}^k_{G_1}(i_1,\dots,i_{q},i';i) : i'\in V_1\setminus\{i,i_1,\dots,i_{q}\}\}^\#\\
    =\{\mathfrak{t}^k_{G_2}(j_1,\dots,j_{q},j';j) : j'\in V\setminus\{j,j_1,\dots,j_{q}\}\}^\#.
  \end{multline}
  By the induction hypothesis, (\ref{eq:gi:hashMeaning1}) implies (\ref{eq:gi:hashMeaning2}), which proves (\ref{eq:gi:subStm}).

  To complete the proof of the theorem, observe that for $q=0$, (\ref{eq:gi:subStm}) means that $\mathfrak{t}^k_{G_1}(;i)=\mathfrak{t}^k_{G_2}(;j)$ holds if $\mathfrak{s}^{k+1}_{G_1}(i)=\mathfrak{s}^{k+1}_{G_2}(j)$, which --- given that $\mathfrak{s}^{k+1}_{G_1}=\mathfrak{s}^{k+1}_{G_2}$ --- implies $\mathfrak{t}^k_{G_1}=\mathfrak{t}^k_{G_2}$.
\end{proof}

\begin{appendices}
\section{Graph for the proof of Theorem~\ref{thm:gi:2potatoCntExample}}\label{app:gi:2potatoCntExample}
\renewcommand{\arraystretch}{1.2}
\setlength{\arrayrulewidth}{0pt}
\begin{table}[H]
\centering
\scalebox{0.5}[.65]{
\begin{tabular}{ b | b | b | b | b | b | b | b | b | b | b | b | b | b | b | b | b | b | b | b | b | b | b | b | a | a | b | b | b | b | b | b | b | b | b | b | b |}

\mc{1}{} & \mc{1}{} & \mc{1}{} & \mc{1}{} & \mc{1}{} & \mc{1}{} & \mc{1}{} & \mc{1}{} & \mc{1}{} & \mc{1}{} & \mc{1}{} & \mc{1}{} & \mc{1}{} & \mc{1}{} & \mc{1}{} & \mc{1}{} & \mc{1}{} & \mc{1}{} & \mc{1}{} & \mc{1}{} & \mc{1}{} & \mc{1}{} & \mc{1}{} & \mc{1}{} & \mc{1}{$u$} & \mc{1}{$v$} & \mc{1}{} & \mc{1}{} & \mc{1}{} & \mc{1}{} & \mc{1}{} & \mc{1}{} & \mc{1}{} & \mc{1}{} & \mc{1}{} \\

\cline{2-36}
 & 0 & 1 & 1 & 1 & 1 & 1 & 1 & 1 & 1 & 1 & 1 & 1 & 1 & 1 & 1 & 1 & 1 & 1 & 1 & 0 & 0 & 0 & 0 & 0 & 0 & 0 & 0 & 0 & 0 & 0 & 0 & 0 & 0 & 0 & 0 \\\cline{2-36}
 & 1 & 0 & 1 & 1 & 1 & 1 & 1 & 1 & 1 & 1 & 1 & 0 & 0 & 0 & 0 & 0 & 0 & 0 & 0 & 1 & 1 & 1 & 1 & 1 & 1 & 1 & 1 & 0 & 0 & 0 & 0 & 0 & 0 & 0 & 0 \\\cline{2-36}
 & 1 & 1 & 0 & 1 & 1 & 1 & 1 & 1 & 1 & 1 & 1 & 0 & 0 & 0 & 0 & 0 & 0 & 0 & 0 & 0 & 0 & 0 & 0 & 0 & 0 & 0 & 0 & 1 & 1 & 1 & 1 & 1 & 1 & 1 & 1 \\\cline{2-36}
 & 1 & 1 & 1 & 0 & 1 & 1 & 1 & 0 & 0 & 0 & 0 & 1 & 1 & 1 & 1 & 0 & 0 & 0 & 0 & 1 & 1 & 1 & 1 & 0 & 0 & 0 & 0 & 1 & 1 & 1 & 1 & 0 & 0 & 0 & 0 \\\cline{2-36}
 & 1 & 1 & 1 & 1 & 0 & 0 & 0 & 1 & 1 & 0 & 0 & 1 & 1 & 1 & 0 & 1 & 0 & 0 & 0 & 1 & 1 & 0 & 0 & 1 & 1 & 0 & 0 & 1 & 0 & 0 & 0 & 1 & 1 & 1 & 0 \\\cline{2-36}
 & 1 & 1 & 1 & 1 & 0 & 0 & 0 & 1 & 0 & 1 & 0 & 1 & 1 & 0 & 0 & 1 & 1 & 0 & 0 & 0 & 0 & 1 & 1 & 0 & 0 & 1 & 1 & 0 & 1 & 1 & 0 & 1 & 0 & 0 & 1 \\\cline{2-36}
 & 1 & 1 & 1 & 1 & 0 & 0 & 0 & 0 & 1 & 1 & 0 & 0 & 0 & 0 & 1 & 0 & 1 & 1 & 1 & 1 & 1 & 0 & 0 & 1 & 1 & 0 & 0 & 0 & 1 & 1 & 1 & 0 & 0 & 0 & 1 \\\cline{2-36}
 & 1 & 1 & 1 & 0 & 1 & 1 & 0 & 0 & 0 & 0 & 1 & 1 & 1 & 0 & 1 & 0 & 1 & 0 & 0 & 0 & 0 & 0 & 0 & 1 & 1 & 1 & 1 & 0 & 0 & 0 & 1 & 0 & 1 & 1 & 1 \\\cline{2-36}
 & 1 & 1 & 1 & 0 & 1 & 0 & 1 & 0 & 0 & 0 & 1 & 0 & 0 & 0 & 0 & 1 & 1 & 1 & 1 & 1 & 1 & 1 & 1 & 0 & 0 & 0 & 0 & 0 & 0 & 0 & 0 & 1 & 1 & 1 & 1 \\\cline{2-36}
 & 1 & 1 & 1 & 0 & 0 & 1 & 1 & 0 & 0 & 0 & 1 & 0 & 0 & 1 & 0 & 1 & 0 & 1 & 1 & 0 & 0 & 0 & 0 & 1 & 1 & 1 & 1 & 1 & 1 & 1 & 0 & 1 & 0 & 0 & 0 \\\cline{2-36}
 & 1 & 1 & 1 & 0 & 0 & 0 & 0 & 1 & 1 & 1 & 0 & 0 & 0 & 1 & 1 & 0 & 0 & 1 & 1 & 0 & 0 & 1 & 1 & 0 & 0 & 1 & 1 & 1 & 0 & 0 & 1 & 0 & 1 & 1 & 0 \\\cline{2-36}
 & 1 & 0 & 0 & 1 & 1 & 1 & 0 & 1 & 0 & 0 & 0 & 0 & 0 & 1 & 1 & 1 & 1 & 1 & 0 & 1 & 0 & 1 & 0 & 1 & 0 & 1 & 0 & 1 & 1 & 0 & 0 & 0 & 1 & 0 & 1 \\\cline{2-36}
 & 1 & 0 & 0 & 1 & 1 & 1 & 0 & 1 & 0 & 0 & 0 & 0 & 0 & 1 & 1 & 1 & 1 & 0 & 1 & 0 & 1 & 0 & 1 & 0 & 1 & 0 & 1 & 0 & 0 & 1 & 1 & 1 & 0 & 1 & 0 \\\cline{2-36}
 & 1 & 0 & 0 & 1 & 1 & 0 & 0 & 0 & 0 & 1 & 1 & 1 & 1 & 0 & 1 & 0 & 0 & 1 & 1 & 1 & 0 & 1 & 0 & 0 & 1 & 0 & 1 & 1 & 0 & 1 & 0 & 1 & 1 & 0 & 0 \\\cline{2-36}
 & 1 & 0 & 0 & 1 & 0 & 0 & 1 & 1 & 0 & 0 & 1 & 1 & 1 & 1 & 0 & 0 & 0 & 1 & 1 & 1 & 0 & 0 & 1 & 1 & 0 & 0 & 1 & 0 & 1 & 0 & 1 & 0 & 0 & 1 & 1 \\\cline{2-36}
 & 1 & 0 & 0 & 0 & 1 & 1 & 0 & 0 & 1 & 1 & 0 & 1 & 1 & 0 & 0 & 0 & 1 & 1 & 1 & 0 & 1 & 0 & 1 & 1 & 0 & 1 & 0 & 1 & 1 & 0 & 0 & 1 & 0 & 1 & 0 \\\cline{2-36}
 & 1 & 0 & 0 & 0 & 0 & 1 & 1 & 1 & 1 & 0 & 0 & 1 & 1 & 0 & 0 & 1 & 0 & 1 & 1 & 0 & 1 & 1 & 0 & 0 & 1 & 1 & 0 & 0 & 0 & 1 & 1 & 0 & 1 & 0 & 1 \\\cline{2-36}
 & 1 & 0 & 0 & 0 & 0 & 0 & 1 & 0 & 1 & 1 & 1 & 1 & 0 & 1 & 1 & 1 & 1 & 0 & 0 & 1 & 0 & 0 & 1 & 0 & 1 & 1 & 0 & 0 & 1 & 0 & 1 & 1 & 1 & 0 & 0 \\\cline{2-36}
 & 1 & 0 & 0 & 0 & 0 & 0 & 1 & 0 & 1 & 1 & 1 & 0 & 1 & 1 & 1 & 1 & 1 & 0 & 0 & 0 & 1 & 1 & 0 & 1 & 0 & 0 & 1 & 1 & 0 & 1 & 0 & 0 & 0 & 1 & 1 \\\cline{2-36}
 & 0 & 1 & 0 & 1 & 1 & 0 & 1 & 0 & 1 & 0 & 0 & 1 & 0 & 1 & 1 & 0 & 0 & 1 & 0 & 0 & 0 & 1 & 1 & 1 & 1 & 1 & 0 & 0 & 0 & 1 & 0 & 1 & 0 & 1 & 1 \\\cline{2-36}
 & 0 & 1 & 0 & 1 & 1 & 0 & 1 & 0 & 1 & 0 & 0 & 0 & 1 & 0 & 0 & 1 & 1 & 0 & 1 & 0 & 0 & 1 & 1 & 1 & 1 & 0 & 1 & 1 & 1 & 0 & 1 & 0 & 1 & 0 & 0 \\\cline{2-36}
 & 0 & 1 & 0 & 1 & 0 & 1 & 0 & 0 & 1 & 0 & 1 & 1 & 0 & 1 & 0 & 0 & 1 & 0 & 1 & 1 & 1 & 0 & 1 & 0 & 0 & 1 & 1 & 1 & 0 & 1 & 0 & 0 & 1 & 0 & 1 \\\cline{2-36}
 & 0 & 1 & 0 & 1 & 0 & 1 & 0 & 0 & 1 & 0 & 1 & 0 & 1 & 0 & 1 & 1 & 0 & 1 & 0 & 1 & 1 & 1 & 0 & 0 & 0 & 1 & 1 & 0 & 1 & 0 & 1 & 1 & 0 & 1 & 0 \\\cline{2-36}
\rowcolor{Gray}
\cellcolor{white}$u$ & 0 & 1 & 0 & 0 & 1 & 0 & 1 & 1 & 0 & 1 & 0 & 1 & 0 & 0 & 1 & 1 & 0 & 0 & 1 & 1 & 1 & 0 & 0 & 0 & 1 & 1 & 1 & 1 & 1 & 0 & 0 & 0 & 0 & 1 & 1 \\\cline{2-36}
\rowcolor{Gray}
\cellcolor{white}$v$ & 0 & 1 & 0 & 0 & 1 & 0 & 1 & 1 & 0 & 1 & 0 & 0 & 1 & 1 & 0 & 0 & 1 & 1 & 0 & 1 & 1 & 0 & 0 & 1 & 0 & 1 & 1 & 0 & 0 & 1 & 1 & 1 & 1 & 0 & 0 \\\cline{2-36}
 & 0 & 1 & 0 & 0 & 0 & 1 & 0 & 1 & 0 & 1 & 1 & 1 & 0 & 0 & 0 & 1 & 1 & 1 & 0 & 1 & 0 & 1 & 1 & 1 & 1 & 0 & 0 & 1 & 0 & 1 & 1 & 0 & 0 & 1 & 0 \\\cline{2-36}
 & 0 & 1 & 0 & 0 & 0 & 1 & 0 & 1 & 0 & 1 & 1 & 0 & 1 & 1 & 1 & 0 & 0 & 0 & 1 & 0 & 1 & 1 & 1 & 1 & 1 & 0 & 0 & 0 & 1 & 0 & 0 & 1 & 1 & 0 & 1 \\\cline{2-36}
 & 0 & 0 & 1 & 1 & 1 & 0 & 0 & 0 & 0 & 1 & 1 & 1 & 0 & 1 & 0 & 1 & 0 & 0 & 1 & 0 & 1 & 1 & 0 & 1 & 0 & 1 & 0 & 0 & 1 & 1 & 1 & 0 & 1 & 1 & 0 \\\cline{2-36}
 & 0 & 0 & 1 & 1 & 0 & 1 & 1 & 0 & 0 & 1 & 0 & 1 & 0 & 0 & 1 & 1 & 0 & 1 & 0 & 0 & 1 & 0 & 1 & 1 & 0 & 0 & 1 & 1 & 0 & 0 & 1 & 1 & 1 & 0 & 1 \\\cline{2-36}
 & 0 & 0 & 1 & 1 & 0 & 1 & 1 & 0 & 0 & 1 & 0 & 0 & 1 & 1 & 0 & 0 & 1 & 0 & 1 & 1 & 0 & 1 & 0 & 0 & 1 & 1 & 0 & 1 & 0 & 0 & 1 & 1 & 0 & 1 & 1 \\\cline{2-36}
 & 0 & 0 & 1 & 1 & 0 & 0 & 1 & 1 & 0 & 0 & 1 & 0 & 1 & 0 & 1 & 0 & 1 & 1 & 0 & 0 & 1 & 0 & 1 & 0 & 1 & 1 & 0 & 1 & 1 & 1 & 0 & 0 & 1 & 1 & 0 \\\cline{2-36}
 & 0 & 0 & 1 & 0 & 1 & 1 & 0 & 0 & 1 & 1 & 0 & 0 & 1 & 1 & 0 & 1 & 0 & 1 & 0 & 1 & 0 & 0 & 1 & 0 & 1 & 0 & 1 & 0 & 1 & 1 & 0 & 0 & 1 & 1 & 1 \\\cline{2-36}
 & 0 & 0 & 1 & 0 & 1 & 0 & 0 & 1 & 1 & 0 & 1 & 1 & 0 & 1 & 0 & 0 & 1 & 1 & 0 & 0 & 1 & 1 & 0 & 0 & 1 & 0 & 1 & 1 & 1 & 0 & 1 & 1 & 0 & 0 & 1 \\\cline{2-36}
 & 0 & 0 & 1 & 0 & 1 & 0 & 0 & 1 & 1 & 0 & 1 & 0 & 1 & 0 & 1 & 1 & 0 & 0 & 1 & 1 & 0 & 0 & 1 & 1 & 0 & 1 & 0 & 1 & 0 & 1 & 1 & 1 & 0 & 0 & 1 \\\cline{2-36}
 & 0 & 0 & 1 & 0 & 0 & 1 & 1 & 1 & 1 & 0 & 0 & 1 & 0 & 0 & 1 & 0 & 1 & 0 & 1 & 1 & 0 & 1 & 0 & 1 & 0 & 0 & 1 & 0 & 1 & 1 & 0 & 1 & 1 & 1 & 0 \\\cline{2-36}

\end{tabular}
}
\caption{ The adjacency matrix of the graph for the proof of Theorem~\ref{thm:gi:2potatoCntExample}.}\label{table:gi:2potatoCntExample}
\end{table}
\end{appendices}

\bibliography{bibliography}
\bibliographystyle{unsrt}
\end{document}